\newtcolorbox{mainbox}[1]{
  colframe=black!40, 
  base={#1}
}
\newtcolorbox{subbox}[1]{
  colframe=black!30!white,
  base={#1}
}
\newcommand{\defeq}{\vcentcolon=}
\DeclareMathOperator{\avg}{avg}
\DeclarePairedDelimiter{\multiset}{\lbrace \!\! \lbrace}{\rbrace \!\!\rbrace}
\DeclarePairedDelimiter{\size}{\lvert}{\rvert}
\DeclarePairedDelimiter{\set}{\lbrace}{\rbrace}
\DeclareMathOperator{\Shapley}{\ensuremath{\mathrm{Shapley}}}
\newcommand{\mult}[2]{\ensuremath{m[#1\!#2]}}
\def\finbags{\mathcal{B}_{\mathsf{fin}}}
\def\e#1{\emph{#1}}
\def\set#1{\mathord{\{#1\}}}
\def\naturals{\mathbb{N}}
\def\reals{\mathbb{R}}
\def\ra{\rightarrow}
\def\la{\leftarrow}
\def\fpsharpp{${\mathrm{FP}^{\mathrm{\#P}}}$\xspace}
\def\consts{\mathsf{Const}}
\def\scs{\mathcal{S}}
\def\arity{\mathit{ar}}
\def\aggf#1{\mathord{\mathsf{#1}}}
\def\aggsum{\aggf{Sum}}
\def\aggcount{\aggf{Count}}
\def\aggavg{\aggf{Avg}}
\def\aggmin{\aggf{Min}}
\def\aggmax{\aggf{Max}}
\def\aggmed{\aggf{Med}}
\def\aggcdis{\aggf{CDist}}
\def\aggquantile{\aggf{Qnt}}
\def\aggduplicates{\aggf{Dup}}
\def\aggnoduplicates{\aggf{NoDup}}
\def\hom{\mathit{Hom}}
\def\nusetcover{\mathord\nu_{\mathsf{sc}}}
\def\Qbool{Q_{\mathsf{bool}}}
\def\vars{\mathord{\mathit{vars}}}
\def\freevars{\mathord{\mathit{vars}_{\mathsf{F}}}}
\def\exvars{\mathord{\mathit{vars}_{\exists}}}
\def\atoms{\mathrm{atoms}}
\def\cupdot{\mathbin{\dot{\cup}}}
\def\exhierarchical{$\exists$-hierarchical\xspace}
\def\allhierarchical{all-hierarchical\xspace}
\def\qhierarchical{q-hierarchical\xspace}
\def\sqhierarchical{sq-hierarchical\xspace}
\def\exo{^{\mathsf{x}}}
\def\endo{^{\mathsf{n}}}
\def\X{\mathcal{X}}
\def\C{\mathcal{C}}
\def\Y{\mathcal{Y}}
\def\P{\mathcal{P}}
\def\Qxyy{Q_{xyy}}
\def\Qxyyz{Q_{xyyz}}
\def\QFxyy{Q_{xyy}^{\mathrm{full}}}
\def\tauid{\tau_{\mathsf{id}}}
\def\tauidi#1{\tau^{#1}_{\mathsf{id}}}
\def\taugb#1{\tau_{>{#1}}}
\def\taugbi#1#2{\tau^{#2}_{>#1}}
\def\taurelui#1{\tau^{#1}_{\mathsf{ReLU}}}
\def\taurelu{\tau_{\mathsf{ReLU}}}
\def\gammamon{\gamma_{\mathsf{mon}}}
\def\mycomment#1{{\color{gray}\Comment{#1}}}
\def\sumword{\mathsf{sum}}
\def\combineU{\mathord{\mathsf{combine}_\cup}}
\def\combineX{\mathord{\mathsf{combine}_\times}}
\newenvironment{proofsketch}{%
  \proof}{\endproof}
\def\AggCQ{AggCQ\xspace}
\def\AggCQs{AggCQs\xspace}
\def\rel#1{\text{\textsc{#1}}}
\def\att#1{\mathsf{#1}}
\def\qedexample{\hfill$\Diamond$}
\title{Tractability Frontiers of the Shapley Value for Aggregate Conjunctive Queries}
\theoremstyle{acmdefinition}
\newtheorem{remark}[theorem]{Remark}
\newtheorem{observation}[theorem]{Observation}}
\begin{document}

\setcopyright{acmcopyright}
\acmJournal{PACMMOD}
\acmYear{2025} \acmVolume{3} \acmNumber{5 (PODS)}
\acmArticle{284} \acmMonth{11} \acmPrice{}
\acmDOI{10.1145/XXXXXXX}

\acmArticle{0}

\author{Christoph Standke}
\orcid{0000-0002-3034-730X}
\authornotemark[1]
\email{bennyk@cs.technion.ac.il}
\affiliation{%
  \institution{RWTH Aachen University}
  \city{Aachen}
  \country{Germany}
}

\author{Benny Kimelfeld}
\orcid{0000-0002-7156-1572}
\authornotemark[1]
\email{bennyk@cs.technion.ac.il}
\affiliation{%
  \institution{Technion}
  \city{Haifa}
  \country{Israel}
}
\affiliation{%
  \institution{RelationalAI}
  \country{USA}
}

\begin{CCSXML}
<ccs2012>
<concept>
<concept_id>10002951.10002952.10002953.10010820.10003623</concept_id>
<concept_desc>Information systems~Data provenance</concept_desc>
<concept_significance>500</concept_significance>
</concept>
</ccs2012>
\end{CCSXML}

\ccsdesc[500]{Information systems~Data provenance}

\keywords{Shapley value, aggregate queries}

\begin{abstract}
In recent years, the Shapley value has emerged as a general game-theoretic measure for assessing the contribution of a tuple to the result of a database query.
We study the complexity of calculating the Shapley value of a tuple for an aggregate conjunctive query, which applies an aggregation function to the result of a conjunctive query (CQ)  based on a value function that assigns a number to each query answer. Prior work by Livshits, Bertossi, Kimelfeld, and Sebag (2020) established that this task is \#P-hard for every nontrivial aggregation function when the query is non-hierarchical with respect to its existential variables, assuming the absence of self-joins. They further showed that this condition precisely characterizes the class of intractable CQs when the aggregate function is sum or count. In addition, they posed as open problems the complexity of other common aggregate functions such as min, max, count-distinct, average, and quantile (including median). 

Towards the resolution of these problems, we identify for each aggregate function a class of hierarchical CQs where the Shapley value is tractable with every value function, as long as it is local (i.e., determined by the tuples of one relation). We further show that each such class is maximal: for every CQ outside of this class, there is a local (easy-to-compute) value function that makes the Shapley value \#P-hard. 
Interestingly, our results reveal that each aggregate function corresponds to a different generalization of the class of hierarchical CQs from Boolean to non-Boolean queries. In particular, max, min, and count-distinct match the class of CQs that are all-hierarchical (i.e., hierarchical with respect to all variables), and average and quantile match the narrower class of q-hierarchical CQs introduced by Berkholz, Keppeler, and Schweikardt (2017) in the context of the fine-grained complexity of query answering. Finally, we show an example of an aggregate function---``has-duplicates''---that gives rise to a new variation of hierarchy that is more restricted than q-hierarchy.
\end{abstract}

\maketitle

\section{Introduction}
Significant research efforts have been dedicated in recent decades to devising concepts, formalisms, and methodologies for explaining the outcomes of algorithms in data science, particularly in machine learning and databases~\cite{DBLP:journals/ftdb/GlavicMR21,DBLP:conf/sigmod/PradhanLGS22,DBLP:conf/nlpcc/XuUDF0Z19,DBLP:journals/jdiq/BertossiG20,DBLP:journals/inffus/ChouMBOJ22}. Within this broader context, various approaches have been proposed and studied for quantifying the contribution of a database tuple to the result of a query~\cite{DBLP:journals/pvldb/MeliouGMS11,DBLP:conf/tapp/SalimiBSB16,DBLP:journals/lmcs/LivshitsBKS21,DBLP:journals/pvldb/MeliouRS14}. Our work focuses on the game-theoretic approach of the \e{Shapley value}~\cite{Shapley}, a well-established function for distributing wealth in cooperative games, underpinned by various theoretical justifications~\cite{roth,Shapley}. The Shapley value has found broad application across diverse fields such as economics, law, environmental science, and network analysis, as well as in data-centric paradigms including knowledge representation~\cite{DBLP:journals/ai/HunterK10}, machine learning~\cite{lundberg2017unified,lundberg2020local,DBLP:conf/ijcai/RozemberczkiWBY22}, and databases~\cite{DBLP:journals/sigmod/BertossiKLM23,DBLP:journals/lmcs/LivshitsK22,DBLP:conf/sigmod/DeutchFKM22,DBLP:journals/pacmmod/BienvenuFL24,DBLP:journals/sigmod/LivshitsBKS21,DBLP:conf/sigmod/LuoP24}. Specifically, prior work by Livshits, Bertossi, Kimelfeld, and Sebag~\cite{DBLP:journals/lmcs/LivshitsBKS21} initiated the study of the complexity of calculating the Shapley value of a database tuple's contribution to the result of a conjunctive query (CQ). This problem has subsequently been extended to related game-theoretic measures, such as the Banzhaf Power Index~\cite{DBLP:journals/pacmmod/AbramovichDF0O24} and, more generally, \e{Shapley-like} scores~\cite{DBLP:journals/pacmmod/KarmakarMSB24}.

In this manuscript, we investigate the complexity of computing the Shapley value for database tuples contributing to the numeric result of an \e{aggregate conjunctive query} (AggCQ). An AggCQ is defined as $\alpha\circ\tau\circ Q$, evaluated by first applying a conjunctive query $Q$, then transforming each resulting tuple into a numerical value via a \e{value function} $\tau$, and finally aggregating this bag of numbers into a single numerical result using an aggregate function $\alpha$. Throughout this section, we implicitly assume that CQs are without self-joins. Furthermore, our primary focus is on \e{localized} value functions $\tau$, 
which are determined by the attributes of a single relation. 
We defer a discussion on non-localized value functions to \Cref{sec:finer-beyond-localized}.
Moreover, we always assume that $\tau$ is computable in polynomial time.

The complexity analysis by Livshits et al.~\cite{DBLP:journals/lmcs/LivshitsBKS21}, along with subsequent work~\cite{DBLP:journals/pacmmod/BienvenuFL24,DBLP:conf/pods/ReshefKL20}, primarily focused on the Shapley value for Boolean CQs and, more generally, for a database tuple's contribution to an answer tuple in the output. We refer to this specific task as \e{membership}. They established that the tractability criterion for CQs in this context is the property of being \e{\exhierarchical}, meaning hierarchical\footnote{Recall that "hierarchical" implies that for any two variables, their corresponding sets of atoms are either disjoint or one contains the other~\cite{DBLP:conf/vldb/DalviS04}.} with respect to the existential variables. This criterion precisely mirrors that for CQ evaluation over tuple-independent probabilistic databases~\cite{DBLP:conf/vldb/DalviS04,DBLP:journals/vldb/DalviS07}. The similarity in complexities is not coincidental, as computational hardness for both problems boils down to \e{model counting}. Bienvenu, Figueira, and Lafourcade~\cite{DBLP:journals/pacmmod/BienvenuFL24} 
and Kara, Olteanu, and Suciu~\cite{DBLP:journals/pacmmod/0002OS24}
further explored the in-depth connection between the Shapley value and model counting. Livshits et al.~\cite{DBLP:journals/lmcs/LivshitsBKS21} initially motivated the study of membership, in part, due to the potential for extending this task to aggregate queries. They demonstrated this potential with two key findings: on the negative side, they showed that being \exhierarchical is necessary for the tractability of the Shapley value for every non-trivial aggregate function (under conventional complexity assumptions); on the positive side, they proved it is also sufficient for the aggregate functions sum and count.

The algorithm of sum (and count) over \exhierarchical CQs leverages the linearity of expectation. Livshits et al.~\cite{DBLP:journals/lmcs/LivshitsBKS21} posed as open problems the cases of other common aggregate functions that do not seem to benefit from the linearity of expectation, specifically the functions min, max, average, and quantile.  
As a preliminary step, they gave a polynomial-time algorithm for min and max in the special case of a single-atom CQ.
Recent work \cite{DBLP:journals/corr/abs-2506-16923} that was done independently of this paper extended the algorithm for min and max to the class of \e{\allhierarchical} CQs, that is, CQs hierarchical w.r.t.~\e{all} variables (and not just the existential ones as in \exhierarchical).
Yet, prior work left unanswered the question of tractability frontiers for other common aggregate functions and, in particular, whether the tractability frontier of sum and count also applies to these.

In this manuscript, we answer this question (negatively) for min, max, average, and quantile, as well as count-distinct (and also ``has-duplicates'' that we refer to later on) in the case of AggCQs  and with localized value functions. In particular, we identify for each aggregate function a class of hierarchical CQs where the Shapley value is tractable with \e{every} localized value function. Importantly, we show that each such class is \e{the maximal possible}, forming a frontier, in the following sense: for every CQ  outside of this class, there is a local value function where the Shapley value of a fact is \#P-hard.
We elaborate on our results in the remainder of this section.

\paragraph{Contributions.}

\begin{figure}
  \def\cit{\cite{DBLP:journals/lmcs/LivshitsBKS21}}
  \def\aggn#1{{\color{MidnightBlue}#1}}
  \input{hierarchy.pspdftex}
\caption{Containment among classes CQs without self-joins. For each class, the box states the aggregate functions where the class captures precisely the tractable CQs without self-joins for Shapley computation.
\label{fig:hierarchy}}
\Description{Containment among classes CQs without self-joins. For each class, the box states the aggregate functions where the class captures precisely the tractable CQs without self-joins for Shapley computation.}
\end{figure}

We begin by showing (in \Cref{prop:constant-same-as-boolean}) a property of aggregate functions that, when held, makes the Shapley value intractable for every AggCQ, unless it is \allhierarchical. This property holds for all aforementioned aggregate functions, except for the functions sum and count; hence, we establish a negative answer to the above open question.  In the case of min, max, and count-distinct, the property of being \allhierarchical is not only necessary for tractability, but also sufficient. Hence, we establish a full classification for these three functions.

Nevertheless, we show that being \allhierarchical is insufficient for the aggregate functions \e{average} and \e{quantile}: there are \#P-hard average and quantile AggCQs with \allhierarchical CQs. We establish that the tractable class of CQs for these two aggregate functions is precisely that of the \e{\qhierarchical} CQs, introduced by Berkholz, Keppeler and Schweikardt~\cite{DBLP:conf/pods/BerkholzKS17} in the context of structure maintenance for answering CQs with constant-time delay. To the best of our knowledge, the property of \qhierarchical has been shown so far to characterize tractable fragments only in the scope of fine-grained complexity (separating different classes of polynomial time, e.g.,~\cite{DBLP:conf/pods/BerkholzKS17,DBLP:conf/icdt/MunozRV24,DBLP:conf/sigmod/IdrisUV17,DBLP:conf/pods/BerkholzM21,DBLP:journals/pvldb/WangHDY23}). Hence, this work is the first to show that \qhierarchical also characterizes tractability in a natural context of coarse-grained complexity (separating polynomial time from exponential time). The hardness proofs are via matrix inversion, as done previously for proving the hardness of Shapley-value calculation~\cite{DBLP:journals/lmcs/LivshitsBKS21,DBLP:conf/stacs/AzizK14,DBLP:journals/lmcs/LivshitsK22,DBLP:journals/pacmmod/0002OS24,DBLP:journals/pacmmod/KarmakarMSB24}; notably, in the case of average, our matrix construction uses the Kronecker product of the Hilbert matrix and the Hankel matrix with factorial entries. 

Recall that we started with the class of \exhierarchical CQs for the tractability of sum and count, restricted it to the \allhierarchical CQs for min, max and count distinct, and needed to further restrict it to the \qhierarchical CQs for average and quantile.
 Is the class of \qhierarchical CQs the ``ultimate extension'' of a single relation into multi-way joins?
More precisely, is it the case that,
for every aggregate function $\alpha$ (possibly with certain %
natural assumptions), if the Shapley value can be computed in polynomial time over a single relation, then it can also be computed in polynomial time on every \qhierarchical CQ? The answer turns out to be false.

We identify a natural aggregate function where the Shapley value can be computed in polynomial time over a single relation, yet is intractable on certain \qhierarchical CQs. This is the aggregate function ``has-duplicates'' that tests whether its input list of values includes at least one duplicate. (This aggregate operation is non-conventional in traditional database systems since it can be implemented easily via grouping and the count aggregation; yet, it is supported in the \textsf{pandas} library via the property \textsf{has\_duplicates} of the \textsf{Index} object of a \textsf{DataFrame}.) Moreover, we delineate the precise class of CQs where this aggregate function is tractable---this class is defined by restricting the class of \qhierarchical CQs so that no free variable can have a set of atoms that is strictly contained in that of any other atom. We call CQs of this class \e{strongly \qhierarchical}, or \e{\sqhierarchical} for short. 
This class, like all other classes mentioned earlier, coincides with the class of hierarchical CQs in the Boolean case.
Whether this restricted class is the ``ultimate extension'' in the sense described above remains an open problem for future investigation.

\Cref{fig:hierarchy} summarizes the results of this manuscript. Each box represents a class of CQs without self-joins, and their layout reflects containment among the classes, with ``general'' being the most general and ``\sqhierarchical'' being the most restrictive. In each box, we list the aggregate functions $\alpha$ for which the corresponding CQ class is a \emph{tractability frontier}; that is, it is the precise class of CQs without self-joins for which \AggCQs with $\alpha$ admit a polynomial-time computation of the Shapley value. For each box, we also provide an example of a CQ that belongs to the class but not to the more restrictive class that directly contains it.
The class of \exhierarchical CQs consists of those CQs for which the Shapley value of a database tuple, with respect to a given answer tuple in the result of the CQ, can be computed in polynomial time (and we refer to this result as \e{membership} in \Cref{fig:hierarchy}); this class is also the tractability frontier for the sum and count aggregate functions~\cite{DBLP:journals/lmcs/LivshitsBKS21}. For count-distinct, min, and max, the tractability frontier is the more restricted class of \allhierarchical CQs. Strictly harder are the cases of average and quantile (for every threshold), where the tractability frontier is the class of \qhierarchical CQs. The aggregate function has-duplicates is the hardest we consider, as its tractability frontier is the most restricted---the class of \sqhierarchical CQs.

The intuition behind the separation is as follows. The \exhierarchical property enables the decomposition of the database into smaller sub-databases, each corresponding to a distinct answer. These sub-databases may overlap, which is acceptable for the sum and count aggregates due to the linearity of expectation~\cite{DBLP:journals/lmcs/LivshitsBKS21}. However, this is insufficient for other aggregate functions. In contrast, the \allhierarchical property allows us to apply the standard algorithm for hierarchical CQs~\cite{DBLP:conf/vldb/DalviS04,DBLP:journals/vldb/DalviS07} that (conceptually) operates by repeatedly partitioning the database into disjoint subsets. This approach is suitable for min and max (though a different argument is used for count-distinct). Nonetheless, these disjoint subsets can still yield overlapping sets of answers, which poses a problem to the Shapley computation for the average and quantile aggregates. The \qhierarchical property addresses this by ensuring that the answer sets are also disjoint. For the aggregation has-duplicates, even this property is insufficient, since the Shapley computation requires not only disjoint answer sets but also disjoint sets of \e{values} extracted from the answers for aggregation. This stricter requirement can be accommodated when using \sqhierarchical CQs.

\paragraph{Organization.}
The rest of the manuscript is organized as follows. We first introduce the general framework, including the problem definition (\Cref{sec:framework}). 
We then establish general machinery that applies to all (or most) of the aggregate functions we study in the manuscript, including hardness under a general condition and an algorithmic template for hierarchical CQs (\Cref{sec:machinery}). We study the aggregate functions count-distinct, min, and max (\Cref{sec:all}), the aggregate functions average and quantile (\Cref{sec:q}), and then the function has-duplicates (\Cref{sec:duplicates}). 
Finally, we address the standing challenge of extending the classification results of this paper to incorporate the variability of the value function
(\Cref{sec:finer}) and conclude (\Cref{sec:conclusions}).
Due to space constraints, some of the proofs are omitted from the body of the manuscript and are given in the Appendix.

\section{Preliminaries}\label{sec:framework}
We begin by introducing the formal framework of the manuscript, including the problem definition.

\paragraph{Sets and bags.}
Let $X$ be a set and $k$ be a number. A \e{$k$-subset} of $X$ is a set $Y\subseteq X$ with $|Y|=k$. By $\binom{X}{k}$ we denote the set of all $k$-subsets of $X$.
Recall that a \e{bag} is a pair $(X,\mu)$ where $X$ is a set and $\mu:X\ra\naturals$ is a function that maps every element $x\in X$ to a \e{multiplicity}. For a bag $B = (X,\mu)$ we write $x\in B$ for $\mu(x) > 0$. %
The size of a bag $B=(X,\mu)$ is $\size{B} = \sum_{x \in X}\mu(x)$. We denote by $\finbags(X)$  the set of all finite bags over $X$.
If $f:X\ra Y$ is a function, then we use the conventional notation of $\multiset{f(x)\mid x\in X}$ to denote the bag $(Y,\mu)$ where $\mu(y)=|\set{x\in X\mid f(x)=y}|$, that is, the bag that is obtained from $X$ by replacing each element $x$ with the element $f(x)$ while keeping duplicates.

\paragraph{The Shapley value.} A \e{cooperative game} is a pair $(P,\nu)$ where $P$ is a set of players and $\nu:2^P\rightarrow\reals$ is a \e{utility function} that associates every coalition $C\subseteq P$ with a value $\nu(C)$, so that $\nu(\emptyset)=0$. 

Consider a situation when we gather the players of $P$ by selecting players iteratively uniformly at random without replacement, starting with the empty set. The Shapley value of a player $p\in P$ is the expected increase in utility when adding $p$~\cite{Shapley,roth}. Using the law of conditional expectation via conditioning over the set of players chosen before $p$, it can be computed by the following formula.%
    \begin{equation}\label{eq:shapley}
    \Shapley(p, \nu) 
    = \sum_{C \subseteq P \setminus\set{p}} q_{\size{C}} \cdot 
    (\nu(C \cup \set{p}) - \nu(C))
    \end{equation}
where $q_{\size{C}} \defeq \frac{|C|!(|P|-|C|-1)!}{|P|!}$ is the probability that the iterative process selects $C$ first, then $p$, and then the remaining $|P|-|C|-1$ players.

\paragraph{Relational databases.}
A \e{relation schema} has the form $R/k$, where $R$ is a \e{relation name} and $k\in\naturals$ is an \e{arity}. We assume an infinite domain $\consts$ of \e{constants} that occur in database tuples. A \e{relation} over the relation schema $R/k$ is a finite subset of $\consts^k$. A \e{database schema} (or just \e{schema} for short) is a finite set $\scs$ of relation schemas with distinct names. We denote by $\arity(R)$ the arity $k$ associated with the relation name $R$ (assuming that $\scs$ is known from the context). A \e{database} $D$ over the schema $\scs$ assigns to each relation schema $R/k$ a relation over $R/k$, and we denote this relation by $R^D$.
A \e{fact} over $\scs$ is an expression of the form $R(a_1,\dots,a_k)$ where 
$R/k \in \scs$ and each $a_i$ is in $\consts$.  We identify a database $D$ over a schema $\scs$ as the set of its facts over $\scs$. For example, $D_1\subseteq D_2$ means that each relation of $D_2$ contains the corresponding relation of $D_1$.

\paragraph{Conjunctive queries.}
Let $\scs$ be a schema. A \e{relational query} $Q$ is associated with an arity, denoted $\arity(Q)$, and it is abstractly a function that maps a given database $D$ over $\scs$ into a finite $k$-ary relation $Q(D)\subseteq\consts^k$, where $k=\arity(Q)$. 

We will focus on the special case where the query is a \e{conjunctive query} (CQ), which is an expression of the form
$$Q(\vec x) \la R_1(\vec z_1),\dots,R_q(\vec z_q)$$
where $\vec x$ is a sequence of variables, each $R_j$ is the name of a relation in $\scs$, and each $\vec z_j$ is a sequence of length $\arity(R_j)$ consisting of variables and constants.
We denote by $\vars(Q)$ the set of all variables of $Q$ (i.e., those that occur in either $\vec x$ or some $\vec z_j$), by $\freevars(Q)$ the set of \e{free} variables of $Q$, that is, those in $\vec x$, and by $\exvars(Q)$ the set of \e{existential} variables of $Q$, that is, those in $\vars(Q)$ that are not in $\vec x$. Hence, $\vars(Q)=\freevars(Q)\cup\exvars(Q)$. In addition, we denote by $\atoms(Q)$ the set of atomic formulas $R_j(\vec z_j)$ of $Q$, and by $\atoms(Q,x)$ the set of atoms of $Q$ where the variable $x$ occurs. By a \e{self-join} we refer to a repeated relation symbol in the body of $Q$. Throughout the manuscript, we always assume that the involved CQs have no self-joins.

We adopt the usual semantics of CQs, defined as follows. A \e{homomorphism} $h$ from the CQ $Q$ to a database $D$ is a function $h:\vars(Q)\rightarrow\consts$ that maps every variable $y$ of $Q$ to a constant $h(y)$, so that for every atom $R(\vec z)$ of $Q$, the fact $R(h(\vec z))$ is in $D$; here, $h(\vec z)$ denotes the sequence obtained from $\vec z$ by replacing every variable $y$ with the constant $h(y)$.
We denote by $\hom(Q,D)$ the set of all homomorphisms from $Q$ to $D$. 
Then $Q(D)\defeq \set{h(\vec x)\mid h\in\hom(Q,D)}$. 

\paragraph{Hierarchical CQs.}
We refine the well-known concept of a \e{hierarchical} CQ, as follows. Let $Q$ be a CQ. We say that $Q$ is \e{hierarchical} with respect to (w.r.t.) a set $V$ of variables if, for all variables $x$ and $y$ in $V$, it holds that (a) 
$\atoms(Q,x)\subseteq\atoms(Q,y)$, or (b) 
$\atoms(Q,y)\subseteq\atoms(Q,x)$, or (c)
$\atoms(Q,x)$ and $\atoms(Q,y)$ are disjoint. We say that $Q$ is \e{\exhierarchical} if $Q$ is hierarchical w.r.t.~the set $\exvars(Q)$ of existential variables, and that $Q$ is \e{\allhierarchical} if $Q$ is hierarchical w.r.t.~the entire set $\vars(Q)$ of variables of $Q$. The CQ $Q$ is \e{\qhierarchical} if it is \allhierarchical and for all variables $x$ and $y$, if $\atoms(Q,y)\subsetneq\atoms(Q,x)$ and $y$ is free, $x$ is free as well; in other words, there are no existential $x$ and free $y$ such that $\atoms(Q,y)\subsetneq\atoms(Q,x)$.

Note that an \allhierarchical CQ is also \exhierarchical, but not necessarily vice versa, and that every \qhierarchical CQ is  \allhierarchical. Hence, we have the following entailment chain:
$
\mbox{ \qhierarchical }\ra
\mbox{ \allhierarchical }\ra
\mbox{ \exhierarchical }
$.
\Cref{fig:hierarchy} gives examples of CQs of the different classes. For every class (represented by a box), its CQ is not in any class contained in it.

\begin{remark}
All three classes of hierarchical CQs coincide when the CQ is Boolean---the distinction is meaningful only in the presence of free variables, which are essential for the aggregation. The concept of a hierarchical CQ was originally coined by Dalvi and Suciu~\cite{DBLP:conf/vldb/DalviS04}. The relaxation to \exhierarchical was adopted later on (under different terminology) by Fink and Olteanu~\cite{DBLP:journals/tods/FinkO16}. The restriction to q-hierarchical CQs was introduced soon after by 
Berkholz, Keppeler and Schweikardt~\cite{DBLP:conf/pods/BerkholzKS17} in the context of fine-grained complexity of query maintenance through database updates.\qedexample
\end{remark}

\paragraph{Aggregate queries.}
Let $\scs$ be a schema. In this work, an \e{aggregate query} $A$ consists of three components:
\e{(1)} a relational query $Q$;
\e{(2)} a \e{value function} $\tau:\consts^{k}\rightarrow\reals$ where $k=\arity(Q)$;
\e{and (3)} an \e{aggregation function} $\alpha:\finbags(\reals)\rightarrow\reals$. We will make the convenient assumption that $\alpha$ is zero on the empty set, that is, $\alpha(\emptyset)=0$.
The semantics of $A=\alpha\circ\tau\circ Q$ is fairly straightforward: given a database $D$, the result $A(D)$ of applying $A$ to $D$ is the number
$$\alpha(\multiset{\tau(\vec t) \mid \vec t\in Q(D)})\,.$$

We will discuss general value functions $\tau$ and make the assumption that $\tau(\vec t)$ is computable in polynomial time. Another important assumption we make in most of our results is that $\tau$ is \e{localized} in the sense that it is determined by one of the atoms in the CQ. 
Formally, we say that $\tau$ is \e{localized} on an atom $R(\vec z)$ of $Q(\vec x)$ if for all databases $D_1$ and $D_2$ and homomorphisms $h_1\in\hom(Q,D_1)$ and $h_2\in\hom(Q,D_2)$, if $h_1(\vec z)=h_2(\vec z)$, then $\tau(h_1(\vec x))=\tau(h_2(\vec x))$; that is, $\tau$ is completely determined by the assignment to the variables of $R$. In this case, we may write $\tau(h(\vec z))$ instead of $\tau(h(\vec x))$.
Furthermore, we write $\tau \equiv c$ if $\tau$ is the constant function mapping each tuple in $\consts^{\arity(Q)}$ to the value $c \in \reals$.

Throughout the manuscript, we will refer to the following specific value functions, assuming that the relevant $\consts$ value is a number:
\begin{align}
\label{eq:tauidi}
\tauidi{i}(a_1,\dots,a_k)
& \,\defeq\, a_i \\
\label{eq:taugbi}
\taugbi{b}{i}(a_1,\dots,a_k)
& \,\defeq\,
\mbox{$1$ if $a_i>b$, and $0$ otherwise}
\\
\label{eq:taurelu}
\taurelui{i}(a_1,\dots,a_k)
& \,\defeq\,
\mbox{$a_i$ if $a_i>0$, and $0$ otherwise}
\end{align}
Here, the parameter $b$ can be any fixed number.
Note that each of the three is localized on some atom, since it is determined by one entry of the tuple.
In the case where $k=1$, we may omit the superscript and write just $\tauid$, $\taugb{b}$, and $\taurelu$ (where ``ReLU'' stands for Rectified Linear Unit).

For $\alpha$, we will focus on several common functions: sum ($\aggsum$), count-distinct ($\aggcdis$), average ($\aggavg$), minimum ($\aggmin$), maximum ($\aggmax$), median ($\aggmed$), and more generally $q$-quantile ($\aggquantile_q$). As usual, we set $\aggquantile_q(B) = \frac{1}{2}(x_{\lceil q\size{B}\rceil}  + x_{\lfloor q\size{B} + 1\rfloor})$ where $x_i$ is the $i$-th largest value of $B$. For our exploration, we will also explore the aggregate function \e{has-duplicates} ($\aggduplicates$) that evaluates to $1$ if the multiplicity of at least one element is two or more, and $0$ otherwise.
In notation:
\[
\text{For $B=(X,\mu)$, we have:\;}\aggduplicates(B) = \begin{cases}
1 & \text{if there is } a\in X \text{ with } \mu(a) \geq 2 \\
0 & \text{otherwise.}
\end{cases}
\]

\begin{example}\label{example:earns}
The following schema $\scs$ is used for the database of an educational institute offering individual courses:  
$
\rel{Earns}(\att{person},\att{salary})
,\;
\rel{Course}(\att{name},\att{number})
,\;
\rel{Took}(\att{person},\att{course})
$.
The following \AggCQ returns the average salary of people who took courses in the institution.
\begin{equation}\label{eq:earns}
A\;=\;
\aggavg\circ s\circ \big(Q(p,s)\leftarrow
\rel{Earns}(p,s),
\rel{Took}(p,c),
\rel{Course}(\_,c)
\big)
\end{equation}
Here, $\alpha=\aggavg$ and $\tau(p,s)$ is simply $s$; in particular, $\tau$ is localized (on the relation $\rel{Earns}$).
Note that a person may take many courses, but she is counted only once for the average due to the projection of $Q$. An example of a non-localized $\tau$ is in the following \AggCQ over the schema with
$\rel{Cargo}(\att{number},\att{weight})$,
$\rel{Truck}(\att{number},\att{weight})$, and
$\rel{Carries}(\att{truck},\att{cargo})$.
\begin{equation*}
A'\;=\;
\aggmax\circ (w_c+w_t)\circ \big(Q(c,t,w_c,w_t)\la
\rel{Cargo}(c,w_c),
\rel{Carries}(t,c),
\rel{Truck}(t,w_t)
\big)
\end{equation*}
This query asks for the maximal weight of a truck loaded with cargo. The value function $\tau$ is non-localized since its determination requires attributes from both $\rel{Cargo}$ and $\rel{Truck}$.
\qedexample
\end{example}

\paragraph{Shapley value for \AggCQs.}
We study the computational complexity of measuring the contribution of a fact $f$ to the result of an aggregate query $A$ over a database $D$. We model this contribution as the Shapley value of $f$ in the cooperative game $(P,\nu)$, where $P$ is the database $D$ (which is a set of facts, each being a player) and the utility $\nu$ is defined by $\nu(C)\defeq A(C)$ for every $C\subseteq P$. Moreover, following the common modelling of the responsibility for database queries~\cite{DBLP:journals/pvldb/MeliouGMS11,DBLP:journals/lmcs/LivshitsBKS21}, we assume that $D$ consists of \e{endogenous} facts and \e{exogenous} facts. Facts of the latter kind are taken for granted and do not participate in the game; hence, the set $P$ of players is restricted to the endogenous facts.

Formally, we assume that a given database $D$ is the union of two disjoint databases (over the same schema $\scs$): the database $D\endo$ contains the endogenous facts, and the database $D\exo$ contains the exogenous facts. For a fixed schema $\scs$ and aggregate query $A$, the computational problem is defined as follows. Given a database $D$ and a fact $f\in D\endo$, calculate the value $\Shapley(f, A)[D]$ as defined in \Cref{eq:shapley}, where:
\begin{itemize}
    \item The set $P$ of players is $D\endo$.
    \item The utility function $\nu$ is defined by $\nu(C)\defeq A(C\cup D\exo)-A(D\exo)$.
\end{itemize}
To simplify notation, we drop $D$ when it is clear from context.

\begin{example}
Consider again the \AggCQ of \Cref{eq:earns}. Suppose that we wish to measure the contribution of every course to the average salary (or the median if $\alpha$ is $\frac12$-$\aggquantile$) of the past graduates. Then we make all $\rel{Course}$ facts endogenous and all other facts exogenous. Given a specific course represented as a fact $f=\rel{Course}(c,n)$, its contribution would be quantified as $\Shapley(f, A)$. \qedexample
\end{example}

\begin{remark}\label{remark:rationality}
Previous work in database management has adopted the Shapley value largely because it is a conventional measure across disciplines, including other areas of data science~\cite{DBLP:conf/ijcai/RozemberczkiWBY22,DBLP:conf/aistats/JiaDWHHGLZSS19,DBLP:journals/inffus/ChouMBOJ22}. This makes it an attractive candidate for a well-defined, application-independent attribution measure that can be offered as part of a database system service.
In certain situations, the axiomatic properties that characterize the Shapley value are indeed compelling. 

For example, suppose that endogenous tuples represent employees or teams, and attribution serves to determine their reward for success. Further suppose that success is defined by an aggregate query $Q$ that can be expressed as the sum of other aggregate queries $Q'$ (e.g., the number of distinct new customers in the US equals the sum of the numbers of distinct new customers across the states). In such a setting, it is arguably desirable that the reward remain the same whether it is based on $Q$ directly or it is given for each $Q'$ separately (linearity). Similarly, it may appear fair that two facts indistinguishable in their impact on the query receive the same reward (symmetry), and that facts with no impact receive none (null player). Finally, to determine whether an attribution score is considered high or low, it is useful that the scores are calibrated so that the contributions of all endogenous tuples sum up to the actual result of the aggregate query (efficiency). 
We discuss further the rationality of the Shapley value for \AggCQs in \Cref{sec:conclusions}.
\qed
\end{remark}

We focus on the \e{data complexity} of computing $\Shapley(f, A)$. This means that the \AggCQ $A$ is assumed to be fixed, and the input consists of the database $D$ and the fact $f\in D$. When we prove \fpsharpp-completeness, we show only \fpsharpp-hardness; completeness is done in standard techniques for showing membership of probability computation in \fpsharpp~\cite{DBLP:conf/pods/GradelGH98,DBLP:journals/jacm/FaginKK11}.

\section{General Machinery}
\label{sec:machinery}
In this section, we present general machinery that we use throughout the manuscript, both for establishing hardness results on the Shapley value and for algorithms to compute this value.

\subsection{General Hardness Result}

We begin with a general hardness result. Our starting point is the following result of
Livshits et al.~\cite{DBLP:journals/lmcs/LivshitsBKS21}, showing the hardness of the Shapley value for aggregate queries whenever the CQ is non-\exhierarchical and without self-joins.

\begin{theorem}
\emph{\cite[Theorem 4.8]{DBLP:journals/lmcs/LivshitsBKS21}}\;
\label{thm:livshits-hardness}
Let $A=\alpha\circ\tau\circ Q$ be an \AggCQ. Suppose that $Q$ has no self-joins. If $Q$ is not \exhierarchical, then it is \fpsharpp-complete to compute
$\Shapley(f, A)$, unless $A$ is a constant query (namely, $0$, since we assume $\alpha(\emptyset) = 0$.).
\end{theorem}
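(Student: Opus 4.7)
The plan is to reduce from a known \#P-hard Shapley computation whose hardness is rooted in the non-$\exists$-hierarchical structure. Since $Q$ is not \exhierarchical, there exist two existential variables $x,y \in \exvars(Q)$ witnessing the failure of the hierarchical condition, i.e., there are atoms $\alpha_x \in \atoms(Q,x)\setminus\atoms(Q,y)$, $\alpha_{xy}\in \atoms(Q,x)\cap\atoms(Q,y)$, and $\alpha_y\in\atoms(Q,y)\setminus\atoms(Q,x)$. These three classes of atoms mirror the three atoms of the canonical non-hierarchical Boolean CQ $Q_{RST}() \la R(x), S(x,y), T(y)$, for which the Shapley value is already known to be \fpsharpp-hard (via a chain of reductions ultimately tracing back to counting independent sets in bipartite graphs or inverting a Vandermonde-like matrix, as in prior hardness proofs for Shapley-type measures).

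I would then encode an instance of the hard problem into the database $D$ for $Q$ as follows. Given a bipartite graph $G=(U,V,E)$ that encodes a hard Shapley instance for $Q_{RST}$, populate the $\alpha_x$-atoms using $U$ (with the $x$-position carrying the vertex identifier), the $\alpha_y$-atoms using $V$, and the $\alpha_{xy}$-atoms using $E$; all other atoms are filled with exogenous ``filler'' facts using fresh constants so that they can be satisfied in exactly one way for each choice of free variables. This yields a bijection between homomorphisms of $Q$ into $D$ and homomorphisms of $Q_{RST}$ into $G$. The endogenous/exogenous partition is chosen so that the facts one can add or remove correspond exactly to the players in the underlying $Q_{RST}$-game, and one distinguished fact corresponds to the distinguished player. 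Then the utility function $\nu$ for $Q$ on $D$ relates in a controlled way to the utility function for $Q_{RST}$ on $G$, which by \Cref{eq:shapley} would transfer the hardness.

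The main obstacle is controlling the aggregate $A=\alpha\circ\tau\circ Q$ when $\alpha$ and $\tau$ are arbitrary. In general, $\tau$ might collapse many answers to the same value, and $\alpha$ may see only the multiset of these values, rather than the underlying homomorphisms. The non-triviality assumption ($A$ is not the constant zero query) is what must be leveraged here: there exists some witness database $D_*$ with $A(D_*) \neq 0$, which provides at least one configuration where adding homomorphisms changes the aggregate. The delicate step is to arrange the filler facts and the endogenous partition so that $\nu(C \cup \{f\}) - \nu(C)$ depends on $C$ only through a \#P-hard combinatorial quantity (e.g., the number of certain subgraphs of the encoded $G$), plus terms that can be computed in polynomial time and subtracted off. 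I expect this step will require case analysis or an interpolation argument over several polynomially many database instances, tuned to the specific $\alpha$ and $\tau$ of $A$, in order to isolate the hard quantity from the Shapley sum.
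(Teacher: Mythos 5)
First, a point of comparison: the paper does not prove this statement at all --- it is imported verbatim as \cite[Theorem 4.8]{DBLP:journals/lmcs/LivshitsBKS21}, so the ``paper's own proof'' is just the citation. Your proposal is therefore an attempt to reconstruct the proof from the prior work, and its skeleton is right: locate a pair of existential variables witnessing the failure of the hierarchical condition, embed the canonical $R(x),S(x,y),T(y)$ pattern into the corresponding atoms with exogenous filler facts for the rest (this is where self-join-freeness is silently but essentially used, since otherwise filler facts for one atom could create spurious homomorphisms through another atom over the same relation), and transfer hardness from the Boolean case.

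The genuine gap is that you stop exactly at the step that carries the entire weight of the theorem. What distinguishes this result from the Boolean hardness result is that the reduction must work for an \emph{arbitrary} $\alpha$ and $\tau$, using only the single hypothesis that $A$ is not identically zero; your closing sentence defers this to ``case analysis or an interpolation argument \dots tuned to the specific $\alpha$ and $\tau$,'' which is not an argument but a restatement of the difficulty. What is needed is a uniform mechanism: from a witness database with $A(D_*)\neq 0$ one must manufacture instances on which $A(C\cup D\exo)$ provably takes one of two distinct, computable values according to whether $\Qbool$ holds on (the preimage of) $C$, so that the Shapley sum of \Cref{eq:shapley} becomes a known linear combination of the quantities $|\{E\subseteq D\endo\setminus\{f\}: |E|=k,\ \Qbool(E\cup D\exo)\text{ holds}\}|$, recoverable by solving an invertible linear system over polynomially many constructed databases. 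Showing that an arbitrary non-constant $\alpha\circ\tau$ can always be forced into such a two-valued ``switch'' on the constructed instances --- despite $\tau$ possibly collapsing all answers to one value and $\alpha$ seeing only a bag --- is the missing idea, not a routine verification, and without it the reduction does not go through.
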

From \Cref{thm:livshits-hardness}  we conclude
that, for every nontrivial \AggCQ, a necessary condition for polynomial-time Shapley computation is that the underlying CQ is \exhierarchical. The same work also showed that, for $\aggsum$ (and $\aggcount$), this condition is also sufficient. The next argument establishes that this is no longer true for the aggregate functions that we study in this work. 

An aggregate function $\alpha$ is \e{constant per singleton} if for all $a\in\reals$ and nonempty bags $B$ and $B'$ over the singleton $\set{a}$ it holds that $\alpha(B)=\alpha(B')$.

\begin{restatable}{proposition}
{propconstantsameasboolean}
\label{prop:constant-same-as-boolean}
Let $A = \, \alpha\circ\tau\circ Q$ be an aggregate query such that $\tau$ is a constant function $\tau \equiv c$ and $\alpha$ is constant per singleton. For all databases $D$ and facts $f\in D$ it holds that
$$\Shapley(f, A)=
\alpha(\multiset{c})\cdot
\Shapley(f, \Qbool)$$ where $\Qbool$ is the Boolean CQ obtained from $Q$ by viewing every free variable as existential.
\end{restatable}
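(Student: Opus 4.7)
The plan is to reduce $A$ to a rescaled copy of the Boolean CQ $\Qbool$ at the level of the utility function, and then invoke the obvious linearity of the Shapley value formula \eqref{eq:shapley} in the utility function.

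First I would make the following pointwise observation about $A$. For any database $D'$ over $\scs$, the bag $B(D') \defeq \multiset{\tau(\vec t) \mid \vec t \in Q(D')}$ is either empty (exactly when $Q(D') = \emptyset$) or a nonempty bag over the singleton $\set{c}$ (since $\tau \equiv c$). By the convention $\alpha(\emptyset) = 0$ and the hypothesis that $\alpha$ is constant per singleton, we therefore have
\[
A(D') \;=\; \alpha(B(D')) \;=\; \alpha(\multiset{c}) \cdot [\![\, Q(D') \neq \emptyset \,]\!],
\]
where $[\![\cdot]\!]$ is the Iverson bracket. Next I would use the standard fact that $Q(D') \neq \emptyset$ iff there exists a homomorphism from $Q$ to $D'$, which is in turn iff $\Qbool(D') = 1$ (viewing the Boolean value $\Qbool(D')$ as an element of $\set{0,1}$). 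Hence $A(D') = \alpha(\multiset{c}) \cdot \Qbool(D')$.

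With this identity in hand, I would plug it into the definitions of the two utility functions. Writing $\nu_A(C) = A(C \cup D\exo) - A(D\exo)$ and $\nu_{\Qbool}(C) = \Qbool(C \cup D\exo) - \Qbool(D\exo)$, the identity gives $\nu_A(C) = \alpha(\multiset{c}) \cdot \nu_{\Qbool}(C)$ for every $C \subseteq D\endo$. Because the Shapley formula \eqref{eq:shapley} is a linear combination (with coefficients $q_{|C|}$) of differences of utilities, it is linear in the utility function; in particular $\Shapley(f, \lambda \cdot \nu) = \lambda \cdot \Shapley(f, \nu)$ for any scalar $\lambda \in \reals$. Applying this with $\lambda = \alpha(\multiset{c})$ and $\nu = \nu_{\Qbool}$ yields the claimed equality.

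There is no real obstacle here; the only subtlety is being careful about the empty bag case so that the identity $A(D') = \alpha(\multiset{c}) \cdot \Qbool(D')$ holds uniformly (this is exactly where the convention $\alpha(\emptyset) = 0$ is used, which was assumed in the framework of \Cref{sec:framework}). Everything else is the straightforward factoring of a scalar out of the Shapley formula.
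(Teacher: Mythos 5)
Your proof is correct and follows essentially the same route as the paper's: both establish the pointwise identity $A(C \cup D\exo) = \alpha(\multiset{c}) \cdot \Qbool(C \cup D\exo)$ (using that $\alpha$ is constant per singleton and zero on the empty bag) and then conclude by linearity of the Shapley value in the utility function. Your treatment of the empty-bag case and of the $-A(D\exo)$ offset in the utility is slightly more explicit than the paper's, but it is the same argument.
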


Under the assumption that $\alpha(\multiset{c})$ is nonzero, \Cref{prop:constant-same-as-boolean} immediately yields a Turing reduction from computing $\Shapley(f,\Qbool)$ to computing $\Shapley(f,A)$ via division by $\alpha(\multiset{c})$. This shows that:

\begin{restatable}{theorem}{thmhardnessconstpersingle}
\label{thm:hardness-const-per-single}
Let $A=\alpha\circ\tau\circ Q$ be an \AggCQ such that $\tau$ is a constant function $\tau \equiv c$ and  $\alpha$ is constant per singleton with $\alpha(\multiset{c}) \neq 0$. Suppose that $Q$ has no self-joins. If $Q$ is not \allhierarchical, then it is \fpsharpp-complete to compute
$\Shapley(f, A)$. In particular, this is the case for the aggregate functions $\aggmin$, $\aggmax$, $\aggcdis$, $\aggavg$, and $\aggquantile_q$ for all $q \in (0,1)$.
\end{restatable}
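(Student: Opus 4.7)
The plan is to chain together the two previously established results: Proposition~\ref{prop:constant-same-as-boolean} (which relates the Shapley value of $A$ to that of $\Qbool$ in this constant-$\tau$ setting) with Theorem~\ref{thm:livshits-hardness} (which gives \fpsharpp-hardness for non-\exhierarchical Boolean CQs). Since Proposition~\ref{prop:constant-same-as-boolean} is already available, the theorem essentially follows by a division trick once we verify that the hypotheses on $Q$ transfer correctly to $\Qbool$.

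Concretely, first I would invoke Proposition~\ref{prop:constant-same-as-boolean} to obtain, for every database $D$ and fact $f \in D\endo$, the identity
\[
\Shapley(f,A) \;=\; \alpha(\multiset{c})\cdot \Shapley(f,\Qbool).
\]
Since $\alpha(\multiset{c})\neq 0$ by hypothesis, this immediately gives a polynomial-time Turing reduction from computing $\Shapley(f,\Qbool)$ to computing $\Shapley(f,A)$, simply by dividing out the scalar $\alpha(\multiset{c})$ (which depends only on $A$, hence is a constant in data complexity).

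Next I would verify that $\Qbool$ satisfies the hypotheses of Theorem~\ref{thm:livshits-hardness}. By construction, $\Qbool$ has exactly the same atoms as $Q$, so it has no self-joins. Crucially, in $\Qbool$ every variable of $Q$ is existential, so $\exvars(\Qbool) = \vars(Q)$. Thus $\Qbool$ is \exhierarchical if and only if $Q$ is hierarchical with respect to the whole set $\vars(Q)$, that is, if and only if $Q$ is \allhierarchical. Our assumption that $Q$ is not \allhierarchical therefore yields that $\Qbool$ is not \exhierarchical, and since $\Qbool$ is not the constant-zero query (it has at least one satisfying database), Theorem~\ref{thm:livshits-hardness} gives \fpsharpp-hardness of $\Shapley(f,\Qbool)$, which transfers to $\Shapley(f,A)$ via the reduction above. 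Completeness in \fpsharpp{} then follows from the standard arguments already cited in the preliminaries.

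Finally, for the ``in particular'' clause, I would pick $c$ to be any nonzero real (e.g., $c=1$) and simply check that each of $\aggmin$, $\aggmax$, $\aggcdis$, $\aggavg$, $\aggquantile_q$ is constant per singleton with $\alpha(\multiset{c}) \neq 0$: on any nonempty bag supported on $\set{c}$, the first, second, fourth, and fifth evaluate to $c$, while $\aggcdis$ evaluates to $1$. The main ``obstacle'' is really just the bookkeeping in the previous paragraph — making sure that ``all-hierarchical for $Q$'' translates exactly to ``\exhierarchical for $\Qbool$'' so that Theorem~\ref{thm:livshits-hardness} applies; once this observation is made, the proof is essentially a one-line consequence of Proposition~\ref{prop:constant-same-as-boolean}.
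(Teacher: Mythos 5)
Your proof is correct and follows exactly the route the paper takes: apply Proposition~\ref{prop:constant-same-as-boolean}, divide by the nonzero constant $\alpha(\multiset{c})$ to obtain a Turing reduction from $\Shapley(f,\Qbool)$, and observe that $\Qbool$ fails to be \exhierarchical precisely when $Q$ fails to be \allhierarchical, so the known hardness for non-hierarchical Boolean CQs applies. The verification of the ``in particular'' clause (constant-per-singleton with $\alpha(\multiset{c})\neq 0$ for a suitable $c\neq 0$) is also exactly what is needed.
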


Hence, for the \AggCQs we study in this work, the class of tractable underlying CQs is contained in the class of \allhierarchical CQs. In particular, this class is strictly narrower than the class of tractable CQs for sum (and count), namely that of the \exhierarchical CQs~\cite{DBLP:journals/lmcs/LivshitsK22}.

\subsection{Algorithmic Template for Hierarchical CQs}\label{sec:template}

We now describe a generic template that we apply for computing $\Shapley(f, A)$ when the underlying CQ $Q$ is \allhierarchical. 
We begin with a folklore technique for reducing the computation to an expectation computation. 
Our goal is to compute $\Shapley(f, A)$ on a database $D$. In the following equations, we denote by $F$ the database that is identical to $D$, except that $f$ is exogenous rather than endogenous. We also denote by $G$ the database $D\setminus{f}$. 
Note the following (referring to \Cref{eq:shapley}):
\begin{align*}
\Shapley(f, A)
&=  \sum\nolimits_{E \subseteq D\endo \setminus\set{f}} q_{\size{E}} \cdot 
    (A(E \cup \set{f} \cup D\exo) - A(E \cup D\exo)) \\
&= \sum\nolimits_{k=0}^{|D\endo|-1}q_k 
\left( 
\Big(
\sum\nolimits_{E\in\binom{F\endo}k}A(E \cup F\exo) 
\Big)
- 
\Big(
\sum\nolimits_{E\in\binom{G\endo}k}A(E \cup G\exo) 
\Big)
\right)
\end{align*}
Therefore, it suffices to be able to compute the value 
$\sum_{E\in\binom{D\endo}k}(A(D\exo\cup E))$, that is, the sum of the results of $A$ over all of the databases that are obtained from $D$ by removing all but $k$ endogenous facts. 
Most of the algorithms we devise will compute this value, which we denote simply by $\sumword_k(A,D)$:
\begin{equation}\label{eq:sum-A-k}
\sumword_k(A,D)\defeq \sum\nolimits_{E\in\binom{D\endo}k}A(D\exo\cup E)
\end{equation}

Notably, whenever we present an algorithm for the Shapley value by calculating $\sumword_k(A,D)$, it applies to the class of the \e{Shapley-like scores}~\cite{DBLP:journals/pacmmod/KarmakarMSB24} that includes the Shapley value, as well as other contribution measures such as the \e{Banzhaf score}~\cite{Banzhaf_1965_5380} that was studied in the context of membership in CQ answers, either knowingly~\cite{DBLP:journals/pacmmod/AbramovichDF0O24} or unawarely~\cite{DBLP:conf/tapp/SalimiBSB16}. Unfortunately, other algorithms (that rely, e.g., on the linearity of expectation) are not known to generalize to Shapley-like scores, and so is the case for the lower bounds -- specialized reductions should be constructed for these (and possibly algorithms for classes that are considered intractable in our Shapley setting).

In general, the computation may not produce $\sumword_k(A,D)$ directly, but rather a value or function $\P[Q,D]$ from which $\sumword_k(A,D)$ is easily computable. The computation of $\P[Q,D]$ is done via the dynamic programming of \Cref{fig:generic-alg}, which entails the computation of $\P[Q',D']$ for ``smaller'' $Q'$ and $D'$. This program follows the general structure of typical algorithms for Boolean hierarchical CQs, such as the ones used for computing the probability of $Q$ in a probabilistic database~\cite{DBLP:journals/vldb/DalviS07} and computing the Shapley value of $Q$~\cite{DBLP:journals/lmcs/LivshitsBKS21}.
Note, however, that we do not assume that $Q$ is Boolean.

\begin{figure}[t]
\small
\hrule
\begin{minipage}{\linewidth}
\renewcommand{\arraystretch}{1.1}
\begin{tabular}{ll}
\multicolumn{2}{l}{
\em Generic dynamic programming for \allhierarchical CQs}\\
\textbf{Input:} & Database $D$, \allhierarchical CQ $Q$, and $\P[Q',D']$ for all relevant smaller $Q'$ and $D'$ \\
\textbf{Goal:} & 
Compute $\P[Q,D]$ 
\end{tabular}
\end{minipage}
\hrule
\begin{algorithmic}[1]
\If{$Q$ has no variables}
\mycomment{Constant number of relevant facts, compute and terminate}
\State Compute $\P[Q,D]$ directly and \textbf{return}
\EndIf
\If{$Q$ has a root variable $x$}\label{alg:choice-of-root}
\State Let $\set{a_1,\dots,a_n}$ be the values that $x$ can take
\State $D_{1}\gets$ the subset of $D$ consistent with 
$x\mapsto a_1$

\State $D_{\mathsf{tmp}}\gets D_1$

\State $\P[Q,D_{\mathsf{tmp}}]\gets \P[Q_{x\mapsto a_1},D_1]$
\For{$i=2$ to $n$}
\State $D_i\gets$ the subset of $D$ consistent with 
$x\mapsto a_i$
\State $\P[Q,D_i]\gets\P[Q_{x\mapsto a_i},D_i]$
\State $\P[Q,D_{\mathsf{tmp}}\cup D_i]\gets$ 
{\color{RoyalBlue}
$\combineU\left(\P[Q,D_{\mathsf{tmp}}],\P[Q,D_i]\right)$
}
\mycomment{$\P[Q,D_i]=\P[Q_{x\mapsto a_i},D_i]$}
\State $D_{\mathsf{tmp}}\gets D_{\mathsf{tmp}}\cup D_i$
\EndFor
\mycomment{Last $D_{\mathsf{tmp}}\cup D_i$ is $D$, so $\P[Q,D]$ is computed}
\Else 
\mycomment{$Q$ is a cross product}
\State Suppose that $Q$ is a cross product of $Q_1$ and $Q_2$
\State Let $D_1$ and $D_2$ be the subsets of $D$ with the relations of $Q_1$ and $Q_2$, respectively
\State 
$\P[Q,D]\gets$
{\color{RoyalBlue}
$\combineX\left(\P[Q_1,D_1],\P[Q_2,D_2]\right)$
}
\EndIf
\end{algorithmic}
\hrule
\caption{Generic algorithmic scheme for hierarchical CQs\label{fig:generic-alg}}
\Description{Generic algorithmic for hierarchical CQs}
\end{figure}

In the algorithm, we use the following terminology. 
A \e{root} variable of the CQ $Q$ is a variable $x$  that appears in every atom. The values that $x$ \e{can take} are the values of $D$ that occur in every column that corresponds to a position where $x$ occurs in $Q$.
If $a$ is such a value, 
then a fact $f$ is consistent with $x\mapsto a$ if $f$ can be obtained from an atom $R_j(\vec z_j)$ of $Q$ by replacing $x$ with $a$ and replacing every other variable with arbitrary constants. 
The \e{subset of $D$ consistent with $x\mapsto a$} is the database that consists of all facts of $D$ that are consistent with some atom of $Q$. In addition, we denote by $Q_{x\mapsto a}$ that CQ that is obtained by $Q$ by replacing each body occurrence of $x$ with $a$; in addition, if $x$ is a head variable, then $x$ is removed from the head. Hence, $\vars(Q_{x\mapsto a})=\vars(Q)\setminus\set{x}$.

We say that the CQ $Q$ is a \e{cross product} of the CQs $Q_1$ and $Q_2$ if
$Q_1$ and $Q_2$ have disjoint sets of relation symbols and disjoint sets of variables,
the set of atoms of $Q$ is the union of the sets of atoms of $Q_1$ and $Q_2$,
and the set of head variables of $Q$ is the union of the set of head variables of $Q_1$ and $Q_2$.
As an example, the CQ $Q(x,y)\la R(x),S(y,z),T(z)$ is the cross product of the CQs 
$Q_1(x)\la R(x)$ and $Q_2(y)\la S(y,z),T(z)$.

The above notation covers all terminology used in the generic algorithm of \Cref{fig:generic-alg}, except for two subroutines that are task-specific and need to be devised for each specific $\P$:
\begin{itemize}
    \item The subroutine $\combineU(\P[Q,D'],\P[Q,D''])$ computes $\P[Q,D'\cup D'']$, assuming that
    $D'$ and $D''$ are disjoint and 
    $Q(D'\cup D'')=Q(D')\cup D(D'')$.
\item The subroutine $\combineX(\P[Q',D'],\P[Q'',D''])$ computes 
$\P[Q,D'\cup D'']$ assuming that $Q$ is the cross product of $Q'$ and $Q''$.
\end{itemize}

Importantly, we need to clarify how $\tau$ is defined on the generated CQs. Recall that $\tau$ is localized on a single relation $R$. 
When we break $Q$ as a cross product of $Q'$ and $Q''$, the relation $R$ belongs to just one of the two, say $Q'$. The semantics of $\P$ on $Q''$ will not relate to $\tau$. 
However, we assume that $\tau$ is defined on $Q'$, since $\tau$ is localized on $R$ and, by definition, it is a function of the assignment to the free variables of $R$.

In the case of $Q_{x\mapsto a}$, the query answers may lose a value that is needed for $\tau$ (namely, $x=a$). Nevertheless, since this value is always $a$ in the answers to $Q_{x\mapsto a}$, we can assume that $\tau$ is defined on the answers of $Q_{x\mapsto a}$ by implicitly replacing $\tau$ with $\tau'$ defined by $\tau'(\vec t)=\tau({\vec t}_a)$ where ${\vec t}_a$ is obtained from $\vec t$ by inserting $a$ to every position where $x$ is in the head of  $Q$.

In summary, the following describes a recipe to instantiate the generic algorithm of \Cref{fig:generic-alg}:

\begin{mainbox}{Instantiating the generic algorithm for an aggregate function $\alpha$:} 
Let $A=\alpha\circ\tau\circ Q$ be an aggregate query, let $D$ be an input database, and let $f$ be a given fact.%
\begin{enumerate}
\item Define the content of the data structure $\P$.
\item Show how $\sumword_k(A,D)$ is computable from $\P[Q,D]$ for all $k=0,\dots,|D\endo|$.
\item Devise an algorithm for implementing $\combineU$
\item Devise an algorithm for implementing  $\combineX$.
\end{enumerate}
\end{mainbox}

\section{Count Distinct, Min, Max}\label{sec:all}

We start our analysis by considering the aggregation functions $\alpha \in \set{\aggcdis, \aggmin,\aggmax}$. First, we observe that all of them are constant per singleton. By \Cref{thm:hardness-const-per-single}, there is a value function $\tau$ such that Shapley value computation of $\alpha \circ \tau \circ Q$ is intractable for all non-\allhierarchical CQs $Q$. 
The following theorem states that being non-\allhierarchical is not only sufficient for the existence for such $\tau$, but also necessary.
\begin{theorem}\label{thm:max-count-distinct-classification}
Let $Q$ be a CQ without self-joins and $\alpha\in\set{\aggmin,\aggmax,\aggcdis}$. Denote by $A_\tau$ the aggregate query 
for $\alpha\circ\tau\circ Q$.
\begin{enumerate}
    \item If $Q$ is \allhierarchical, then $\Shapley(f,A_\tau)$ is computable in polynomial time, given a database $D$ and a fact $f$, for every localized $\tau$.
    \item Otherwise, there is a localized $\tau$ such that computing $\Shapley(f, A_\tau)$, given a database $D$ and a fact $f$, is \fpsharpp-complete. 
\end{enumerate}
\end{theorem}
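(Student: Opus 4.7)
Part (2) is an immediate corollary of \Cref{thm:hardness-const-per-single}. Each of $\aggmin$, $\aggmax$, and $\aggcdis$ is constant per singleton and takes a nonzero value on the bag $\multiset{1}$. Hence the constant function $\tau \equiv 1$, which is trivially localized on every atom of $Q$, already witnesses \fpsharpp-hardness of the Shapley value whenever $Q$ is not \allhierarchical.

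For part (1), my plan is to reduce $\sumword_k(A_\tau, D)$ of \Cref{eq:sum-A-k} to polynomially many instances of Boolean model counting by size. Since $\tau$ is localized on a single atom $R(\vec z)$ of $Q$, the multiset $\tau \circ Q(D')$ for any $D'\subseteq D$ takes values in a fixed set $V \subseteq \tau(R^D)$ of polynomial size. Enumerating $V$ as $v_1 < v_2 < \cdots < v_m$ and setting $v_0 \defeq -\infty$, I would use the linearity and telescoping identities
\[
\aggcdis(B) \;=\; \sum_{v \in V} \bigl[\,v \in B\,\bigr]
\qquad\text{and}\qquad
\aggmax(B) \;=\; \sum_{i=1}^{m} (v_i - v_{i-1}) \cdot \bigl[\,\exists b \in B : b \geq v_i\,\bigr],
\]
together with a symmetric identity for $\aggmin$, to expand
\[
\sumword_k(A_\tau, D) \;=\; \sum_{v \in V} \omega_v \cdot N_k\bigl(Q^{(v)}, D\bigr),
\]
where $\omega_v$ is an easily computed coefficient, $N_k(Q',D) \defeq \#\{E \in \binom{D\endo}{k} : Q'(E \cup D\exo) \neq \emptyset\}$, and $Q^{(v)}$ is obtained from $Q$ by restricting the atom $R(\vec z)$ to those tuples on which $\tau$ equals $v$ (for $\aggcdis$) or is at least $v_i$ (for $\aggmax$, dually for $\aggmin$). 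This restriction does not alter the atom hypergraph of $Q$, so the Boolean CQ obtained from $Q^{(v)}$ by treating every free variable as existential remains hierarchical -- this is exactly the step where \allhierarchicality of $Q$ is used.

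It remains to compute each $N_k\bigl(Q^{(v)},D\bigr)$ in polynomial time. For this I would invoke the classical algorithm of Dalvi and Suciu~\cite{DBLP:journals/vldb/DalviS07} for hierarchical Boolean CQs on tuple-independent probabilistic databases, which fits the template of \Cref{fig:generic-alg} with $\P[Q',D']$ taken to be the satisfaction probability. Assigning each endogenous fact the indeterminate probability $p$ and each exogenous fact probability $1$, the algorithm outputs a polynomial $P(p)$ of degree at most $|D\endo|$, from which the desired count is extracted via $N_k(Q^{(v)},D) = [p^k(1-p)^{|D\endo|-k}]\,P(p)$ (for instance by polynomial interpolation at $|D\endo|+1$ rational points). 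Summing over the polynomially many $v \in V$ then yields $\sumword_k(A_\tau,D)$ for each $k$, and the standard reduction of \Cref{sec:template} assembles $\Shapley(f,A_\tau)$. The main obstacle I anticipate is arguing uniformly across the three aggregates that the Boolean version of $Q^{(v)}$ is hierarchical and that the indeterminate-probability trick correctly separates size-$k$ contributions under the endogenous/exogenous partition; both points are essentially bookkeeping, but they are where the \allhierarchical hypothesis is tightly consumed.
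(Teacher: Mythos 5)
Your proposal is correct in substance. Part~(2) and the count-distinct case of Part~(1) coincide with the paper's own argument: the paper also derives hardness from \Cref{thm:hardness-const-per-single} via the constant (hence trivially localized) value function, and its \Cref{lem:reduction-cdis-boolean} is exactly your decomposition $\aggcdis(B)=\sum_v[v\in B]$, applied at the level of Shapley values via linearity rather than at the level of $\sumword_k$. For $\aggmin$/$\aggmax$, however, you take a genuinely different route. The paper instantiates the generic dynamic program of \Cref{fig:generic-alg} with a data structure $\P[Q',D'](a,k)$ counting the $k$-subsets whose maximum $\tau$-value is exactly $a$, and hand-crafts $\combineU$ and $\combineX$; this stays within the template that the paper reuses for average, quantile, and has-duplicates. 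You instead telescope the maximum into a sum of threshold indicators, reducing everything to the single primitive $N_k(Q',D)$ (number of $k$-subsets on which a Boolean hierarchical CQ is nonempty), computed by running the Dalvi--Suciu algorithm with an indeterminate probability $p$ and extracting coefficients in the basis $\{p^k(1-p)^{n-k}\}$. Your version is more uniform across the three aggregates and more modular (one Boolean counting primitive does all the work), at the cost of not extending to the order-sensitive statistics treated later in the paper, where one must track how many answers lie below, at, and above each value rather than only whether some answer clears a threshold.

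Two small repairs are needed. First, with $v_0=-\infty$ the coefficient $v_1-v_0$ is infinite; the correct telescoping is $\aggmax(B)=v_1\cdot[B\neq\emptyset]+\sum_{i=2}^m(v_i-v_{i-1})\cdot[\exists b\in B:b\geq v_i]$, which also handles the convention $\alpha(\emptyset)=0$ (and dually for $\aggmin$). Second, when you restrict $R$ to the tuples with $\tau\geq v_i$, the discarded $R$-facts remain endogenous players, so the probability polynomial must be read as ranging over all of $\binom{D\endo}{k}$; this is fine because irrelevant facts do not change the query's truth value, but either they must be kept in the database with probability $p$, or the counts over the relevant facts must be convolved with binomial coefficients for the irrelevant ones. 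Both points are the bookkeeping you anticipated and do not affect correctness.
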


In the next subsections, we will present polynomial-time algorithms for \allhierarchical queries.

\subsection{Count Distinct}

To get an intuition for the Shapley values for $\aggcdis$-aggregation, we first state a closed formula for simplest case of a single relational query. In that case, the Shapley value matches the intuitive contribution of a fact: one over the number of all facts with the same $\tau$-value.

\begin{restatable}{proposition}
{propclosedformulacountdisctinct}
Let $Q(\vec x) \la R(\vec x)$ and assume that all facts in $R$ are endogenous. Then
\[\Shapley(R(\vec t), \aggcdis \circ \tau \circ Q) = \frac{1}{\size{\set{R(\vec t') \, \mid \, \tau(\vec t') = \tau(\vec t)}}}.\]
\end{restatable}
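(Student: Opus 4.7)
The plan is to exploit the well-known random-order interpretation of the Shapley value. Recall that $\Shapley(f, \nu)$ equals the expected marginal gain $\nu(C_{<f} \cup \{f\}) - \nu(C_{<f})$ when the players are ordered uniformly at random and $C_{<f}$ denotes the set of players preceding $f$ in that order.

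First I would unfold the definition of $A = \aggcdis \circ \tau \circ Q$ for the single-atom query $Q(\vec x) \la R(\vec x)$. Since all facts are endogenous and $Q(D) = R^D$, for every coalition $C \subseteq D$ we have
\[
A(C) \;=\; \bigl|\{\tau(\vec t') : R(\vec t') \in C\}\bigr|.
\]
From this it is immediate that the marginal contribution of $f = R(\vec t)$ to any coalition $C$ not containing $f$ is
\[
A(C \cup \{f\}) - A(C) \;=\;
\begin{cases}
1 & \text{if no } R(\vec t') \in C \text{ satisfies } \tau(\vec t') = \tau(\vec t), \\
0 & \text{otherwise}.
\end{cases}
\]

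Next, let $S \defeq \{R(\vec t') \in D : \tau(\vec t') = \tau(\vec t)\}$ and write $s \defeq |S|$; note that $f \in S$. The marginal contribution of $f$ in a random permutation is $1$ exactly when none of the other $s-1$ members of $S$ appears before $f$, i.e.\ when $f$ is the earliest element of $S$ in the random order. Since every element of $S$ is equally likely to be the first one visited among the elements of $S$ (the permutation restricted to $S$ is uniform), this probability equals $1/s$. Therefore
\[
\Shapley(f, A) \;=\; \Pr[f \text{ precedes all other elements of } S] \;=\; \frac{1}{s},
\]
which is the claimed formula.

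There is no real obstacle here: the argument is a direct application of the random-order view of the Shapley value, made possible by the fact that the marginal contribution depends on the coalition only through its intersection with $S$. If one preferred, the same identity can alternatively be derived by plugging the above 0/1 marginals into \Cref{eq:shapley} and summing $q_{|C|}$ over coalitions $C$ avoiding $S \setminus \{f\}$; the combinatorial sum collapses to $1/s$ in the standard way, but the random-order proof is cleaner.
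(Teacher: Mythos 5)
Your proof is correct, but it takes a different route from the paper's. You argue directly from the random-order characterization of the Shapley value: the marginal contribution of $f=R(\vec t)$ is $1$ exactly when $f$ precedes every other fact with the same $\tau$-value, and by symmetry of the uniform permutation this event has probability $1/s$. The paper instead decomposes $\aggcdis$ as a sum of indicator utilities $\nu_a = \chi_a \circ \tau \circ Q$ over the distinct values $a$, and then invokes the axiomatic properties of the Shapley value: for $\nu_a$, every fact with $\tau$-value different from $a$ is a null player, the remaining facts are interchangeable (symmetry), and efficiency forces their common value to be $1/s$; linearity then assembles the result. Your argument is more elementary and self-contained, needing only the permutation interpretation rather than the axioms. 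The paper's decomposition into indicators is not gratuitous, though: it is exactly the device reused in the subsequent lemma reducing $\aggcdis$ over general \allhierarchical CQs to the Boolean case on the sub-databases $D_a$, so the proposition's proof doubles as a warm-up for the general algorithm. Both proofs are valid; yours would serve equally well as a standalone justification of the closed formula.
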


Now, we present an algorithm for general \allhierarchical CQs that uses a direct reduction to the Boolean case. It is based on the insight that we can express $\aggcdis$ as a sum over indicator variables $\chi_a$ with $\chi_a(B) = 1$ if $a \in B$ and $0$ otherwise. Since the indicator variables can be computed using Boolean queries, we obtain the following lemma.

\begin{restatable}{lemma}
{lemreductioncdisboolean}
\label{lem:reduction-cdis-boolean}
Let $Q$ be a CQ without self-joins, $\tau$ be localized on the atom $R(\vec z)$, and $D$ be a database. For $a \in \reals$, let $D_a$ be the database obtained from $D$ by removing all facts $R(\vec b)$ with $\tau(\vec b) \neq a$. Furthermore, let $\Qbool \la Q(\vec x)$. Let $A = \aggcdis \circ \tau \circ Q$. Then, for each fact $f \in D\endo$, we have
    \[
    \Shapley(f,A)[D] = \sum\nolimits_{a \in (\tau \circ Q)(D)} \Shapley(f,\Qbool)[D_a],
    \]
    with the convention that $\Shapley(f,\Qbool)[D_a] = 0$ for $f\notin D_a$.
\end{restatable}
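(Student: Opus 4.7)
The plan is to exploit the additive decomposition $\aggcdis(B) = \sum_{a \in \reals} \chi_a(B)$, where $\chi_a(B) = 1$ if $a \in B$ and $0$ otherwise, together with the linearity of the Shapley value in its utility argument. Define $A_a \defeq \chi_a \circ \tau \circ Q$, which is itself an \AggCQ since $\chi_a(\emptyset) = 0$. For any $E \subseteq D$, only values $a \in (\tau\circ Q)(D)$ contribute, so $A(E) = \sum_{a \in (\tau\circ Q)(D)} A_a(E)$. The utility $\nu$ induced by $A$ then decomposes correspondingly as $\nu = \sum_a \nu_a$, and linearity in \Cref{eq:shapley} yields $\Shapley(f, A)[D] = \sum_{a \in (\tau\circ Q)(D)} \Shapley(f, A_a)[D]$. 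The proof thus reduces to showing $\Shapley(f, A_a)[D] = \Shapley(f, \Qbool)[D_a]$ for each individual $a$.

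The next step is to recognize that the utility $A_a$ on $D$ coincides with $\Qbool$ on $D_a$. Because $\tau$ is localized on the atom $R(\vec z)$ and $Q$ has no self-joins (so $R$ appears exactly once in $Q$), the value $a$ lies in $(\tau\circ Q)(E)$ iff there is a homomorphism from $Q$ into $E$ whose $R$-image has $\tau$-value $a$, which is precisely the condition that $\Qbool$ is satisfied on $E \cap D_a$. Hence $A_a(E) = \Qbool(E \cap D_a)$ for every $E \subseteq D$. If $f \notin D_a$, then $f$ is an $R$-fact with $\tau$-value different from $a$, so $A_a(C \cup \set{f} \cup D\exo) = A_a(C \cup D\exo)$ for every $C \subseteq D\endo \setminus \set{f}$, giving $\Shapley(f, A_a)[D] = 0$, which matches the stipulated convention $\Shapley(f, \Qbool)[D_a] = 0$.

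The technical core is the case $f \in D_a\endo$. Here I would invoke the random-permutation form of the Shapley value, $\Shapley(f, \nu) = \mathbb{E}[\nu(C_f \cup \set{f}) - \nu(C_f)]$, where $C_f$ is the set of players appearing before $f$ in a uniformly random permutation of the player set. The marginal contribution of $f$ to $\nu_a$ depends on $C_f$ only through $C_f \cap D_a\endo$; moreover, restricting a uniformly random permutation of $D\endo$ to $D_a\endo$ yields a uniformly random permutation of $D_a\endo$. Consequently, the distribution of $C_f \cap D_a\endo$ under the $D\endo$-game matches the distribution of $C_f$ under the $D_a\endo$-game, so the two expectations agree and $\Shapley(f, A_a)[D] = \Shapley(f, \Qbool)[D_a]$. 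The main obstacle I foresee is making this distributional step rigorous; it can alternatively be discharged by verifying the combinatorial identity $\sum_{j=0}^{|D\endo| - |D_a\endo|} \binom{|D\endo| - |D_a\endo|}{j} q_{k+j} = q'_k$ between the Shapley coefficients of the two games, which is itself a consequence of the same exchangeability argument.
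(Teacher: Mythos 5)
Your proposal is correct and follows essentially the same route as the paper's proof: decompose $\aggcdis$ into the indicators $\chi_a$, apply linearity of the Shapley value, observe that $A_a(E)=\Qbool(E\cap D_a)$ (the paper isolates this as a separate lemma relating $\chi_a$ on $D$ to $\Qbool$ on $D_a$), and conclude that the facts outside $D_a$ are null players whose removal leaves the remaining Shapley values unchanged. The only difference is presentational: the paper cites the null-player property (including that null players can be deleted without affecting others' values) as a known axiom, whereas you re-derive that fact via the random-permutation/exchangeability argument, which is the standard proof of the same property.
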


This lemma yields a polynomial-time algorithm and shows \Cref{thm:max-count-distinct-classification} for $\alpha = \aggcdis$: If $Q$ is \allhierarchical, then $\Qbool$ is a Boolean hierarchical CQ and we can use the polynomial-time algorithm from \cite{DBLP:journals/lmcs/LivshitsBKS21} to compute $\Shapley(f,\Qbool)[D_a]$ for the at most $\size{D}^{\arity(Q)}$ values $a \in (\tau \circ Q)(D)$.

\subsection{Min and Max}

Next, we present a polynomial-time algorithm for computing the Shapley value for $\aggmax$-aggregation on \allhierarchical queries.  For $\aggmin$-aggregation, we can use the same algorithm by first negating all values in the database $D$ and then negating all obtained Shapley values.
For intuition, we again start with a closed formula for the simplest case of a single-atom query. 
\begin{restatable}{proposition}
{propclosedformulamax}
Let $Q(\vec x) \la R(\vec x)$. Assume all facts are endogenous. Let $q_k$ as in \Cref{eq:shapley} and for $a \in \reals$, let $\mult{\leq}{a}$ and $\mult{<}{a}$ denote the number of facts $R(\vec t)$ with $\tau(\vec t) \leq a$ or $< a$, respectively. %
    \[\Shapley(R(\vec t), \aggmax \circ \tau \circ Q)
    = 
    \frac{\tau(\vec t)}{\size{D}} + \sum_{\substack{a \in (\tau \circ Q)(D) \\ a < \tau(\vec t)}}  (\tau(\vec t) - a)  \sum_{k = 1}^{n-1} q_k \Bigg( \binom{\mult{\leq}{a}}{k} - \binom{\mult{<}{a}}{k} \Bigg)\text{.}\]
\end{restatable}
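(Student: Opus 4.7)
The plan is to expand the Shapley definition from \Cref{eq:shapley} and analyze the marginal contribution of $f = R(\vec t)$ to each coalition $C \subseteq D \setminus \{f\}$, grouping coalitions by the value of $\max_{R(\vec s) \in C} \tau(\vec s)$. Since all facts are endogenous, the player set $P$ is exactly $D$, so $|P| = n \defeq |D|$. For a single-atom query, $\nu(C) = \aggmax(\multiset{\tau(\vec s) \mid R(\vec s) \in C})$, with the convention $\aggmax(\emptyset) = 0$, and $\nu(C \cup \{f\}) = \max(\nu(C), \tau(\vec t))$.

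First I would split the computation according to three cases for the marginal contribution $\Delta(C) \defeq \nu(C \cup \{f\}) - \nu(C)$. If $C = \emptyset$, then $\Delta(C) = \tau(\vec t)$, and this term contributes $q_0 \cdot \tau(\vec t) = \tfrac{(n-1)!}{n!}\tau(\vec t) = \tau(\vec t)/|D|$, producing the leading term of the formula. If $C \neq \emptyset$ with $\nu(C) \geq \tau(\vec t)$, then $\Delta(C) = 0$ and such coalitions contribute nothing, so only coalitions with $\nu(C) = a$ for some $a < \tau(\vec t)$ matter, each contributing $\tau(\vec t) - a$ weighted by $q_{|C|}$.

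Next I would count, for each $a \in (\tau \circ Q)(D)$ with $a < \tau(\vec t)$ and each size $k \in \{1,\dots,n-1\}$, the number of $k$-subsets $C \subseteq D \setminus\{f\}$ with $\nu(C) = a$. The key observation is that because $a < \tau(\vec t)$, the fact $f$ is excluded automatically from any set of facts $R(\vec s)$ with $\tau(\vec s) \leq a$, so $\binom{D \setminus \{f\}}{k}$ and $\binom{D}{k}$ yield the same count when restricted to facts with $\tau$-value at most $a$. Using inclusion--exclusion, the number of such $k$-subsets is exactly $\binom{\mult{\leq}{a}}{k} - \binom{\mult{<}{a}}{k}$, namely the number of $k$-subsets drawn entirely from facts with $\tau$-value $\leq a$ minus those drawn entirely from facts with $\tau$-value $< a$.

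Finally I would assemble the pieces: multiplying the count by $q_k$ and the marginal $\tau(\vec t) - a$, then summing over $k$ and over the relevant values of $a$, yields the stated formula. The argument is essentially a careful bookkeeping exercise, and I do not expect a real obstacle; the only subtle point is to verify that the sum over $a$ can be restricted to $a \in (\tau \circ Q)(D)$ with $a < \tau(\vec t)$ (values of $a$ not achieved by any fact make the inclusion--exclusion difference vanish, and the value $a = \tau(\vec t)$ would contribute $0$ anyway). One might also be tempted to include a term for $C$ in which $\nu(C) = \tau(\vec t)$ is achieved by some fact other than $f$; but since $\Delta(C) = 0$ in that case, omitting it is correct.
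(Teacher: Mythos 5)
Your proposal is correct and follows essentially the same route as the paper's proof: treat $C=\emptyset$ separately to get the $\tau(\vec t)/|D|$ term, group the remaining coalitions by their maximum value $a<\tau(\vec t)$ (coalitions with maximum $\geq\tau(\vec t)$ contributing nothing), and count the $k$-subsets with maximum exactly $a$ as $\binom{\mult{\leq}{a}}{k}-\binom{\mult{<}{a}}{k}$. The only cosmetic difference is that you obtain this count directly by subtraction, whereas the paper first sums over the number of facts with $\tau$-value exactly $a$ and then collapses the sum via the Vandermonde identity.
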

The proposition illustrates, for instance, that even facts with small $\tau$-values contribute to the maximum aggregation.

Our algorithm for general \allhierarchical CQs is based on the generic algorithm shown in \Cref{fig:generic-alg}. 
Let $A = \aggmax \circ \tau \circ Q$ be such that $\tau$ is localized on an atom $R(\vec z)$.
We define the data structure $\P[Q',D']$ as follows:
\begin{itemize}
    \item If $D'$ contains the relation $R$ that determines $\tau$, then $\P[Q',D']$ is a map that maps every pair $(a,k)$ with $a \in (\tau \circ Q')(D')$ and $0 \leq k \leq \size{{D'}\endo}$ to the number of subsets $E \in \binom{{D'}\endo}{k}$ with $\max((\tau \circ Q')(E \cup {D'}\exo)) = a$. (Recall that $\tau$ is defined on $Q'$, as explained at the end of \Cref{sec:template}.)
    \item Otherwise, $\P[Q',D']$ maps each number $k = 0,1,\ldots,\size{{D'}\endo}$ to the number of subsets $E \in \binom{{D'}\endo}{k}$ such that $Q(E \cup {D'}\exo)$ is nonempty.
\end{itemize}

With this data structure at hand, we can simply compute $\sumword_k(A,D)$ 
via
\[
\sumword_k(A,D) = \sum_{E \in \binom{D\endo}{k}} (\aggmax \circ \tau \circ Q)(E \cup D\exo) = \sum_{a \in (\tau \circ Q)(D)} a \cdot \P[Q,D](a,k).
\]

We next describe the algorithms for $\combineU$ and $\combineX$ for this data structure. More details on the computation can be found in \Cref{apx:all}.

Let us first discuss the algorithm for  $\combineU(\P[Q', D_1], \P[Q', D_2])$ and let first $D_1$ and $D_2$ contain $R$. 
Let $E \in \binom{D_1\endo \cup D_2\endo}{k}$ and $E_i = E \cap D_i\endo$. Since $(\tau \circ Q')(E \cup D_1\exo \cup D_2\exo)$ is the union of the sets $(\tau \circ Q')(E_1 \cup D_1\exo)$ and $(\tau \circ Q')(E_2 \cup D_2\exo)$, its maximal value is equal to $a$ if and only if the maximal values of  
$(\tau \circ Q')(E_i \cup D_i\exo)$ for $i=1,2$ are less than or equal to $a$ and equality holds for at least one $i$. Hence, $\combineU(\P[Q',D_1], \P[Q',D_2])(a,k)$ can be computed by summing up $\P[Q',D_1](a_1,k_1) \cdot \P[Q',D_2](a_2,k_2)$ over all combinations with $k_1 + k_2 = k$ and $\max(a_1,a_2) = a$.

If $D_1$ and $D_2$ do not contain $R$, we can argue about the complement $\binom{\size{D'}}{k} - \P[Q',D']$ as follows. $Q'(E \cup D_1\exo \cup D_2\exo)$ is empty if and only if both $Q'(E_1 \cup D_1\exo)$ and $Q'(E_2 \cup D_2\exo)$ are empty. Hence, we sum up the product of the complements of $\P[Q',D_1](k_1)$ and $\P[Q',D_2](k_2)$ over all $k_1 + k_2 = k$.

The algorithm for $\combineX(\P[Q_1, D_1], \P[Q_2, D_2])$ is straightforward: 
For a Cartesian product to be nonempty, we need both sides to be nonempty and if $D_1$ contains $R$ (and $D_2$ does not), we need to attain $a$ as the maximum of $(\tau \circ Q_1)(E_1 \cup D_1\exo)$ and have $Q_2(E_2 \cup D_2\exo)$ nonempty.

Together, these algorithms show how to compute Shapley values for $\aggmax$-aggregation over \allhierarchical CQs in polynomial time and, hence, completes the proof of \Cref{thm:max-count-distinct-classification}.

\begin{remark}
A polynomial-time algorithm for \allhierarchical queries with $\aggmin$ and $\aggmax$ aggregation follows from the work of \citet{DBLP:journals/corr/abs-2506-16923}, who established, independently of our work, polynomial-time algorithms for $\aggmin$ and $\aggmax$ aggregation over d-trees. Such d-trees can, in turn, be obtained in polynomial time from a database and an \allhierarchical CQ, as shown in the context of knowledge compilation for query evaluation over probabilistic databases~\cite{DBLP:journals/pvldb/FinkHO12,DBLP:conf/sum/OlteanuH08}.
\end{remark}

\section{Average and Quantile}\label{sec:q}
In the previous section, we established that the tractability frontier for the aggregate functions min, max, and count-distinct coincides with the class of \allhierarchical CQs. In this section, we turn our attention to two other aggregate functions---average and quantile---and demonstrate that their Shapley value is inherently harder, as their tractability frontier is strictly narrower:
\begin{theorem}\label{thm:avg-quantile-classification}
Let $Q$ be a CQ without self-joins, $q\in(0,1)_{\mathbb{Q}}$ fixed, and $\alpha\in\set{\aggavg,\aggquantile_q}$. Denote by $A_\tau$ the aggregate query 
$\alpha\circ\tau\circ Q$.
\begin{enumerate}
    \item If $Q$ is \qhierarchical, then $\Shapley(f,A_\tau)$ is computable in polynomial time, given a database $D$ and a fact $f$, for every localized $\tau$.
    \item Otherwise, there is a localized $\tau$ such that computing $\Shapley(f, A_\tau)$, given a database $D$ and a fact $f$, is \fpsharpp-complete. 
\end{enumerate}
\end{theorem}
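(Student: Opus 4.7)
The theorem has two directions. For part~(1), my plan is to instantiate the generic dynamic-programming template of \Cref{fig:generic-alg}. Let $\P[Q,D]$ associate, to each pair $(n,k)$ where $n$ is an attainable answer count and $k\in\{0,\ldots,|D\endo|\}$, two quantities: (a) the number $N_{n,k}$ of $k$-subsets $E\subseteq D\endo$ for which $|Q(E\cup D\exo)|=n$; and (b) the accumulated value $V_{n,k}=\sum_{E:|Q(E\cup D\exo)|=n}\sum_{\vec t\in Q(E\cup D\exo)}\tau(\vec t)$. (Only (a) is needed on subqueries that omit the atom on which $\tau$ is localized.) Then $\sumword_k(\aggavg\circ\tau\circ Q,D)=\sum_{n\geq 1}V_{n,k}/n$, from which $\Shapley(f,A)$ follows by the derivation at the start of \Cref{sec:template}. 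The subroutine $\combineX$ is a convolution exploiting that on disjoint subqueries the answer set is a Cartesian product, so sizes multiply and $\tau$-sums distribute through the non-$\tau$ side's count. The subroutine $\combineU$ (root-variable split) exploits that the partitions' answer sets are \emph{disjoint}, which is exactly what \qhierarchical{} enforces: a free root decomposes answers by its own value, while an existential root is permitted only if no free variable has a strictly smaller atom set, so under q-hierarchy every free variable is itself a root and answer tuples still differ across split pieces. For $\aggquantile_q$, I would refine $\P$ to additionally track, for each threshold $a\in(\tau\circ Q)(D)$, how many answers have $\tau$-value at most $a$; the $q$-quantile at each $(n,k)$ is then read off.

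For part~(2), \Cref{thm:hardness-const-per-single} already yields hardness whenever $Q$ is not \allhierarchical, so the remaining case is $Q$ \allhierarchical but not \qhierarchical, meaning there are an existential $x$ and a free $y$ with $\atoms(Q,y)\subsetneq\atoms(Q,x)$. The plan is to prove hardness first on the canonical minimal witness $Q_0(y)\la R(x),S(x,y)$, taking $\tau$ as a function of $y$ alone (which is trivially localized on $S$), and then to lift to arbitrary such $Q$ by a gadget reduction that pads the remaining atoms of $Q$ with neutral exogenous facts so that $Q$-answers coincide with $Q_0$-answers. For $Q_0$ and $\alpha=\aggavg$, I would reduce from \probCClique{} (or another \fpsharpp-hard counting problem): given an input graph, build a family of databases parametrized by numerical weights so that $\Shapley(f,A)$ at several probe facts, collected across the choices of parameters, yields a linear system. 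Following the roadmap of the introduction, the coefficient matrix of this system is the Kronecker product of the Hilbert matrix and the Hankel matrix with factorial entries; each factor is classically invertible, so is their Kronecker product, and polynomially many Shapley evaluations recover the hard quantity. For $\aggquantile_q$, an analogous matrix-inversion reduction is used, but the coefficient matrix is designed to capture the order-based semantics of the quantile rather than the arithmetic mean.

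The main obstacle is the avg-hardness construction: arranging the $1/n$ factor from averaging together with the Shapley coefficients $q_{|C|}$ (which carry factorials and binomial denominators) so that the resulting coefficient matrix is exactly the Hilbert--Hankel Kronecker product---and not a singular near-miss. The lift from $Q_0$ to general \allhierarchical non-\qhierarchical $Q$ is mostly bookkeeping, but one must still verify that the localized $\tau$ continues to see the intended $y$-values through the padding. The quantile case, while conceptually similar, will likely require a fresh matrix analysis, because its rank-order semantics preclude directly reusing the Hilbert--Hankel argument.
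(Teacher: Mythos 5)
Your overall architecture matches the paper's: part~(1) by instantiating the generic template of \Cref{fig:generic-alg} with an answer-count-indexed data structure whose $\combineU$ step relies on the disjointness of answer sets that the \qhierarchical{} property guarantees (free roots split answers by their own value; once only existential roots remain, the subquery is Boolean), and part~(2) by combining \Cref{thm:hardness-const-per-single} for the non-\allhierarchical{} case with a padding reduction from the canonical two-atom witness and a matrix-inversion argument whose coefficient matrix is the Kronecker product of the Hilbert matrix and the factorial Hankel matrix --- exactly the paper's route. Your data structure for $\aggavg$, tracking per $(n,k)$ the number of witnessing subsets and the accumulated $\tau$-sum, is in fact leaner than the paper's, which indexes by a value $a$ and the triple $(\ell_<,\ell_=,\ell_>)$ so that a single structure serves both aggregates; your version is correct for $\aggavg$ (the $\tau$-sum scales by the other side's answer count under $\combineX$ and adds under disjoint unions), and your quantile refinement is essentially the paper's quintuple \emph{provided} the counts below, at, and above each threshold are tracked jointly per subset $E$ rather than as separate marginals --- the rank test needs $\ell_{<a}$ and $\ell_{\leq a}$ for the same $E$, so make that explicit.

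The one genuine gap is the hardness of $\aggquantile_q$ on the canonical witness. You propose ``an analogous matrix-inversion reduction'' and concede it ``will likely require a fresh matrix analysis''; that is not an argument, and it is not what the paper does. The paper (\Cref{lemma:qnt-simplest-hard}) avoids matrix inversion for the quantile entirely: it builds a single database in which $\aggquantile_q\circ\taugb{0}\circ\Qxyy$ exactly simulates the $0/1$ set-cover game --- padding with a calibrated block of zeros and per-element blocks of ones so that the $q$-quantile flips from $0$ to $1$ precisely when every element is covered --- and then invokes the known \fpsharpp-hardness of the Shapley value of that game. You would need either such a direct simulation or an actually exhibited invertible system for the order statistic; as written this component is missing. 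A minor further caveat: for $\aggavg$ the paper reduces from \#Set-Cover rather than \#Clique, and the Hilbert--Hankel structure arises from the two padding parameters of that specific construction, so the source problem is not freely interchangeable without redoing the invertibility computation.
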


The notion of \qhierarchical was introduced by Berkholz, Keppeler, and Schweikardt~\cite{DBLP:conf/pods/BerkholzKS17}, and has subsequently been used to characterize the fine-grained complexity of query-evaluation tasks in several studies (e.g.,~\cite{DBLP:conf/pods/BerkholzKS17,DBLP:conf/icdt/MunozRV24,DBLP:conf/sigmod/IdrisUV17,DBLP:conf/pods/BerkholzM21,DBLP:journals/pvldb/WangHDY23}). To the best of our knowledge, \Cref{thm:avg-quantile-classification} is the first to show that the \qhierarchical property also delineates a boundary of coarse-grained complexity.
In the remainder of this section, we prove \Cref{thm:avg-quantile-classification}. We begin with the positive side.

\subsection{An Algorithm for Q-Hierarchical Queries}

We first give a closed formula for Shapley values for $\aggavg$-aggregation over the simplest case of a single relational query. It is tempting to think that the Shapley value of a fact $R(x)$ is $\frac{\tau(\vec x)}{n}$, but this is not the case, since the denominator of the average changes with the size of the subset considered. 

\begin{restatable}{proposition}
{propclosedformulaavg}
Let $Q(\vec x) \la R(\vec x)$ and assume that all facts in $R$ are endogenous. Let $n = \size{D}$ and $H(n) = \sum_{k = 1}^n \frac{1}{k}$ denote the $n$-th partial sum of the harmonic series. Then
    \[\Shapley(R(\vec t), \aggavg \circ \tau \circ Q) 
    = 
    \frac{H(n)}{n}\cdot \tau(\vec t) + \frac{H(n) - 1}{n(n - 1)} \sum_{R(\vec t') \neq R(\vec t)} \tau(\vec t')\text{\,.}\]
\end{restatable}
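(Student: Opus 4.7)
The plan is to evaluate the Shapley formula of \Cref{eq:shapley} directly. Since every fact of $R$ is endogenous and there are no exogenous facts, the utility function on a coalition $C\subseteq D$ simplifies to
\[
\nu(C) \;=\; A(C) \;=\; \frac{1}{|C|} \sum_{R(\vec t') \in C} \tau(\vec t')
\]
for $C\neq\emptyset$, with $\nu(\emptyset)=0$. I would first compute the marginal contribution of $R(\vec t)$ to a coalition $C\subseteq D\setminus\set{R(\vec t)}$ of size $k$: for $k=0$ this is simply $\tau(\vec t)$, and for $k\geq 1$, writing $S(C)=\sum_{R(\vec t')\in C}\tau(\vec t')$, a direct algebraic manipulation of $\frac{S(C)+\tau(\vec t)}{k+1}-\frac{S(C)}{k}$ yields $\frac{k\tau(\vec t)-S(C)}{k(k+1)}$.

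Substituting into \Cref{eq:shapley} and swapping summation orders, I would then exploit the symmetry of $\binom{D\setminus\set{R(\vec t)}}{k}$. Each fixed fact $R(\vec t')\neq R(\vec t)$ appears in exactly $\binom{n-2}{k-1}$ of the $\binom{n-1}{k}$ subsets of size $k$, so $\sum_{C\in\binom{D\setminus\set{R(\vec t)}}{k}} S(C) = \binom{n-2}{k-1}\,T$, where $T = \sum_{R(\vec t')\neq R(\vec t)}\tau(\vec t')$. This cleanly separates $\Shapley(R(\vec t),A)$ into a coefficient times $\tau(\vec t)$ plus a coefficient times $T$, reducing the task to evaluating two scalar sums over $k$.

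Finally, I would simplify the two coefficients by factorial bookkeeping. The identity $q_k\binom{n-1}{k}=\tfrac{1}{n}$ (combined with the $k=0$ boundary term $q_0=\tfrac{1}{n}$) collapses the coefficient of $\tau(\vec t)$ into $\tfrac{1}{n}\bigl(1+\sum_{k=1}^{n-1}\tfrac{1}{k+1}\bigr)=\tfrac{H(n)}{n}$, while the identity $q_k\binom{n-2}{k-1}=\tfrac{k}{n(n-1)}$ reduces the coefficient of $T$ to $\tfrac{1}{n(n-1)}\sum_{k=1}^{n-1}\tfrac{1}{k+1}=\tfrac{H(n)-1}{n(n-1)}$ (up to sign). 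The main obstacle in this argument is purely bookkeeping: one must carefully isolate the $k=0$ boundary case (where the unified formula for the marginal contribution is undefined), and then verify that the factorials packaged inside $q_k$ combine with the binomial coefficients to yield the clean closed forms $\tfrac{1}{n}$ and $\tfrac{k}{n(n-1)}$ so that the harmonic partial sum $H(n)$ emerges naturally.
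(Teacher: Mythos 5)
Your proposal is correct and follows essentially the same route as the paper's proof: compute the marginal contribution $\frac{k\,\tau(\vec t)-S(C)}{k(k+1)}$ for $|C|=k\geq 1$ (treating $k=0$ separately), use the fact that each fixed $R(\vec t')$ lies in $\binom{n-2}{k-1}$ of the $\binom{n-1}{k}$ size-$k$ coalitions, and collapse $q_k\binom{n-1}{k}=\frac1n$ and $q_k\binom{n-2}{k-1}=\frac{k}{n(n-1)}$ to make $H(n)$ appear. One remark on your ``(up to sign)'' hedge: carrying your (correct) marginal contribution through, the coefficient of $T=\sum_{R(\vec t')\neq R(\vec t)}\tau(\vec t')$ comes out as $-\frac{H(n)-1}{n(n-1)}$, which agrees with the final line of the paper's own derivation (and with a direct check at $n=2$); the ``$+$'' in the displayed statement of the proposition appears to be a sign typo rather than a flaw in your argument.
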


We omit the closed formula for Shapley values for $\aggquantile$ for $Q(x) \la R(x)$, since  it is too complicated to provide an advantage over a general algorithm. Instead, we turn our attention to the general algorithm by instantiating the generic algorithm from \Cref{fig:generic-alg} for \qhierarchical CQs $Q$. 

To take full advantage of the structure of \qhierarchical CQs, we need to specify the following: In step \ref{alg:choice-of-root}, if $Q'$ has several root variables, the algorithm chooses a \emph{free} root variable if there is one. If there are only existentially quantified root variables, then $Q'$ needs to be Boolean by the \qhierarchical property.

We next present a data structure $\P[Q',D']$ that allows us to compute $\sumword_k(\alpha \circ \tau \circ Q,D)$ for $\alpha = \aggavg$ as well as for $\alpha = \aggquantile_q$. It is defined as follows.
\begin{itemize}
    \item If $D'$ contains the relation $R$ that determines $\tau$, then $\P[Q',D']$ is a map that maps every quintuple $(a,k, \ell_{<}, \ell_{=}, \ell_{>})$ with $a \in (\tau \circ Q')(D')$, $0 \leq k \leq \size{D\endo}$, and $0 \leq \ell_{<} + \ell_{=} + \ell_{>} \leq \size{Q'(D')}$ to the number of subsets $E \in \binom{{D'}\endo}{k}$ with 
    the property that $(\tau \circ Q')(E \cup {D'}\exo)$ contains the element $a$ exactly $\ell_{=}$ times, exactly $\ell_{<}$ elements $b$ with $b < a$, and  exactly $\ell_{>}$ elements $c$ with $c > a$. (Recall that $\tau$ is defined on $Q'$, as explained at the beginning of \Cref{sec:template}.)
    \item Otherwise, $\P[Q',D']$ maps each pair $(k,\ell)$ with $0 \leq k \leq  \size{{D'}\endo}$ and $0 \leq \ell \leq \size{Q'(D')}$ to the number of subsets $E \in \binom{{D'}\endo}{k}$ such that $\size{Q'(E \cup {D'}\exo)} = \ell$.
\end{itemize}

To compute $\sumword_k(\aggavg \circ \tau \circ Q, D)$, we write the sum over a bag $B$ as the sum over all elements of the set underlying $B$ times their multiplicity in $B$. Using this in the formula for the average yields
\[
\sumword_k(\aggavg \circ \tau \circ Q, D) = \sum_{a \in (\tau \circ Q)(D)} \sum_{0 \leq \ell_< + \ell_= + \ell_> \leq \size{Q(D)}} \frac{a \cdot \ell_=}{\ell_< + \ell_= + \ell_> } \P[Q,D](a,k,\ell_<,\ell_=,\ell_>)\,.
\]

To compute $\sumword_k(\aggquantile_q \circ \tau \circ Q, D)$, we make the following trivial observation. $a$ is the $i$-th smallest element of a bag $B$ if $B$ contains less than $i$ elements smaller than $a$ and at least $i$ elements smaller than or equal to $a$. We further decompose $\aggquantile_q(B)$ into a sum over all elements of the set underlying $B$ times their contribution (either 0, 1 or $\tfrac{1}{2}$) to $\aggquantile_q(B)$.
Applying this to $B = (\tau \circ Q)(E \cup D\exo)$ 
allows us to compute $\sumword_k(\aggquantile_q \circ \tau \circ Q, D)$ via
\[
\sumword_k(\aggquantile_q \circ \tau \circ Q, D) = \sum_{a \in (\tau \circ Q)(D)} \;\sum_{0 \leq \ell_< + \ell_= + \ell_> \leq \size{Q(D)}} a \cdot f_q(\ell_<, \ell_=, \ell_>) \cdot \P[Q,D](a,k,\ell_<,\ell_=,\ell_>)\,.
\]
where we denote the indicator function of a condition $C$ by $\chi(C)$ to define $f_q(\ell_<, \ell_=, \ell_>)$ as
\begin{align*}
f_q(\ell_<, \ell_=, \ell_>) \defeq 
\frac{1}{2}\Big(
&\chi\big(\ell_< < \lceil q (\ell_< + \ell_= + \ell_>)\rceil\big) 
\cdot \chi\big(\ell_< + \ell_= \geq \lceil q (\ell_< + \ell_= + \ell_>)\rceil\big) \\
&+ \chi\big(\ell_< < \lfloor q (\ell_< + \ell_= + \ell_>) + 1\rfloor\big) \cdot \chi\big(\ell_< + \ell_= \geq \lfloor q (\ell_< + \ell_= + \ell_>) + 1\rfloor\big)
\Big) \,.
\end{align*}

We show how to compute $\combineU$ and $\combineX$ in polynomial time in \Cref{apx:q}. The algorithms are similar to those in the previous section with the addition that they use the fact that for non-Boolean $Q'$, the sets $Q'(D_1)$ and $Q'(D_2)$ are disjoint.

\subsection{Hardness}\label{sec:q-hardness}
We first show a reduction from the case of the following CQ, which is the simplest CQ that is \allhierarchical but not q-hierarchical:
\begin{equation}
\Qxyy(x) \la R(x,y), S(y)
\label{eq:qxyy}
\end{equation}
The following lemma gives a general reduction from $\Qxyy$. 

\begin{restatable}{lemma}
{lemmageneralreductionnonqhierarchical}
\label{lemma:general-reduction-non-q-hierarchical}
Consider an \AggCQ of the form $\alpha\circ\tau\circ\Qxyy$, and
let $Q_0$ be a CQ without self-joins, such that $Q_0$ is \allhierarchical but not q-hierarchical. 
There is a value function $\tau_0$, having the form $\tau \circ \tauidi{i}$, and a polynomial-time algorithm that, given an input database $D$ for $\Qxyy$, constructs an input database $D_0$ for $Q_0$ and a function $h:D\endo\rightarrow D_0\endo$ such that 
$$\Shapley(f, \alpha\circ\tau\circ\Qxyy)=
\Shapley(h(f), \alpha\circ\tau_0\circ Q_0)
$$
for every endogenous fact $f\in D\endo$. 
\end{restatable}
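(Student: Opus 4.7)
The plan is to embed $\Qxyy$ into $Q_0$ by locating inside $Q_0$ a pair of variables that plays the roles of the free $u$ and existential $v$ of $\Qxyy(u)\la R(u,v), S(v)$, and then populating the relations of $Q_0$ from $R$- and $S$-facts of $D$ while pinning all other variables to a dummy constant. Since $Q_0$ is all-hierarchical but not q-hierarchical, there exist an existential variable $x_0$ and a free variable $y_0$ with $\atoms(Q_0,y_0)\subsetneq\atoms(Q_0,x_0)$. Partition $\atoms(Q_0)$ into $\mathcal{A}\defeq\atoms(Q_0,y_0)$ (all of which also contain $x_0$), $\mathcal{B}\defeq\atoms(Q_0,x_0)\setminus\mathcal{A}$ (nonempty by the strict containment), and $\mathcal{C}$ (the remaining atoms). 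Fix a distinguished constant $0\in\consts$ and designated atoms $A^\star\in\mathcal{A}$ and $B^\star\in\mathcal{B}$. Build $D_0$ as follows: (i) for each $R(a,b)\in D$ and each $A\in\mathcal{A}$, include the fact over $A$'s relation with $a$ at the $y_0$-position, $b$ at the $x_0$-position, and $0$ elsewhere, and mark it endogenous iff $A=A^\star$ and $R(a,b)\in D\endo$; (ii) for each $S(b)\in D$ and each $B\in\mathcal{B}$, include the fact over $B$'s relation with $b$ at the $x_0$-position and $0$ elsewhere, and mark it endogenous iff $B=B^\star$ and $S(b)\in D\endo$; (iii) for each $C\in\mathcal{C}$, include the single all-$0$ fact, always exogenous. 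Let $h\colon D\endo\to D_0\endo$ be the induced bijection taking each endogenous $D$-fact to its unique endogenous $A^\star$- or $B^\star$-copy, and set $\tau_0\defeq\tau\circ\tauidi{i}$, where $i$ is the position of $y_0$ in the tuple of free variables of $Q_0$.

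The correctness argument rests on a bijection, for every $C\subseteq D_0\endo$, between $\hom(Q_0, C\cup D_0\exo)$ and $\hom(\Qxyy, h^{-1}(C)\cup D\exo)$ given by $h_0\mapsto(h_0(y_0),h_0(x_0))$. The pivotal step is the ``zeroing-out'' claim: any homomorphism $h_0$ from $Q_0$ into $D_0$ must send every variable $z\notin\set{x_0,y_0}$ to $0$. This holds because in every atom of $Q_0$ where $z$ appears, $z$ sits at a position distinct from those of $x_0$ and $y_0$, and by construction every fact of $D_0$ carries $0$ at all such positions. Thus $h_0$ is determined by the pair $(a,b)\defeq(h_0(y_0),h_0(x_0))$, and its validity in $C\cup D_0\exo$ reduces---since every non-star copy of an $\mathcal{A}$- or $\mathcal{B}$-fact is permanently in $D_0\exo$, and every $\mathcal{C}$-fact is in $D_0\exo$---to requiring that the $A^\star$-copy of $R(a,b)$ and the $B^\star$-copy of $S(b)$ both lie in $C\cup D_0\exo$, which is in turn equivalent to $R(a,b)\in h^{-1}(C)\cup D\exo$ and $S(b)\in h^{-1}(C)\cup D\exo$, i.e., to $(a,b)$ being a homomorphism of $\Qxyy$. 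Moreover, the answer tuple $\vec t$ produced by $h_0$ carries $a$ at position $i$ and $0$ elsewhere, so $\tau_0(\vec t)=\tau(a)$, matching $\tau$ applied to the corresponding $\Qxyy$-answer. Hence $\multiset{\tau_0(\vec t)\mid\vec t\in Q_0(C\cup D_0\exo)}=\multiset{\tau(a)\mid a\in\Qxyy(h^{-1}(C)\cup D\exo)}$, and applying $\alpha$ yields equal aggregates on both sides.

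The reduction is polynomial-time since $\size{D_0}=O((\size{\mathcal{A}}+\size{\mathcal{B}})\cdot\size{D}+\size{\mathcal{C}})$ and the factor depending on $Q_0$ is a constant. Because $h$ is a bijection between $D\endo$ and $D_0\endo$, the coefficients $q_{\size{C}}$ in \Cref{eq:shapley} match term-wise under $C\leftrightarrow h^{-1}(C)$, and together with the equality of aggregates above this gives $\Shapley(f,\alpha\circ\tau\circ\Qxyy)=\Shapley(h(f),\alpha\circ\tau_0\circ Q_0)$ for every $f\in D\endo$. The main obstacle in the formalization is discharging the zeroing-out claim cleanly: one must perform a case analysis on where a variable $z\notin\set{x_0,y_0}$ can live (only in $\mathcal{C}$-atoms; in $\mathcal{B}$-atoms without $y_0$; or in $\mathcal{A}$-atoms alongside $x_0$ and $y_0$) and use the all-hierarchical structure to argue that distinct variables indeed occupy distinct positions within each atom, so that the construction does not accidentally align a non-$\set{x_0,y_0}$ variable with the $x_0$- or $y_0$-position of any fact in $D_0$.
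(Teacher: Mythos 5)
Your proposal is correct and follows essentially the same route as the paper's proof (via its Lemma~\ref{lemma:generalization-hardness-cq}): pick a free/existential pair witnessing the failure of q-hierarchy, copy each $R$- and $S$-fact into the atoms containing both variables and only the existential one respectively, pad the remaining positions with a fixed constant, carry the endogenous status on one designated copy per relation, and verify a utility-preserving bijection of sub-databases and answers. The only deviations are cosmetic (you populate every copy from every $R$/$S$-fact rather than restricting non-designated atoms to join tuples, and your ``$0$ elsewhere'' should read ``$0$ at every remaining \emph{variable} position'' so as not to overwrite constants occurring in atoms); also, the ``zeroing-out'' step you flag as the main obstacle is immediate, since distinct variables occupy distinct positions in an atom by definition, with no appeal to the all-hierarchical structure needed.
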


Hence, it suffices to prove the following, and we do so in \Cref{apx:q}. 
(Recall that the value functions $\taugb{b}$  and $\taurelu$ are defined in \Cref{eq:taurelu,eq:tauidi}, respectively.)

\begin{restatable}{lemma}
{lemmaavgquantilesimplesthard}
\label{lemma:avg-quantile-simplest-hard}
For $A=\aggavg\circ\taurelu\circ \Qxyy$ and
$A=\aggquantile_q\circ\taugb{0}\circ \Qxyy$ for fixed $q \in (0,1)_{\mathbb Q}$, computing $\Shapley(f, A)$, given a database $D$ and a fact $f$, is \fpsharpp-complete. 
\end{restatable}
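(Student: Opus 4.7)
The plan is to establish \fpsharpp-hardness by reducing from a known \#P-hard counting problem---say \probCClique or \#IndependentSet---to Shapley computation for the two \AggCQs in the statement. Both reductions follow the matrix-inversion template familiar from the Shapley-hardness literature (in particular, \cite{DBLP:journals/lmcs/LivshitsBKS21}), but the inverted matrix is query/aggregate-specific: a Hilbert matrix for the $\aggavg$ case and a Hankel matrix with factorial entries for the $\aggquantile_q$ case, matching the hints given in the contributions.

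First, I would design the encoding gadget. Given a graph $G=(V,E)$, I build a database $D$ over the schema of $\Qxyy(x)\la R(x,y),S(y)$ as follows. The domain of $y$ is $V$; $S$ holds the facts $\{S(v)\mid v\in V\}$, all marked endogenous; and $R$ contains a parameterized block of facts of the form $R(a_j,v)$ for a bounded collection of ``$x$-markers'' $a_j$ together with edge-encoding facts linking $a_j$ to the neighborhoods of vertices in $V$. The target fact $f$ is chosen inside $S$ (or inside a distinguished copy of $R$), and I tune the exogenous/endogenous split so that, for each subset $E\subseteq D\endo$ of size $k$, the set of query answers $Q(E\cup D\exo)$ is in one-to-one correspondence with the cliques (resp.\ independent sets) of $G$ of some controlled size determined by $E$. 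Under $\taurelu$ or $\taugb{0}$, the numeric value attached to each answer $x=a_j$ is either $a_j$ or the indicator $[a_j>0]$, both easy to control by the choice of $a_j$.

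Next, I expand the Shapley expression via the identity from \Cref{sec:template}, writing $\Shapley(f,A)=\sum_{k}q_k(\sumword_k(A,F)-\sumword_k(A,G))$, and substitute the definition of $\alpha$. For $\aggavg$, the inner sum takes the shape
\begin{equation*}
\sumword_k(A,D)\;=\;\sum_{a_j}a_j\sum_{\ell\ge 1}\frac{1}{\ell}\cdot N_{k,\ell}^{(j)},
\end{equation*}
where $N_{k,\ell}^{(j)}$ counts endogenous $k$-subsets that yield an answer set of size $\ell$ containing $a_j$. For $\aggquantile_q$, the $1/\ell$ kernel is replaced by a threshold indicator on $\ell$ induced by the $q$-quantile rule, so that the analogous identity involves indicator weights of the form $\chi(\ell\le \lfloor q\cdot m\rfloor)$ for a size $m$ of the full answer set. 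The second step is therefore to exhibit a parameterization of the reduction (for instance, by varying the number of identical copies of a chosen $R$-fact, or by varying the number of endogenous $S$-facts attached to an auxiliary vertex) so that evaluating the Shapley oracle on polynomially many constructed databases produces polynomially many linear equations in a common vector of unknowns $(N_1,\dots,N_n)$ counting cliques/independent sets by size.

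Finally, I would verify that the coefficient matrix of this linear system is invertible. For $\aggavg\circ\taurelu\circ\Qxyy$, the $1/\ell$ kernel combined with the binomial weights $q_k$ and the parameterization variable yields, after standard factoring out of binomial normalizers, an entry pattern of the form $1/(i+k)$: this is a Hilbert matrix, classically nonsingular, and its inverse has polynomially bounded bit-complexity, so Gaussian elimination recovers each $N_k$ in polynomial time from the oracle answers. For $\aggquantile_q\circ\taugb{0}\circ\Qxyy$, the threshold structure pushes the coefficient matrix toward a Hankel matrix whose entries are factorials (or ratios thereof); invertibility follows from a standard Hankel-determinant evaluation. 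Once the $N_k$ are extracted, counting cliques/independent sets of the chosen size is recovered, contradicting \#P-hardness unless Shapley computation is itself \fpsharpp-hard. Membership in \fpsharpp is argued by the template already pointed to in the preliminaries, via the standard reductions of probability computation.

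To assemble the full lemma, one then applies \Cref{lemma:general-reduction-non-q-hierarchical} to transport this hardness from $\Qxyy$ to every \allhierarchical-but-not-\qhierarchical CQ, but the lemma itself only requires the $\Qxyy$ instance. The main obstacle will be Step 3: concretely verifying that the parameterization of the gadget yields exactly the Hilbert (resp.\ Hankel-factorial) structure, and in particular handling $\aggquantile_q$ for an \emph{arbitrary} rational $q\in(0,1)$---the threshold $\lceil q\cdot m\rceil$ breaks symmetry in $\ell$, so one must carefully choose the copies parameter to line up the nonzero entries of the coefficient matrix with a genuinely nonsingular Hankel pattern rather than a degenerate truncation. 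A secondary subtlety for $\aggavg$ is that $\taurelu$ introduces a contribution from both positive and zero $a_j$ values; one must pick the $x$-markers so that the zero-valued contributions vanish from the relevant rows, leaving only the $1/\ell$ structure needed for the Hilbert identification.
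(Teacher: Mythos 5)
There are genuine gaps here, and they stem from the source problem you chose and from underestimating how many independent parameters the linear system must disentangle. First, the gadget: for $\Qxyy(x)\la R(x,y),S(y)$, an answer $a$ is produced as soon as \emph{some} $y$ with $S(y)$ present is connected to $a$ via $R$, so the answer set of a coalition of $S$-facts is a \e{union of neighborhoods}. This is a coverage semantics, not an intersection semantics, which is why the paper reduces from \#Set-Cover rather than from \probCClique or \#IndependentSet; your proposed one-to-one correspondence between answer sets and cliques has no concrete construction behind it and does not fit the monotone-union behavior of this query. Second, for $\aggavg$ the quantity $A(E\cup D\exo)$ couples the answer-set size $\ell$ (the denominator of the average) with the coalition size $k$ (which governs $q_k$), so the unknowns you must recover are inherently \e{doubly} indexed --- in the paper these are $Z_{i,j}$, the number of $j$-element subcollections covering exactly $i$ elements --- and the reduction needs a \e{two}-parameter family of databases $D_{q,r}$ whose coefficient matrix is the Kronecker product of the Hilbert matrix and the Hankel matrix with factorial entries. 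Your singly-indexed vector $(N_1,\dots,N_n)$ and single Hilbert matrix cannot separate these two effects; you even write $N^{(j)}_{k,\ell}$ at one point and then silently collapse the indices.

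Third, your plan for $\aggquantile_q$ keeps the matrix-inversion template and correctly flags the arbitrary rational threshold as the main obstacle --- but you leave that obstacle unresolved, and it is exactly where this route gets hard. The paper takes a different and much simpler path for the quantile: it builds a single database, with multiplicities calibrated in terms of $q=a/b$ (blocks of size $b(b-a)$ per covered element against a fixed mass of $ban$ zero-valued answers), so that $\aggquantile_q\circ\taugb{0}\circ\Qxyy$ evaluates to $1$ on a coalition exactly when the corresponding subcollection covers all of $\X$, and to $0$ otherwise. This makes the aggregate query \e{utility-equivalent} to the set-cover game of \cite{gilad2024importanceparametersdatabasequeries}, whose Shapley value is already known to be \fpsharpp-complete, so no linear system and no Hankel determinant are needed there at all. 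Relatedly, you misplace the Hankel-factorial matrix: in the paper it belongs to the $\aggavg$ argument (as one factor of the Kronecker product), not to the quantile case.
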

\begin{proofsketch}
For $\aggavg$, the proof is by a reduction from the \e{\#Set-Cover} problem: given a set $\X$ of elements, a collection $\Y$ of subsets of $\X$, determine the number of collections $\C\subseteq\Y$ that cover all of $\X$ (i.e., $\X=\cup\C$). This problem is known to be \#P-complete~\cite{DBLP:journals/siamcomp/ProvanB83}. We construct an equation system over the variables $Z_{i,j}$ that, when solved, hold number of subsets of $\Y$ with $j$ members, covering precisely $i$ elements from $\X$. To construct this equation system, we define several databases $D'$ and use their $\Shapley(f,A)$ to construct one equation (similarly to past proofs of hardness of the Shapley value, e.g.,~\cite{DBLP:conf/stacs/AzizK14}). To crux of the proof is to show that the resulting matrix is invertible; we show it by establishing that the matrix is the Kronecker product of the Hilbert matrix ($M_{i,j}=\frac{1}{i+j-1}$) and the Hankel matrix with factorial entries ($M_{i,j}=(i+j)!$). 

For $\aggquantile$, we show how to construct a database $D$ so that $A(D)$ simulates (in an intuitive sense) the Set-Cover game (where the goal of a coalition is to cover all elements) for  $A=\aggquantile_q\circ\taugb{0}\circ \Qxyy$. Hardness for this game was established in \cite{gilad2024importanceparametersdatabasequeries}.
\end{proofsketch}

\section{Q-Hierarchy might not be Enough}\label{sec:duplicates}
Given the results of the previous section, one might wonder whether the restriction of being \qhierarchical suffices for the tractability of every (natural) aggregate function. More specifically, suppose that $\alpha$ is an aggregate function such that $\Shapley(f,A)$ is computable in polynomial time whenever the underlying CQ of $A$ has a single relation; does it imply the tractability of $\Shapley(f,A)$ when the underlying CQ is \qhierarchical without self-joins? 

This is not the case. In this section, we establish the negative answer by studying the aggregate function $\aggduplicates$ (has-duplicates) and showing that the tractability frontier of the Shapley value is a class stricter than that of the \qhierarchical CQs.

The stricter class of CQs is that of the \e{strongly \qhierarchical} CQs, or \e{\sqhierarchical} for short.
Let $Q$ be a CQ. We say that $Q$ is \sqhierarchical if $Q$ is \allhierarchical and $Q$ has no head variable whose set of atoms is strictly contained in that of another variable; that is, for all variables $x$ and $y$ of $Q$, if $\atoms(Q,x)\subsetneq\atoms(Q,y)$, then $x$ is existential. Note that an \sqhierarchical CQ is necessarily \qhierarchical. For illustration, the following CQs are \sqhierarchical:
$$
Q_1(x) \la R(x,y),S(x) 
\quad\vrule\quad
Q_2(x,y) \la R(x,y),S(x,y,z)
\quad\vrule\quad
Q_3(x,z) \la R(x,y),S(x),T(z)
$$
In contrast, the \qhierarchical CQ $Q_4(x,y) \la R(x,y),S(x)$ is \e{not} \sqhierarchical, since $\atoms(y)\subsetneq \atoms(x)$ and, yet, $y$ is a head variable.
In this section, we show that:

\begin{restatable}{theorem}
{thmduplicatesclassification}
\label{thm:duplicates-classification}
Let $Q$ be a CQ without self-joins.
\begin{enumerate}
    \item If $Q$ is \sqhierarchical, then $\Shapley(f,A)$ is computable in polynomial time for $A=\aggduplicates\circ\tau\circ Q$ on every localized $\tau$.
    \item Otherwise, there is a localized $\tau$ such that computing $\Shapley(f, A)$, given a database $D$ and a fact $f$, is \fpsharpp-complete. 
\end{enumerate}
\end{restatable}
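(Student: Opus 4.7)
The plan is to prove both directions by pairing the algorithmic template of \Cref{sec:template} with a tailored hardness construction. The unifying structural observation is that in an sq-hierarchical CQ, every free variable has a maximal $\atoms(\cdot)$ set, so whenever \Cref{fig:generic-alg} picks a root variable at step \ref{alg:choice-of-root}, we can always select a \emph{free} one (and when $Q$ is Boolean, $\aggduplicates$ evaluates to $0$ trivially since $|Q(D)| \le 1$). Slicing by such a free root partitions the sub-database into fact-disjoint slices whose answer tuples differ in the root coordinate, and alternating this slicing with cross-product decomposition eventually reaches Boolean sub-CQs whose corresponding conceptual leaves (indexed by tuples of free-variable values) are pairwise fact-disjoint within each slice branch.

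For the positive direction, I would instantiate the template with a data structure $\P[Q', D']$ that tracks, when $Q'$ contains the $\tau$-atom, for each $k$ and each candidate $\tau$-value $v$ in the active domain, the number of subsets $E \in \binom{{D'}\endo}{k}$ producing multiplicity $0$, $1$, or $\ge 2$ of $v$ in $(\tau \circ Q')(E \cup {D'}\exo)$, together with the numbers of subsets for which $Q'(E \cup {D'}\exo)$ is empty and for which the bag already contains a duplicate; when $Q'$ does not contain the $\tau$-atom, $\P[Q', D']$ instead records a three-state count of subsets classified by $|Q'(E \cup {D'}\exo)| \in \{0, 1, \ge 2\}$. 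The subroutine $\combineU$ convolves per-value three-state counts across fact-disjoint slices (with any total $\ge 2$ promoting to the duplicate state), while $\combineX$ on a cross product $Q_1 \times Q_2$ with $\tau$ localized on $Q_1$ performs a three-case split on the $Q_2$-side, since each value of $Q_1$'s bag is replicated $|Q_2(E_2 \cup D_2\exo)|$ times in the combined bag. At the top level, $\sumword_k(A,D)$ is obtained by grouping leaves by their $\tau$-value and taking a product of per-group ``at most one active'' generating functions---valid because fact-disjointness makes per-group activation events mutually independent, so per-$\tau$-value marginals determine the joint no-duplicate count.

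For the negative direction, I would dispatch on the reason $Q$ fails sq-hierarchy. If $Q$ is not $\exists$-hierarchical, \Cref{thm:livshits-hardness} immediately applies since $\aggduplicates \circ \tau \circ Q$ is non-constant for a suitable $\tau$. If $Q$ is all-hierarchical but not q-hierarchical, I would adapt \Cref{lemma:general-reduction-non-q-hierarchical} to reduce from hardness of $\aggduplicates \circ \tau \circ \Qxyy$. The remaining case---$Q$ q-hierarchical but not sq-hierarchical---is the novel one: some free variable $y$ has $\atoms(Q, y) \subsetneq \atoms(Q, x)$ for another free variable $x$. For this case I would designate $Q_4(x, y) \leftarrow R(x, y), S(x)$ as the canonical hard CQ, prove a general reduction lemma (mirroring \Cref{lemma:general-reduction-non-q-hierarchical}) embedding any such $Q$ into $Q_4$, and then establish \fpsharpp-hardness for $\aggduplicates \circ \tau \circ Q_4$ by reduction from \e{\#Set-Cover} via a matrix-inversion argument in the spirit of \Cref{lemma:avg-quantile-simplest-hard}. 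The main obstacle I anticipate is designing a localized $\tau$ on $Q_4$ that forces $\tau$-value collisions precisely mirroring cover events, and then verifying invertibility of the resulting coefficient matrix---likely requiring a structured matrix analogous to the Hilbert--Hankel Kronecker product used for $\aggavg$.
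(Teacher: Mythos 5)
Your positive direction is essentially the paper's algorithm: for a connected \sqhierarchical CQ every free variable occurs in every atom, so each fact already determines the answer tuple (hence the $\tau$-value) it can contribute to, the endogenous facts partition by $\tau$-value, and the no-duplicates count is a convolution over these groups of ``at most one answer'' counts (the paper's $\P^0+\P^1$); the disconnected case is handled by the same three-way split on the component not carrying $\tau$. Your bookkeeping is a bit heavier than necessary, but the idea is the same and it works.

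The negative direction, however, has a genuine gap: your case split omits the CQs that are \exhierarchical but \emph{not} \allhierarchical, e.g.\ $Q(x)\la R(x),S(x,y),T(y)$. These fall into none of your three cases, they are not reached by \Cref{thm:livshits-hardness}, and---unlike for min, max, average, and quantile---they are \emph{not} covered by \Cref{thm:hardness-const-per-single}, because $\aggduplicates$ is not constant per singleton ($\aggduplicates(\multiset{1,1})\neq\aggduplicates(\multiset{1})$). A dedicated argument is needed; the paper's \Cref{lemma:dup-hard-non-all} takes a constant $\tau$ and plants one extra exogenous all-constants answer so that the hard Boolean query is true iff $Q$ has at least two answers iff the bag has a duplicate. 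Without some such device an entire band of queries claimed hard by part~(2) is unaccounted for. A second, smaller issue is that the two base hardness results your plan leans on (for $\aggduplicates\circ\taurelu\circ\Qxyy$ and for the full query $\QFxyy$ with $\tauidi{1}$) are left unproven, and the route you propose for them---a \#Set-Cover reduction with a Hilbert--Hankel-style matrix, mirroring the average proof---is a poor structural fit: for $\aggduplicates$ the utility flips at the \emph{first} value collision rather than upon full coverage, so the natural quantities to recover are the numbers of pairwise-\emph{disjoint} sub-collections. The paper accordingly reduces from the permanent of a 0/1-matrix (counting exact covers), still inverting a factorial Hankel system; you should expect to need that reduction source rather than \#Set-Cover.
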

\begin{proofsketch}
Tractability is shown by an adaptation of the generic algorithm (\Cref{fig:generic-alg}). We prove hardness as follows.
    \Cref{thm:livshits-hardness} covers all non-\exhierarchical CQs. We extend \Cref{thm:hardness-const-per-single} to the $\aggduplicates$ function, hence cover all CQs that are \exhierarchical but not \allhierarchical.
We then prove hardness for the following two \AggCQs, through a reduction from the problem of calculating the permanent of a 0/1-matrix, which is known to be \#P-hard due to Valiant's Theorem~\cite{DBLP:journals/tcs/Valiant79}:
\begin{enumerate}
    \item $\aggduplicates\circ\taurelu\circ \Qxyy$ \;for\; $\Qxyy(x) \la R(x,y), S(y)$.
    \item $\aggduplicates\circ\tauidi{1}\circ\QFxyy$ \;for\; $\QFxyy(x,y) \la R(x,y), S(y)$.
\end{enumerate}

From hardness (1) and \Cref{lemma:general-reduction-non-q-hierarchical}, we conclude hardness for every CQ that is \allhierarchical but not \qhierarchical. Next, we prove a general reduction similar to \Cref{lemma:general-reduction-non-q-hierarchical}, stating that if
$Q$ is \qhierarchical but not \sqhierarchical, then the computation of $\Shapley(f, \aggduplicates\circ\taurelu\circ \QFxyy)$ reduces to 
$\Shapley(f, \aggduplicates\circ\tau\circ Q)$ for some localized value function $\tau$; hence, hardness (2) and the new general reduction jointly yield hardness for every CQ that is \qhierarchical but not \sqhierarchical, completing the story.
The details are in \Cref{apx:q}.
\end{proofsketch}

\section{Discussion: Towards Finer Classifications}\label{sec:finer}

The dichotomy theorems established in the previous sections (namely, \Cref{thm:max-count-distinct-classification,thm:avg-quantile-classification,thm:duplicates-classification}) classify the \AggCQs for each aggregation function $\alpha$ into two categories: the \e{tractable} ones, which are tractable for \e{every} localized value function $\tau$, and the \e{intractable} ones, which are \fpsharpp-complete for \e{at least one} value function $\tau$, which is localized on some predefined atom $R(\vec z)$ (or a constant, as a special case). This classification, however, does not rule out the possibility that an \AggCQ that is intractable under one value function may become tractable under a different value function $\tau$, and even a natural one such as $\tauidi{i}$. Thus, there is potential for finer classifications that take $\tau$ into account. In this section, we explore this challenge and offer preliminary insights.

\subsection{Choice of a Value Function}\label{sec:finer-function}

In the case of $\aggsum$ and $\aggcount$ (\Cref{thm:livshits-hardness}), the classification applies to every function $\tau$, as long as the resulting \AggCQ is not constant~\cite{DBLP:journals/lmcs/LivshitsK22}. For other aggregation functions, the choice of $\tau$ can definitely make a difference. For illustration, we inspect what happens when we replace the value function $\tau$ with the natural copying function $\tauid$. 

We have seen in the proof sketch of \Cref{thm:duplicates-classification} the hardness of
$\aggduplicates\circ\taurelu\circ \Qxyy$. Replacing $\taurelu$ with $\tauid$ would make the query constantly zero, and the Shapley value trivial, for every relational query $Q$, since $Q(D)$ cannot include duplicates by definition. 
As another example, consider the \exhierarchical CQ $Q(x) \la R(x), S(x,y),T(y)$ and the \AggCQ $\aggcdis \circ \tau\circ Q$ for a constant value function. Computing the Shapley value is intractable, by \Cref{thm:hardness-const-per-single}. However, changing  $\tau$ to $\tauid$ makes it solvable in polynomial time: since $\tauid$ is injective, this \AggCQ is equivalent to $\aggcount \circ \taugb{0} \circ Q$, where the Shapley value is tractable (since $\aggcount$ is tractable for every \exhierarchical CQ without self-joins~\cite{DBLP:journals/lmcs/LivshitsBKS21}).

Contrasting the above examples, we next show that our hardness results for the other aggregation functions are robust to replacing the value function with $\tauid$.
More generally, we show that every $\tau$ can be replaced by $\tauidi{i}$, as long as $\tau$ is a monotonically increasing (i. e. non-decreasing) function over $\tauidi{i}$, which applies $\taugbi{b}{i}$, $\taurelui{i}$ and also $\tau \equiv c$.

\begin{restatable}{theorem}
{thmhardnessmonotonic}
\label{thm:hardness-monotonic}
 Let $\alpha\circ\tau\circ Q$ be an \AggCQ where
$\alpha\in\set{\aggmin,\aggmax,\aggavg, \aggquantile_q}$ and 
$Q$ has no self-joins. Suppose that  $\tau = \gammamon \circ \tauidi{i}$ for some number $i$ and a monotonically increasing function $\gammamon \colon \reals \to \reals$. 
If $\Shapley(f, \alpha\circ\tau\circ Q)$ is \fpsharpp-complete to compute, then so is $\Shapley(f, \alpha\circ\tauidi{i}\circ Q)$.
\end{restatable}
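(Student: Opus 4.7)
The plan is to give a polynomial-time reduction from computing $\Shapley(f,\alpha\circ\tau\circ Q)$ to computing $\Shapley(f',\alpha\circ\tauidi{i}\circ Q)$ by a value-wise transformation of the input database. Since $\tau = \gammamon\circ\tauidi{i}$ is localized on some atom $R(\vec z)$ of $Q$, the $i$-th free variable $x_i$ must occur in $\vec z$ (otherwise $\tau$ is constant, where the claim is trivial), and possibly also in further atoms of $Q$. Given an input $(D,f)$ for the $\tau$-version, I would let $\phi$ be the transformation that replaces each value $a$ with $\gammamon(a)$ at every position of every atom of $Q$ where $x_i$ occurs, and leaves the remaining entries untouched. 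Set $D'\defeq\phi(D)$ and let $f'$ be the image of $f$; endogenous/exogenous status is preserved, and $\phi$ is computable in polynomial time.

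The next step is the semantic correspondence. Every homomorphism $h\in\hom(Q,D_0)$ to a sub-database $D_0\subseteq D$ induces a map $h'$ with $h'(x_i) \defeq \gammamon(h(x_i))$ and $h'(y) \defeq h(y)$ for $y\neq x_i$, and by construction of $\phi$ we have $h'\in\hom(Q,\phi(D_0))$; conversely, every $h'\in\hom(Q,\phi(D_0))$ arises from some $h$. When $\gammamon$ is injective on the values appearing at the $x_i$-positions of $D$, this is a bijection, and $\tau(h(\vec x)) = \gammamon(h(x_i)) = h'(x_i) = \tauidi{i}(h'(\vec x))$ for every corresponding pair. Consequently, for every $E\subseteq D\endo$ with image $E'=\phi(E)\subseteq D'\endo$, the two answer bags coincide and $(\alpha\circ\tau\circ Q)(E\cup D\exo) = (\alpha\circ\tauidi{i}\circ Q)(E'\cup D'\exo)$. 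Plugging into \Cref{eq:shapley} then gives $\Shapley(f,\alpha\circ\tau\circ Q)[D] = \Shapley(f',\alpha\circ\tauidi{i}\circ Q)[D']$, closing the reduction.

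The main obstacle is the case in which $\gammamon$ is \emph{not} injective on the values occurring at the $x_i$-positions of $D$: then $\phi$ may merge two distinct facts into one, so the bijection of homomorphisms breaks and the image bag in $D'$ is a deduplication of the $\gammamon$-image of the bag in $D$. For $\aggmin$ and $\aggmax$ this is harmless, since the aggregate depends only on the support of its input bag and is therefore insensitive to the merged multiplicities, so the equality of coalitional utilities above still holds. For $\aggavg$ and $\aggquantile_q$ the situation is genuinely delicate, because multiplicities in the answer bag matter. I would handle this by either (a) observing that the hardness proofs on which we rely for the $\tau$-version, in particular those behind \Cref{lemma:avg-quantile-simplest-hard}, already construct databases whose values at the $x_i$-positions are pairwise distinct, so the direct $\phi$-construction is sufficient for transferring hardness; or (b) preprocessing $D$ by choosing distinct representatives within each $\gammamon$-plateau so that the fact-level transformation becomes injective while leaving $\Shapley(f,\alpha\circ\tau\circ Q)[D]$ unchanged. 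Formalizing option (b) in full generality, making sure that the substitution respects the join structure of $Q$ and does not perturb the coalitional utilities, is the main technical hurdle; monotonicity of $\gammamon$ is the essential ingredient, since it guarantees that the choice of representative within a plateau does not affect order-sensitive aggregates such as $\aggquantile_q$.
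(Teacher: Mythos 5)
Your first step---the value-wise substitution $\phi$ that applies $\gammamon$ at every $x_i$-position---is exactly the paper's \Cref{obs:injective}, and it is sound precisely when the induced map on facts is injective. The gap is everything after that. When $\gammamon$ is merely non-decreasing (which is all the theorem assumes), two distinct facts of $D$ can be mapped by $\phi$ to the \emph{same} fact of $D'$; then not only is the answer bag deduplicated, the set of \emph{players} collapses, so the cooperative game over $D'$ is a different game and no identity between Shapley values can be read off. This also undermines your claim that non-injectivity is ``harmless'' for $\aggmin$ and $\aggmax$: insensitivity of the aggregate to multiplicities does not help once the coalitions themselves no longer correspond. Neither escape route closes the gap. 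Route (a) would at best re-derive hardness for the specific databases built in the earlier reductions rather than prove the stated implication for arbitrary $Q$ and $\gammamon$; moreover its factual premise fails, since every intended application of the theorem uses a non-injective $\gammamon$ (the constant function behind \Cref{thm:hardness-const-per-single}, or $\taurelu$ and $\taugb{0}$ from \Cref{lemma:avg-quantile-simplest-hard}, each of which collapses many $x_i$-values and hence many facts). Route (b) cannot work as described: if $a_1\neq a_2$ lie in the same $\gammamon$-plateau, then any choice of representatives inside that plateau still has identical $\gammamon$-images, so $\phi$ still merges the corresponding facts.

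The paper's proof avoids the injectivity problem by an algebraic detour that your proposal is missing. It introduces the auxiliary query $A_3=\alpha\circ(\gammamon+\tauid)\circ\tauidi{i}\circ Q$; since $\gammamon$ and the identity are both non-decreasing, \Cref{lem:additive-aggr} gives $A_3(D)=A_1(D)+A_2(D)$ pointwise for $\alpha\in\set{\aggmin,\aggmax,\aggavg,\aggquantile_q}$ (this is where monotonicity is really used: these aggregates commute with monotone reparametrizations, so they distribute over a sum of monotone summands). Linearity of the Shapley value then yields $\Shapley(f,A_1)=\Shapley(f,A_3)-\Shapley(f,A_2)$, and since $\gammamon+\tauid$ is \emph{strictly} increasing, hence injective, the substitution argument of \Cref{obs:injective} applies to $A_3$ with no collisions, giving a two-oracle-call reduction to the $\tauidi{i}$-problem. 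To salvage your direct approach you would need to supply this additivity-plus-linearity step (or an equivalent device); as written, the non-injective case---which is the only case that actually occurs in the theorem's applications---is an unresolved hole.
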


Hence, 
From \Cref{thm:hardness-monotonic} and the hardness results of the previous sections, 
we conclude that:
\begin{corollary}
\label{cor:robust-to-tauid}
Let $A=\alpha\circ\tauidi{i}\circ Q$ be an \AggCQ where $\alpha\in\set{\aggmin,\aggmax,\aggavg,\aggquantile_q}$. If $Q$ has no self-joins and $Q$ is not \allhierarchical, then
computing 
$\Shapley(f, A)$ is \fpsharpp-complete.
\end{corollary}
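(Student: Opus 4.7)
The plan is to decompose the hypothesis ``$Q$ is not \allhierarchical'' into two exhaustive subcases and, in each, pipe an earlier hardness result through \Cref{thm:hardness-monotonic}. Throughout, the target is to obtain \fpsharpp-hardness of $\Shapley(f, \alpha\circ\tauidi{i}\circ Q)$, so it suffices to produce \emph{any} localized value function $\tau^*$ for which Shapley is \fpsharpp-hard and for which $\tau^* = \gammamon \circ \tauidi{i}$ holds with $\gammamon$ non-decreasing; \Cref{thm:hardness-monotonic} then transfers hardness to $\tauidi{i}$.

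First I would dispose of the case where $Q$ is not \exhierarchical. Here \Cref{thm:livshits-hardness} gives \fpsharpp-hardness of $\Shapley(f, A)$ for $A = \alpha\circ\tauidi{i}\circ Q$ directly, as long as $A$ is not the constantly-zero query. For each of $\alpha \in \set{\aggmin,\aggmax,\aggavg,\aggquantile_q}$ it is immediate to exhibit a database on which $A$ evaluates to a nonzero number (e.g.\ any database in which $Q$ has one answer with a nonzero value in position $i$), so this case follows without further work.

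Next I would handle the case where $Q$ is \exhierarchical but not \allhierarchical. Each of the four aggregate functions is constant per singleton: a bag containing only copies of $c$ is mapped to $c$ by $\aggmin$, $\aggmax$, $\aggavg$, and $\aggquantile_q$ alike. Choosing any $c \neq 0$ yields $\alpha(\multiset{c}) \neq 0$, so the hypotheses of \Cref{thm:hardness-const-per-single} are met, and Shapley computation is \fpsharpp-hard for the constant value function $\tau \equiv c$. To finish, I would write $\tau \equiv c$ as $\gammamon \circ \tauidi{i}$ where $\gammamon \colon \reals \to \reals$ is the constant function with value $c$; since constant functions are trivially non-decreasing, \Cref{thm:hardness-monotonic} applies and delivers \fpsharpp-hardness of $\Shapley(f, \alpha\circ\tauidi{i}\circ Q)$.

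This exhausts all non-\allhierarchical $Q$, completing the argument. There is no genuine obstacle in this proof: the heavy lifting has already been done in \Cref{thm:livshits-hardness}, \Cref{thm:hardness-const-per-single}, and \Cref{thm:hardness-monotonic}. The only bookkeeping to watch carefully is (i) verifying the ``not a constant query'' side condition of \Cref{thm:livshits-hardness} for $\tauidi{i}$, which is trivial for the four aggregate functions considered, and (ii) noting that the constant value function used in the \exhierarchical-but-not-\allhierarchical subcase is indeed of the form $\gammamon \circ \tauidi{i}$ with $\gammamon$ non-decreasing, which is immediate. The quantile case is covered uniformly because the hypothesis $q \in (0,1)$ already appears in \Cref{thm:hardness-const-per-single} and \Cref{thm:hardness-monotonic}.
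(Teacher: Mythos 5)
Your proof is correct and follows essentially the same route the paper intends for \Cref{cor:robust-to-tauid}: obtain hardness for the constant value function $\tau\equiv c$ via \Cref{thm:hardness-const-per-single} and transfer it to $\tauidi{i}$ through \Cref{thm:hardness-monotonic} with $\gammamon$ the (non-decreasing) constant function, exactly as the paper notes when it lists $\tau\equiv c$ among the functions of the form $\gammamon\circ\tauidi{i}$. The only difference is that your first case (non-\exhierarchical $Q$ handled directly by \Cref{thm:livshits-hardness}) is redundant, since \Cref{thm:hardness-const-per-single} already covers \emph{all} non-\allhierarchical $Q$; the split is harmless but unnecessary.
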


\subsection{Localization of a Value Function}\label{sec:finer-localization}

Next, we ask whether the choice of the atom $R(\vec z)$ on which $\tau$ is localized matters; that is, can it be the case that the hard \AggCQ becomes tractable if $\tau$ is assumed to be localized on a specific $R$? 

For the aggregation functions $\aggsum$ and $\aggcount$, the hardness of \Cref{thm:livshits-hardness} is robust to the localization (as long as $\alpha\circ\tau$ is not constant). For $\aggmin$, $\aggmax$ and $\aggcdis$, hardness is covered by \Cref{thm:hardness-const-per-single}, which is also robust to the localization, in a trivial sense (as the constant value function is localized on every atom). It continues to hold (in a non-trivial sense) for the generalization of \Cref{cor:robust-to-tauid}. Next, we show that our hardness results for $\aggavg$, $\aggquantile_q$, and $\aggduplicates$ do not possess this robustness---they may become false if we replace the localization atom $R(\vec z)$.

Consider the following \AggCQs:
\begin{enumerate}
    \item $\aggavg\circ\taurelui{1}\circ\Qxyyz$ \;for\;  $\Qxyyz(x,z) \la R(x,y), S(y), T(z)$.
    \item $\aggmed\circ\taugbi{0}{1}\circ\Qxyyz$.
    \item $\aggduplicates\circ\tauidi{1}\circ\QFxyy$ \;for\; $\QFxyy(x,y) \la R(x,y), S(y)$. 
\end{enumerate}
In each of the three, the computation of the Shapley value is \fpsharpp-complete. 
For the first two, it can be shown by immediate reductions from $\aggavg\circ\taurelu\circ\Qxyy$ and 
    $\aggmed\circ\taugb{0}\circ\Qxyy$, respectively, (eliminating the effect of $T(z)$ by assigning to $T$ a relation with a single, exogenous tuple) that were established to be hard in \Cref{lemma:avg-quantile-simplest-hard}.
The third \AggCQ was shown to be hard in the proof sketch of \Cref{thm:duplicates-classification}. In contrast, for each of the three, the Shapley value is computable in polynomial time if we change the localization assumption from the first atom to the last atom:

\begin{restatable}{proposition}
{proplocalizationsinptime}
\label{prop:localizations-in-ptime}
For each of the following \AggCQs, there is a polynomial-time algorithm for computing $\Shapley(f, A)$, given a database $D$ and a fact $f$.
\begin{enumerate}
\item $\aggavg\circ\taurelui{2}\circ\Qxyyz$ \;for\;  $\Qxyyz(x,z) \la R(x,y), S(y), T(z)$.
\item $\aggmed\circ\taugbi{0}{2}\circ\Qxyyz$.
    \item $\aggduplicates\circ\tauidi{2}\circ\QFxyy$ \;for\; $\QFxyy(x,y) \la R(x,y), S(y)$.
\end{enumerate}
\end{restatable}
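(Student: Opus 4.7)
The plan is to exploit a structural decomposition: $\Qxyyz$ is the cross product of $Q_1(x) \la R(x,y), S(y)$ and $Q_2(z) \la T(z)$, over disjoint relation symbols. Since the value function depends only on $z$, every answer tuple $(a,b) \in \Qxyyz(D)$ contributes the same value $\tau(b)$, so the bag of values equals $\multiset{\tau(z) \mid z \in T^D}$ with each element replicated exactly $|Q_1(D)|$ times. Both $\aggavg$ and $\aggquantile_q$ are invariant under such a uniform replication---immediate for $\aggavg$, and for $\aggquantile_q$ via a case analysis on how the rank positions $\lceil q|B|\rceil$ and $\lfloor q|B|+1\rfloor$ scale with the replication factor. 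This yields the factorization $A(D) = [Q_1(D) \neq \emptyset] \cdot g(T^D)$, where $g$ is the respective aggregation applied to $T^D$ alone (with $g(\emptyset) = 0$).

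Now split any $C \subseteq D\endo$ as $C_1 \cup C_2$, with $C_1$ among the endogenous $R, S$-facts and $C_2$ among the endogenous $T$-facts. Then $A(C \cup D\exo) = [Q_1(C_1 \cup D\exo) \neq \emptyset] \cdot g(C_2 \cup T^{D\exo})$. Plugging this into \Cref{eq:shapley} and grouping size-$k$ subsets by $k = k_1 + k_2$, the inner sum separates into a convolution over $k_1, k_2$ of two independent quantities: cumulative sums of the Boolean indicator $[Q_1(\cdot) \neq \emptyset]$ over subsets of the $R,S$-facts (together with the marginal contribution of $f$ when $f$ is among these), which are polynomial-time computable because $Q_1$ viewed as Boolean is hierarchical~\cite{DBLP:journals/lmcs/LivshitsBKS21}; and cumulative sums of $g$ over subsets of the $T$-facts (with the analogous marginal), which are single-relation \AggCQ computations already handled by the algorithms of \Cref{sec:all,sec:q}. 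Combining the two sides via convolution and multiplying by $q_k$ yields $\Shapley(f, A)$ in polynomial time.

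\textbf{Part (3).} Here $A(D) = 1$ iff there is a value $y$ with $S(y) \in D$ and $|\{x : R(x,y) \in D\}| \geq 2$. The facts that influence the event $E_y$, defined as the conjunction of $[S(y) \in \cdot]$ and $[|\{x : R(x,y) \in \cdot\}| \geq 2]$, are exactly $S(y)$ together with $\{R(x,y) : x\}$, and these ``$y$-slices'' are pairwise disjoint across distinct $y$-values. For an endogenous fact $f$ attached to a specific value $y_0$, the marginal $A(C\cup\{f\}\cup D\exo) - A(C \cup D\exo)$ equals one precisely when $E_{y_0}$ flips from $0$ to $1$ upon adding $f$ and $E_y = 0$ for every $y \neq y_0$. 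By disjointness, the sum over size-$k$ subsets of $D\endo \setminus \{f\}$ factors as an iterated convolution of generating functions: one function $N_f(k^{y_0})$ counts the flipping subsets in the $y_0$-slice, and one function $M_y(k^y)$ per other $y$ counts the $k^y$-subsets of the $y$-slice on which $E_y = 0$. Each $M_y$ and $N_f$ is polynomial-time computable from its slice alone, and the iterated convolution of $O(|D|)$ polynomials of degree $O(|D|)$ remains polynomial.

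\textbf{Main obstacle.} The main subtlety lies in the replication-invariance check for $\aggquantile_q$ in part (2)---verifying case-by-case that the rank positions $\lceil q|B|\rceil$ and $\lfloor q|B|+1\rfloor$ of the replicated bag land on the same ``block'' of replicated values irrespective of the replication factor. This reduces to a short number-theoretic argument on ceilings of fractions with fixed denominator. Once that is established, the rest is routine generating-function and Shapley bookkeeping.
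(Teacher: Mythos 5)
Your proof takes essentially the same route as the paper's: for (1) and (2) you factor $\Qxyyz$ as the cross product of $T(z)$ with the connected component $R(x,y),S(y)$, observe that the value bag is the $T$-side bag with every multiplicity scaled by the join size (so the aggregate reduces to a single-relation aggregate times a Boolean indicator for the hierarchical join), and recombine by a convolution over subset sizes; for (3) you use the same disjoint per-$y$-slice decomposition with per-slice counts combined by convolution. The only caveat is that the paper establishes the replication-invariance only for $\aggavg$ and $\aggmed$ and explicitly footnotes that it does \emph{not} hold for arbitrary quantiles, so your blanket claim for general $\aggquantile_q$ should be checked against the rank-index arithmetic of the paper's definition before relying on it---but since part (2) concerns only the median, this does not affect the correctness of your argument for the stated proposition.
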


Hence, a valuable challenge for future research is to establish finer classifications that account for the identity of the atom on which localization is assumed.

\subsection{Beyond Localized Value Functions}\label{sec:finer-beyond-localized}

The polynomial-time algorithms presented so far crucially rely on the locality of the value function~$\tau$. In contrast, the case of non-localized value functions remains largely unexplored and is left as an open direction for future research. In this section, we outline the main challenges that arise when attempting to extend our results to this more general setting.

The algorithm of Livshits et al.~\cite{DBLP:journals/lmcs/LivshitsBKS21} for $\aggsum$ over \exhierarchical CQs applies to general (not necessarily localized) value functions $\tau$, as long as $\tau$ can be computed in polynomial time. The reason for that is simple---the algorithm considers one answer at a time, independently of the others (due to the linearity of expectation), so it can simply compute $\tau$ upfront based on the considered answer, regardless of the number of atoms that determine $\tau$. 

However, when going beyond $\aggsum$, things get more complicated. For illustration, let us consider the case of $\aggmin$ and $\aggmax$. The algorithm of \Cref{sec:all} for $\aggmin$/$\aggmax$ over \allhierarchical CQs extends to a non-localized $\tau$, as long as it is a \e{monotonic monoid}. More precisely, if we assume that $\tau$ is the result 
$x_1\otimes\dots\otimes x_\ell$ of applying a non-decreasing monoid $\otimes:\reals^2\rightarrow\reals$ on a sequence $(x_1,\dots,x_\ell)$ of numeric free variables, then we can extend the algorithm for $\Shapley(f,A)$ to handle $\aggmax\circ\tau$.
For example, this applies to $\aggmax(x_1+x_2)$, $\aggmax(\max(x_1,x_2))$, and so on. (Similarly, we can support $\aggmin$ with a non-increasing monoid.) The extension is as follows. For each subquery $Q'$ and relevant number $a$, we store in $\P$ the number of subsets $E$ of $D\endo$ where the monoid, restricted to the variables of $Q'$, has $a$ as the maximal value. Due to the monotonicity, we can then realize the $\combineX$ function efficiently by considering all combinations of numbers $a'$ and $a''$ with $a'\otimes a''=a$. (We will give the complete argument in the extended version of this manuscript.)

Importantly, it is necessary to restrict $\tau$ (e.g., being a monotonic monoid) to extend our upper bound for min and max; it is insufficient to assume only that $\tau$ is computable in polynomial time (as in the case of sum). To see why, consider the Boolean CQ
$Q_0() \la R_0(x), S_0(x,y), T_0(y)$.
Suppose that all facts of $S_0$ are exogenous. In this case, computing $\Shapley(Q_0,f)$ is \fpsharpp-hard~\cite{DBLP:journals/lmcs/LivshitsBKS21}. Now, %
consider the \AggCQ 
\[A\defeq \aggmax\circ \tau\circ Q \mbox{\, for\, $Q(x,y) \la R(x), T(y)$}
\,.\]

Define $\tau$ as follows. Assume that $\tau(x,y)$ is equal to $1$ if $x$ divides $y$ and $0$ otherwise. Then $A$ can \e{simulate} the Boolean CQ $Q_0$ for the computation of the Shapley value with the following database $D$ that we construct for $A$. Given a database $D_0$ for $Q_0$, choose a distinct prime $p_c$ for each $c \in D_0^{R_0}$ and add it to $D^R$. In $D^T$, we store for each $d \in D_0^{T_0}$ the product over all $p_c$ with $(c,d) \in D_0^{S_0}$ (or a power of another prime $p'$ if no such $c$ exists). This construction can be performed in polynomial time, since the $n$th prime is in $O(n \log{}n)$ \cite{Hua1982Introduction}. Then, for each subset $E_0$ of $D_0\endo$, it holds that $Q(D_0\exo\cup E_0)$ is true if and only if $A(D\exo\cup E)=1$, where $E$ is the set of facts of $D$ that corresponds to $E_0$. Therefore, each fact $f$ of $D_0$ has the same Shapley value as its corresponding fact of $D$ .

We conclude that there are polynomial-time computable functions $\tau$ for which the Shapley value for the min/max is intractable even for the Cartesian-product CQ (which is \allhierarchical and even \sqhierarchical). We conjecture that a similar situation applies to every aggregate function that we considered in this manuscript, with the exception of sum and count. Consequently, future extensions of this work for non-local $\tau$ will need to make some assumptions on $\tau$.

\section{Concluding Remarks}\label{sec:conclusions}

We investigated the complexity of computing the Shapley value of a tuple for \AggCQs $\alpha\circ\tau\circ Q$ where the underlying CQ $Q$ has no self-joins and the value function $\tau$ is localized. 
While the cases of sum and count have been resolved in prior work~\cite{DBLP:journals/lmcs/LivshitsBKS21}, our results shed light on the tractability frontiers of
other common aggregate functions, including count-distinct, min, max, average, quantile, and has-duplicates.
In particular, we showed that these functions capture different classes of hierarchical CQs, as shown in \Cref{fig:hierarchy}.

Many avenues remain for future continuation of this research. Following the discussion on \Cref{sec:finer}, an important direction is to pursue classification with respect to general classes of value functions $\tau$. 
Another important direction is to extend the class of relational queries in the considered aggregate queries, namely the class of CQs without self-joins. While handling CQs with self-joins is traditionally a daunting challenge in the case of the Shapley value (and other complexity classifications, e.g.,~\cite{DBLP:journals/pacmmod/PadmanabhaSS24,DBLP:conf/pods/CarmeliS23,DBLP:journals/jacm/DalviS12}), we can also explore the extension to the classes of the \e{constant-free unions of connected CQs}, and \e{connected homomorphism-closed graph queries}, where dichotomies for the Boolean Shapley value have been established by  Bienvenu, Figueira, and Lafourcade~\cite{DBLP:journals/pacmmod/BienvenuFL24}.

In addition to the directions of extending $Q$ and $\tau$, we are interested in generalizing our study from specific (though commonly used) aggregate functions $\alpha$ to broad classes of aggregate functions that are identified by general properties (as exemplified in \Cref{thm:hardness-const-per-single}) that cover the aggregate functions of standard SQL and possibly many others.
Specifically, an important future direction is to extend our results into full classifications that cover aggregate functions from wide classes of functions defined by intuitive properties.

Mirroring past applications of the Shapley value to databases~\cite{DBLP:journals/lmcs/LivshitsK22, DBLP:journals/lmcs/LivshitsBKS21,DBLP:conf/icdt/GroheK0S24}, an important direction is to study the complexity of \e{approximating} the Shapley value with error guarantees. This is highly relevant to the pragmatics of calculating the Shapley value in reality, as well as the usage of SAT-solvers and circuits that were shown to provide practical solutions for 
both ordinary CQs~\cite{DBLP:conf/sigmod/DeutchFKM22,DBLP:journals/sigmod/AmarilliC24} and \AggCQs for certain aggregation functions~\cite{DBLP:journals/corr/abs-2506-16923}.

There are also directions that concern different database models. The work of \citet{DBLP:journals/pacmmod/KarmakarMSB24} gives rise to a variant of the problem involving \e{probabilistic databases}: they studied the expected contribution score (for the general class of Shapley-like scores) of an endogenous fact when tuples are probabilistic (in the \e{tuple-independent model}~\cite{DBLP:conf/vldb/DalviS04}), and this problem generalizes naturally to the expected contribution to an aggregate query.
In this vein, another direction regards the computation of the Shapley value for queries over ontologies (known as ontology-mediated query answering), 
where ontological axioms together with the database induce a set of possible worlds,
as done recently by \citet{DBLP:conf/kr/BienvenuFL24}. In their work, the utility function for a Boolean query is the certainty of the query. For aggregate queries, we can adopt known semantics from the literature, such as those proposed by~\citet{DBLP:conf/cikm/CalvaneseKNT08} or the range semantics by \citet{DBLP:conf/pods/AfratiK08}.

Finally, it is also important to explore the rationality of using the Shapley value for aggregate queries over databases, in continuation of the discussion in \Cref{remark:rationality}. Different application scenarios may call for rationality criteria other than those of the Shapley value, and it is therefore important for a database system to support a broad collection of attribution measures. 
On this matter, 
\citet{DBLP:journals/pacmmod/BienvenuFL25} argued that, in the case of Boolean conjunctive queries (or unions thereof), attribution should reflect rationality criteria based on the set of homomorphisms that include a given fact. They proposed the weighted sums of minimal supports (WSMS) measure, which satisfies these criteria.
We leave as an important avenue for future work the study of rationality criteria for \AggCQs (and other types of numerical aggregate queries); this includes understanding whether and how the axiomatic properties advocated by Bienvenu et al.~should extend or be adapted to such queries, and developing corresponding attribution measures and algorithms.

\begin{acks}
We are grateful to the PODS referees for providing insightful feedback and contributing excellent suggestions to this paper.
This work was supported by the German Research Foundation (DFG) grants GR 1492/16-1 and KI 2348/1-1 (DIP Program). Christoph Standke was funded by the European Union (ERC, SymSim, 101054974) and by the German Research Foundation (GRK 2236, UnRAVeL). Views and opinions expressed are however those of the author(s) only and do not necessarily reflect those of the European Union or the European Research Council. Neither the European Union nor the granting authority can be held responsible for them. 
\end{acks}

\bibliographystyle{ACM-Reference-Format}
\bibliography{main}

\appendix
\crefalias{section}{appendix}

\section{Properties of the Shapley value}\label{apx:shapley}

In this section, we review some basic properties of the Shapley value which we will later use in our proofs. Let $(P,\nu)$ be a cooperative game (with $\nu(\emptyset) = 0$). Then the Shapley 
value has the following properties \cite{Shapley}:\footnote{\label{footnote:axioms}The properties we list here are not the original axioms proposed by \citet{Shapley}. In particular, the original axiom of symmetry is stronger. However, these properties together also characterize the Shapley value uniquely and are therefore often treated as axioms for the Shapley value (see, e.g., \cite{maschler2020game,9565902,molnar2020interpretable}).}
\begin{itemize}
    \item Efficiency: The sum of the Shapley values of all player is the utility of the whole set of players. In formula, $\sum_{p \in P} \Shapley(p, \nu) = \nu(P)$.
    \item Linearity: The Shapley value is linear in the utility function; that is, $\Shapley(p, \nu_1 + c \cdot \nu_2) = \Shapley(p, \nu_1) + c \cdot \Shapley(p, \nu_2)$.
    \item Symmetry: If two players $p_1$ and $p_2$ are interchangeable w.r.t $\nu$, then their Shapley values are equal. That is, if $\nu(C \cup \set{p_1}) = \nu(C \cup \set{p_2})$ for each $C \subseteq P \setminus \set{p_1, p_2}$, then $\Shapley(p_1, \nu) = \Shapley(p_2, \nu)$.
    \item Null player property: If a player does not contribute to any coalition, its Shapley value is zero. That means, that if $\nu(C \cup \set{p}) = \nu(C)$ for each $C \subseteq P \setminus \set{p}$, then $\Shapley(p, \nu) = 0$. Furthermore, this player can be removed from the game without changing the Shapley values of all other players.
\end{itemize}

\section{Omitted Proofs from \Cref{sec:machinery}}
In this section, we present the proofs omitted from \Cref{sec:machinery}.

\propconstantsameasboolean*
\begin{proof}
    This claim follows from the linearity property of the Shapley value. To see that we can apply this property, let $C \subseteq D\endo$. Since $\alpha$ is constant per singleton, $A$ returns $\alpha(\multiset{c})$ if $Q(C \cup D\exo)$ is nonempty and $0$ otherwise. In the same way, $\Qbool$ returns $1$ if $Q(C \cup D\exo)$ is nonempty and $0$ otherwise, so we have $A(C \cup D\exo) = \alpha(\multiset{c}) \cdot \Qbool(C \cup D\exo)$.
\end{proof}

\section{Omitted Details from \Cref{sec:all}}\label{apx:all}

In this section, we give the missing proofs and details from \Cref{sec:all}.

\subsection{Count Distinct}

We start by proving the closed formula from the following proposition. Recall that $\chi_a$ was defined as $\chi_a(B) = 1$ if $a \in B$ and $0$ otherwise.
\propclosedformulacountdisctinct*
\begin{proof}
We can infer this formula directly from the properties of the Shapley value presented in \Cref{apx:shapley}. Before considering $\nu = A$, let us first consider $\nu_a \defeq \chi_a \circ \tau \circ Q$ for $a \in (\tau \circ Q)(D)$. Each fact $R(\vec x)$ with $\tau(\vec x) \neq a$ is a null player for this utility function, and all other players are interchangeable and therefore have equal Shapley values by symmetry. Since the sum of all Shapley values for $\nu_a$ is $1$ by efficiency, we obtain $\Shapley(R(\vec t), \nu_a)(D) = \frac{1}{\size{\set{R(\vec t') \, \mid \, \tau(\vec t') = \tau(\vec t)}}}$ if $\tau(\vec t) = a$ and $0$ otherwise. Writing $\nu = \sum_a \nu_a$ and using linearity yields the claim.
\end{proof}

The next lemma relates the indicator variables $\chi_a$ to the Boolean version $\Qbool$ of $Q$. 
\begin{lemma}\label{lemma:indicator-boolean}
    Let $Q$ be a CQ without self-joins and $\tau$ be localized on the atom $R(\vec z)$. 
    For a database $D$ and $a \in \reals$, let $D_a$ denote the database that is obtained from $D$ by removing all facts $R(\vec b)$ with $\tau(\vec b) \neq a$. Furthermore, let $\Qbool() \la Q(\vec x)$. Then
    \[
    \Qbool(D_a) = 1 \iff a \in (\tau \circ Q)(D).
    \]
\end{lemma}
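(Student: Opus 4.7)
The plan is to unpack both sides in terms of homomorphisms and use (i) the inclusion $D_a\subseteq D$, (ii) the no-self-joins assumption to pin down the unique atom of $Q$ that mentions $R$, and (iii) the localization of $\tau$ on $R(\vec z)$. Recall that $\Qbool(D_a)=1$ iff $\hom(Q,D_a)\neq\emptyset$, and $a\in(\tau\circ Q)(D)$ iff there is some $h\in\hom(Q,D)$ with $\tau(h(\vec x))=a$, which by localization is equivalent to $\tau(h(\vec z))=a$.

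For the forward direction ($\Rightarrow$), I would take any $h\in\hom(Q,D_a)$. Since $D_a\subseteq D$, the same $h$ is a homomorphism to $D$, and it suffices to check that $\tau(h(\vec z))=a$. The atom $R(\vec z)$ of $Q$ is mapped by $h$ to a fact $R(h(\vec z))\in D_a$, and by the very definition of $D_a$, every $R$-fact of $D_a$ has $\tau$-value equal to $a$. Hence $a=\tau(h(\vec z))=\tau(h(\vec x))\in(\tau\circ Q)(D)$.

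For the backward direction ($\Leftarrow$), start with some $h\in\hom(Q,D)$ satisfying $\tau(h(\vec x))=a$. I need to argue that $h$ is still a homomorphism into $D_a$, i.e.\ that every atom of $Q$ is sent into $D_a$. For an atom $R'(\vec w)$ of $Q$ with $R'\neq R$, the image $R'(h(\vec w))$ lies in $D$ and is untouched by the construction of $D_a$ (which only deletes $R$-facts), so it belongs to $D_a$. By the no-self-joins assumption, the only remaining atom to check is the unique atom $R(\vec z)$. Localization of $\tau$ on $R(\vec z)$ gives $\tau(h(\vec z))=\tau(h(\vec x))=a$, so the fact $R(h(\vec z))\in D$ is not removed when forming $D_a$, and therefore $R(h(\vec z))\in D_a$. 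Thus $h\in\hom(Q,D_a)$ and $\Qbool(D_a)=1$.

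There is no real obstacle here; the only subtlety to get right is the careful appeal to the no-self-joins hypothesis (so that $R$ appears in a single atom whose image under $h$ determines $\tau(h(\vec x))$) and the distinction between atoms that mention $R$ and those that do not when comparing $D$ with $D_a$. Both directions are then one-line arguments.
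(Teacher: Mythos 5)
Your proof is correct and follows essentially the same route as the paper's: both directions hinge on the observation that, by localization, the $\tau$-value of an answer $h(\vec x)$ is determined by the image $R(h(\vec z))$ of the unique $R$-atom, so membership of that fact in $D_a$ is equivalent to the answer having $\tau$-value $a$. Your write-up is in fact slightly more careful than the paper's (explicitly checking that non-$R$-atoms survive the passage from $D$ to $D_a$ and invoking $D_a\subseteq D$ in the forward direction), but the argument is the same.
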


\begin{proof}
    If $\Qbool(D_a) = 1$, then there is a homomorphism $h$ from $\Qbool$ (and hence from $Q$) to $D_a$. Since $h$ is a homomorphism, we have $R(h(\vec z)) \in D_a$ and hence $a \in \tau(h(\vec x))$. For the converse, assume $a \in (\tau \circ Q)(D)$. Then there is a homomorphism $h$ from $Q$ to $D$ with $h(\tau(\vec x)) = a$. Since $\tau$ is determined by $R(\vec z)$, we have $R(h(\vec z)) \in D_a$ and hence $h$ is a homomorphism from $\Qbool$ to $D_a$, so $\Qbool(D_a) = 1$.
\end{proof}
 With that established, we are now ready to proof the following lemma.
\lemreductioncdisboolean*
\begin{proof}
    Since the Shapley value is linear and $\aggcdis$ is the sum of the indicator variables $\chi_a$, we have 
    \[
    \Shapley(f,A)[D] = \sum_{a \in (\tau \circ Q)(D)} \Shapley(f,\chi_a \circ \tau \circ Q)[D].
    \]
    All facts in $D \setminus D_a$ are null players for $\chi_a \circ \tau \circ Q$, since they can only add values different from $a$ to $\tau \circ Q$. This shows that their Shapley values are $0$ and that removing them from the game does not change the Shapley values of the other players. Together with \Cref{lemma:indicator-boolean}, we find $\Shapley(f,\chi_a \circ \tau \circ Q)[D] = \Shapley(f,\chi_a \circ \tau \circ Q)[D_a] = \Shapley(f,Q')[D_a]$. This concludes the proof.
\end{proof}

\subsection{Min and Max}\label{sec:minmax}

 We first prove the closed formula given in the following proposition.

 \propclosedformulamax*

\begin{proof}
    When we add $f = R(\vec{t})$ to $C \subseteq D \setminus \set{f}$, only two things can happen. Either the maximum before adding $f$ was at least $\tau(\vec t)$ and the maximum does not change, or the new maximum will be $\tau(\vec t)$. The coalitions $C$ for which the first case holds contribute nothing to the Shapley value, while the contribution of a coalition for which the second case holds depends on the maximum of $(\tau \circ Q)(C)$. The idea is to decompose the formula for the Shapley value into all possible values for that maximum. Treating the case $k = 0$ (and hence $C = \emptyset$) seperately, we obain
    \begin{align*}
    &\Shapley(R(\vec t), \aggmax \circ \tau \circ Q)(D) \\ 
    &= 
    \frac{\tau(\vec t)}{\size{D}} + \sum_{k = 1}^{\size{D} - 1} q_k \sum_{\substack{a \in (\tau \circ Q)(D) \\ a < \tau(\vec t)}}  (\tau(\vec t) - a)  \cdot \size[\big]{\set{C \in D\setminus\set{f} \colon \size{C} = k \wedge A(C) = a}} \text{.} \\
    \end{align*}
    We now need to count the sets $C \subseteq D \setminus \set{f}$ of size $k$ with $(\aggmax \circ \tau \circ Q)(D) = a$. These are the sets that contain at least one of the $\mult{=}{a}$ facts $R(\vec t)$ with  $\tau(\vec t) = a$ and the $\tau$-value of every fact in $D$ must be $\leq a$. Summing over all possibilites to choose exactly $\ell$ facts $R(\vec t)$ with $\tau(\vec t) = a$ and $k - \ell$ with $\tau(\vec t) < a$, the number of these sets is equal to \[
    \sum_{\ell = 1}^{k} \binom{\mult{=}{a}}{\ell} \binom{\mult{<}{a}}{k-\ell} =
    \binom{\mult{=}{a} + \mult{<}{a}}{k} - \binom{\mult{=}{a}}{0}\binom{\mult{<}{a}}{k} 
    = \binom{\mult{\leq}{a}}{k} - \binom{\mult{<}{a}}{k}\text{,}\]
    using the combinatorial identity $\binom{a + b}{k} = \sum_{\ell = 0}^k \binom{a}{\ell}\binom{b}{k-\ell}$. Plugging the last equation into the one before and reordering the sums yields the claim.
\end{proof}
\subsubsection*{Details on the instantiation of the generic algorithm}

Next, we present the missing details about the algorithms for $\combineU$ and $\combineX$ for the data structure defined in \Cref{sec:minmax}. To simplify notation, we set $\P[Q_i,D_i](a,k) = 0$ whenever $a \notin (\tau \circ Q_i)(D_i)$ or $k \notin \set{0,1,\ldots,\size{D_i\endo}}$.

Let us first discuss the algorithm for  $\combineU(\P[Q', D_1], \P[Q', D_2])$ and let first $D_1$ and $D_2$ contain $R$. 
Let $E \in \binom{D_1\endo \cup D_2\endo}{k}$ and $E_i = E \cap D_i\endo$. Since $(\tau \circ Q')(E \cup D_1\exo \cup D_2\exo)$ is the union of the sets $(\tau \circ Q')(E_1 \cup D_1\exo)$ and $(\tau \circ Q')(E_2 \cup D_2\exo)$, its maximal value is equal to $a$ if and only if the maximal values of  
$(\tau \circ Q')(E_i \cup D_i\exo)$ for $i=1,2$ are less than or equal to $a$ and equality holds for at least one $i$. Hence, $\combineU(\P[Q',D_1], \P[Q',D_2])(a,k)$ can be computed by summing up $\P[Q',D_1](a_1,k_1) \cdot \P[Q',D_2](a_2,k_2)$ over all combinations with $k_1 + k_2 = k$ and $\max(a_1,a_2) = a$. This yields
    \begin{align*}
        \combineU(\P[Q',D_1], \P[Q',D_2])(a,k) = 
        &\sum_{k_1 = 0}^{k} \sum_{a_2 \leq x} \P[Q',D_1](a,k_1) \cdot   \P[Q',D_2](a_2, k - k_1) \\
        &+ \sum_{k_1 = 0}^{k} \sum_{a_1 < a}
         \P[Q',D_1](a_1, k_1)  \cdot \P[Q',D_2](a, k - k_1) \text.
    \end{align*}

If $D_1$ and $D_2$ do not contain $R$, we can argue about the complement $\binom{\size{D'}}{k} - \P[Q',D']$ as follows. $Q'(E \cup D_1\exo \cup D_2\exo)$ is empty if and only if both $Q'(E_1 \cup D_1\exo)$ and $Q'(E_2 \cup D_2\exo)$ are empty. Hence, we sum up the product of the complements of $\P[Q',D_1](k_1)$ and $\P[Q',D_2](k_2)$ over all $k_1 + k_2 = k$. Using the identity $\binom{n_1 + n_2}{k} = \sum_{k_1} \binom{n_1}{k_1}\binom{n_2}{k-k_1}$, this reduces to
    \begin{align*}
        \combineU&(\P[Q',D_1], \P[Q',D_2])(k) = 
        \sum_{k_1 = 0}^{k} \Big(\P[Q',D_1](k_1) \cdot \binom{\size{D_2\endo}}{k-k_1} \\ &+ \binom{\size{D_2\endo}}{k_1} \cdot \P[Q',D_2](k-k_1) - \P[Q',D_1](k_1) \cdot \P[Q',D_2](k-k_1)\Big)\text.
    \end{align*}

The algorithm for $\combineX(\P[D_1, Q_1], \P[D_2, Q_2])$ is straightforward: 
For a Cartesian product to be nonempty, we need both sides to be nonempty. So if neither $D_1$ nor $D_2$ contain $R$, we have
    \[
        \combineX(\P[Q_1,D_1], \P[Q_2,D_2])(k) = \sum_{k_1 = 0}^{k} \P[Q_1,D_1](k_1) \cdot \P[Q_2,D_2](k - k_1).
    \]
If $D_1$ contains $R$ (and $D_2$ does not), we need to attain $a$ as the maximum of $(\tau \circ Q_1)(E_1 \cup D_1\exo)$ and have $Q_2(E_2 \cup D_2\exo)$ nonempty, so
    \[
        \combineX(\P[Q_1,D_1], \P[Q_2,D_2])(a, k) = \sum_{k_1 = 0}^{k} \P[Q_1,D_1](a, k_1) \cdot \P[Q_2,D_2](k - k_1).
    \]

\section{Omitted Proofs from \Cref{sec:q}}\label{apx:q}
We now give the missing proofs and details from \Cref{sec:q}.

\subsection{Algorithms}
 We first proof the closed formula for Shapley values for $\aggavg$-aggregation.

\propclosedformulaavg*

\begin{proof}
    Since this proof is heavy on computations, let us first simplify the notation. Let $D = \set{f_1, \ldots, f_n}$ and $f_i = R(\vec t_i)$.  We set $y_i \defeq \tau(\vec t_i)$ and $A = \aggavg \circ \tau \circ Q$. Let $[n] = \set{1, \ldots, n}$. With this notation, we will access the facts we consider via their indices and therefore consider index sets $J$ instead of their corresponding coalitions $C$.
    
    The empty set is a special case for the average, which we need to treat separately. For that, we observe $\nu_{\avg}(\multiset{y_i}) - \nu_{\avg}(\emptyset) = y_i$ and $q_0 = \frac{1}{n}$. Now,  a short calculation yields
    \begin{align*}
        \Shapley(f_i, A) 
        &= \sum_{k=0}^{n-1} q_k \sum_{\substack{J \subseteq [n] \setminus \set{i} \\ \size{J} = k}}
        \big(
        \avg(J \cup \set{i}) - \avg(J) 
        \big) \\
        &= \frac{y_i}{n} + \sum_{k=1}^{n-1} q_k 
        \sum_{\substack{J \subseteq [n] \setminus \set{i} \\ \size{J} = k}}
        \bigg(
        \frac{1}{k + 1}\sum_{j \in J \cup \set{i}} y_j 
        - \frac{1}{k}\sum_{j \in J} y_j 
        \bigg) \\
        &= \frac{y_i}{n} + \sum_{k=1}^{n-1} q_k
        \sum_{\substack{J \subseteq [n] \setminus \set{i} \\ \size{J} = k}} 
        \bigg(
        \frac{y_i}{k + 1}  
        - \frac{1}{k(k+1)}\sum_{j \in J} y_j 
        \bigg) \\
        &= \frac{y_i}{n} + \sum_{k=1}^{n-1} q_k \cdot \Bigg(
        \bigg( \frac{y_i}{k + 1}
        \sum_{\substack{J \subseteq [n] \setminus \set{i} \\ \size{J} = k}} 1 \bigg) +
        \bigg( 
        \frac{1}{k(k+1)} \sum_{\substack{J \subseteq [n] \setminus \set{i} \\ \size{J} = k}}
        \sum_{j \in J} y_j 
        \bigg) \Bigg) \\
        &= \star
    \end{align*}

    Now, we observe that there are $\binom{n-1}{k}$ subsets $J$ of $[n] \setminus \set{i}$ of size $k$ and each element $j \in [n] \setminus \set{i}$ appears in exactly $\binom{n-2}{k-1}$ of them. Using this, we continue from the last line of the chain of equations to obtain

    \begin{align*}
        \star
        &= \frac{y_i}{n} + \sum_{k=1}^{n-1} q_k \cdot \Bigg(
        \bigg( \frac{y_i}{k + 1}
        \binom{n-1}{k} \bigg) +
        \bigg( 
        \frac{1}{k(k+1)} \binom{n-2}{k-1}
        \sum_{j \in [n] \setminus \set{i}} y_j 
        \bigg) \Bigg)
        \\
        &= \frac{y_i}{n} + y_i \cdot \sum_{k=1}^{n-1} \frac{k!(n-k-1)!}{n!}\frac{1}{k+1}\frac{(n-1)!}{k!(n-k-1)!} \\
        &{} \qquad - \bigg(\sum_{j \in [n] \setminus \set{i}} y_j\bigg) \sum_{k=1}^{n-1} \frac{k!(n-k-1)!}{n!}\frac{1}{k(k+1)}\frac{(n-2)!}{(k-1)!(n-k-1)!} \\
        &= \frac{y_i}{n} + y_i \cdot \sum_{k=1}^{n-1}\frac{1}{n(k+1)} + \bigg(\sum_{j \in [n] \setminus \set{i}} y_j\bigg) \cdot \sum_{k=1}^{n-1} \frac{1}{n(n-1)(k+1)}\\
        &= \frac{H(n)}{n} \cdot y_i - \frac{H(n) - 1}{n(n-1)}\sum_{j \in [n] \setminus \set{i}} y_j
    \end{align*}
This completes the proof.
\end{proof}

\subsubsection*{Details on the instantiation of the generic algorithm}

We give further details on the computation for $\sumword_k(A,D)$ for $\aggavg$-aggregation and $\aggquantile_q$-aggregation.

As explained in \Cref{sec:q}, To compute $\sumword_k(\aggavg \circ \tau \circ Q, D)$, we write the sum over a bag $B$ as the sum over all elements of the set underlying $B$ times their multiplicity in $B$. Using this in the formula for the average yields
\begin{align*}
\sumword_k(\aggavg \circ \tau \circ Q, D) &=  \sum_{E \in \binom{D\endo}{k}} (\aggavg \circ \tau \circ Q)(E \cup D\exo)\\
&= \sum_{E \in \binom{D\endo}{k}} \sum_{a \in (\tau \circ Q)(D)} \frac{a \cdot \mult{a}{=}((\tau \circ Q)(E \cup D\exo)))}{\size{(\tau \circ Q)(E \cup D\exo))}}\\
& = \sum_{a \in (\tau \circ Q)(D} \sum_{0 \leq \ell_< + \ell_= + \ell_> \leq \size{Q(D)}} \frac{a \cdot \ell_=}{\ell_< + \ell_= + \ell_> } \P[D,Q](a,k,\ell_<,\ell_=,\ell_>).    
\end{align*}

The formula for $\sumword_k(\aggquantile_q \circ \tau \circ Q, D)$ is obtained with a similar reasoning. Now, it remains to show how we can compute in polynomial time the functions $\combineU$ and $\combineX$ for $\P$ as defined in \Cref{sec:q}. 

For computing $\combineU(\P[Q',D_1],\P[Q',D_2])$, there are three different cases to consider:
\begin{enumerate}
    \item $Q'$ contains the atom $R(\vec z)$ that defines $\tau$,
    \item $Q'$ does not contain $R(\vec z)$ and is not Boolean, and
    \item $Q'$ is Boolean.
\end{enumerate}
In the first and second case, the root variable $x$ considered in step \ref{alg:choice-of-root} of the algorithm is a free variable, so $ $ and $ $ are disjoint sets of facts. Because of this, the parameters we consider add up: for example, if $Q'(E_1 \cup D_1\exo)$ contains $\ell_=^{(1)}$ facts $\vec t$ with $\tau(\vec t) = a$ and $Q'(E_2 \cup D_2\exo)$ contains $\ell_=^{(2)}$ such facts, then $Q'(E_1 \cup E_2 \cup D_1\exo \cup D_2\exo)$ contains $\ell_=^{(1)} + \ell_=^{(2)}$ such facts. The same holds for the over parameters we keep track of. To shorten notation, we write $\vec \ell = (\ell_<, \ell_=, \ell_>)$ and $(a,k,\vec \ell)$ for the quintuple $(a,k,\ell_<,\ell_=,\ell_>)$. With that, we obtain
\begin{align*}
    \combineU&(\P[Q',D_1], \P[Q',D_2])(a,k,\vec \ell) \\ &= \sum_{k_1=0}^{\size{D_1\endo}} \sum_{0\leq \aggsum(\vec \ell_1 )\leq\size{Q'(D_1)}} \P[Q',D_1](a, k_1, \vec \ell_1) \cdot \P[Q',D_2](a, k - k_1, \vec \ell - \vec \ell_1) \text.
\end{align*}
in the first and
\begin{align*}
    \combineU&(\P[Q',D_1], \P[Q',D_2])(k,\ell)  = \sum_{k_1 = 0}^{\size{D_1\endo}} \sum_{\ell_1 = 0}^{\size{Q'(D_1)}} \P[Q',D_1](k_1, \ell) \cdot  \P[Q',D_2](k - k_1, \ell - \ell_1)
\end{align*}
in the second case. In the third case, $Q'$ is Boolean, so we simply check whether the query is false for $\ell = 0$ or true for $\ell = 1$. Hence, we obtain
\begin{align*}
    \combineU&(\P[Q',D_1], \P[Q',D_2])(k,0)  = \sum_{k_1 = 0}^{\size{D_1\endo}} \P[Q',D_1](k_1, 0) \cdot  \P[Q',D_2](k - k_1, 0)
\end{align*}
and 
$\combineU(\P[Q',D_1], \P[Q',D_2])(k,1)$ is the number of the other $k$-subsets of $D_1\endo \cup D_2\endo$, which is $\binom{\size{D_1\endo} + \size{D_2\endo}}{k} - \combineU(\P[Q',D_1], \P[Q',D_2])(k,0).$

The formulas for $\combineX(\P[Q_1, D_1], \P[Q_2, D_2])$ are again straightforward. Since we consider the Cartesian of $Q_1$ and $Q_2$, all the parameters we keep track of (except $k$) behave multiplicatively and we need to go over all ways to obtain a given value for this product. For example, if $D_1$ contains $R$ and $D_2$ is not Boolean (all other combinations follow the same pattern and are left to the reader), we obtain the following formula for $\vec \ell \neq \vec 0$.
\begin{align*}
\combineX&(\P[Q_1, D_1], \P[Q_2, D_2])(a,k,\vec \ell) \\ &=\sum_{k_1= 0}^{\size{D_1\endo}} \; \sum_{\ell_2 \,\vert \gcd(\vec \ell)} \P[Q_1, D_1](a, k_1, \frac{1}{\ell_2}\vec \ell) \cdot \P[Q_2, D_2](k-k_1, \ell_2) \, ,    
\end{align*}
where $\ell_2 \,\vert \gcd(\vec \ell)$ means that $\ell_2$ divides $\gcd(\vec \ell)$ (or, equivalently, all values of $\vec \ell$). For $\vec \ell = \vec 0$, one of the factors must be $0$ or $\vec 0$, so we obtain
\begin{align*}
\combineX&(\P[Q_1, D_1], \P[Q_2, D_2])(a,k,\vec 0) \\ &=\sum_{k_1= 0}^{\size{D_1\endo}} \bigg( \sum_{\ell_2 = 1}^{\size{Q_2(D_2)}} \P[Q_1, D_1](a, k_1, \vec 0) \cdot \P[Q_2, D_2](k-k_1, \ell_2) \\
& \qquad \qquad + \sum_{0 \leq \aggsum(\vec \ell_1) \leq \size{Q_1(D_1)}}\P[Q_1, D_1](a, k_1, \vec \ell_1) \cdot \P[Q_2, D_2](k-k_1, 0) \bigg) 
\,.    
\end{align*}

This concludes the description of the instantiation for $\aggavg$ and $\aggquantile_q$.

\subsection{General Reduction}
In this section, we prove the following lemma, giving a general reduction from an \AggCQ built upon the CQ $\Qxyy$ to an \AggCQ with a CQ that is \allhierarchical but not \qhierarchical, where both use the same aggregation function.

\lemmageneralreductionnonqhierarchical*
The lemma follows directly from \Cref{item:generalization-lemma-item} of the next lemma, namely \Cref{lemma:generalization-hardness-cq}, using the same $h$ and $\tau_0$.

\begin{lemma}\label{lemma:generalization-hardness-cq}
Consider an \AggCQ of the form $\alpha\circ\tau\circ\Qxyy$. Let $Q_0$ be a CQ without self-joins, such that $Q_0$ is \allhierarchical but not q-hierarchical. There is a value function $\tau_0$, having the form $\tau \circ \tauidi{i}$, and a polynomial-time algorithm that, given an input database $D$ for $\Qxyy$, constructs an input database $D_0$ for $Q_0$, a function $h:D\endo\rightarrow D_0\endo$, and a function $g:\Qxyy(D)\rightarrow Q_0(D)$ such that:
\begin{enumerate}
    \item $h$ and $g$ are both bijections;
    \item For every $E\subseteq D\endo$ it holds that:
    \label{item:generalization-lemma-item}
    \begin{itemize}
        \item $g(\Qxyy(D\exo\cup E)) = Q_0(D_0\exo\cup h(E))$
    where $h(E)\defeq\set{h(f)\mid f\in E}$ and, for every set $T$ of tuples, $g(T)\defeq\set{g(t)\mid t\in T}$;
        \item $\multiset{\tau(t)\mid t\in \Qxyy(D\exo\cup E)}=\multiset{\tau_0(t_0)\mid t_0\in Q_0(D_0\exo\cup h(E))}$.
        \end{itemize}
\end{enumerate}
\end{lemma}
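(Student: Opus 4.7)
The plan is to realize $\Qxyy$ inside $Q_0$ by identifying, inside the variable set of $Q_0$, an existential variable $u$ and a free variable $v$ playing the roles of $y$ and $x$, respectively. Such a pair exists because $Q_0$ is not q-hierarchical: by definition there are $u$ existential and $v$ free with $\atoms(Q_0,v)\subsetneq\atoms(Q_0,u)$. Since $Q_0$ is all-hierarchical, I can cleanly partition $\atoms(Q_0)$ into three disjoint groups: $\mathcal{A}_{uv}\defeq\atoms(Q_0,v)$ (every one of these necessarily contains both $u$ and $v$), $\mathcal{A}_{u}\defeq\atoms(Q_0,u)\setminus\atoms(Q_0,v)$ (non-empty, since the containment is strict), and $\mathcal{A}_{0}\defeq\atoms(Q_0)\setminus\atoms(Q_0,u)$. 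I would then pick one designated atom $\hat R_0\in\mathcal{A}_{uv}$ to play the role of $R(x,y)$ and one designated atom $\hat S_0\in\mathcal{A}_u$ to play the role of $S(y)$, and reserve a distinct fresh constant $\star_w$ for every $w\in\vars(Q_0)\setminus\{u,v\}$.

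Given the input database $D$ over $\{R,S\}$, I would build $D_0$ as follows. For each atom $R_0(\vec z)\in\mathcal{A}_{uv}$ and each fact $R(a,b)\in D$, insert the fact obtained from $R_0(\vec z)$ by substituting $v\mapsto a$, $u\mapsto b$, and $w\mapsto\star_w$ for all other occurring variables $w$; declare this fact endogenous precisely when $R_0=\hat R_0$ and $R(a,b)\in D\endo$, exogenous otherwise. Symmetrically, for each $R_0(\vec z)\in\mathcal{A}_u$ and each value $b$ occurring as a $y$-value in $D^R\cup D^S$, insert the analogous fact with $u\mapsto b$, endogenous precisely when $R_0=\hat S_0$ and $S(b)\in D\endo$. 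For each $R_0(\vec z)\in\mathcal{A}_0$, insert the single fully-padded exogenous fact that uses $\star_w$ at every position. Let $i$ be the head-position of $v$ in $Q_0$, set $\tau_0\defeq\tau\circ\tauidi{i}$, define $h(R(a,b))$ and $h(S(b))$ to be their corresponding $\hat R_0$- and $\hat S_0$-representatives, and let $g(a)$ be the unique tuple with $a$ at position $i$ and $\star_w$ at the other head positions.

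To verify the lemma, the central claim is that for any $E\subseteq D\endo$ the homomorphisms from $Q_0$ to $D_0\exo\cup h(E)$ are in bijection with those from $\Qxyy$ to $D\exo\cup E$, via the correspondence $(u\mapsto b,\,v\mapsto a,\,w\mapsto\star_w)\leftrightarrow(x\mapsto a,\,y\mapsto b)$. Freshness of the $\star_w$ constants forces every homomorphism on $D_0$ to map each $w\notin\{u,v\}$ to $\star_w$, because in every fact of $D_0$ the position of $w$ holds $\star_w$. Once this forcing is established, the assignment $(a,b)$ extends to a full homomorphism iff (i) every atom in $\mathcal{A}_{uv}$ is witnessed, which reduces to $R(a,b)\in D\exo\cup E$; (ii) every atom in $\mathcal{A}_u$ is witnessed, which reduces to $S(b)\in D\exo\cup E$; (iii) every atom in $\mathcal{A}_0$ is witnessed, which always holds by construction. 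These are precisely the conditions for $a\in\Qxyy(D\exo\cup E)$, giving both the answer-set bijection $g(\Qxyy(D\exo\cup E))=Q_0(D_0\exo\cup h(E))$ and (since the $a$ uniquely determines the answer tuple) the equality of bags of $\tau$-values, as $\tau_0(g(a))=\tau(a)$. The Shapley equality then follows directly because the utility games of $f$ and $h(f)$ are identical coalitional games.

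The main obstacle I anticipate is the uniqueness-of-extension step: I must rule out any homomorphism that accidentally maps some $w\notin\{u,v\}$ to a non-$\star_w$ value. This boils down to checking that $w$ occurs in at least one atom of $Q_0$ (which holds since $w\in\vars(Q_0)$) and that, in the $D_0$-facts corresponding to that atom, the position of $w$ always holds $\star_w$---which is true by construction irrespective of whether the atom lies in $\mathcal{A}_{uv}$, $\mathcal{A}_u$ or $\mathcal{A}_0$. A minor point that needs care is when a variable appears multiple times in the same atom (repeated variables in $R_0(\vec z)$): our insertion rule repeats the same substitutee at all of its positions, so no spurious homomorphism is created. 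Once these forcing arguments are in place, the bijections $h$ and $g$ are clearly injective (by the fresh choice of representatives) and surjective (by construction), completing the proof.
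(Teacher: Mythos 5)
Your overall strategy is exactly the paper's: pick a free variable $v$ and an existential variable $u$ with $\atoms(Q_0,v)\subsetneq\atoms(Q_0,u)$, designate one atom containing both to carry the $R$-facts and one atom containing only $u$ to carry the $S$-facts, pad every remaining position with fresh constants so that homomorphisms are forced onto the intended assignment, and set $\tau_0=\tau\circ\tauidi{i}$ for a head position $i$ of $v$. The bookkeeping (three-way partition of the atoms, forcing argument, repeated-variable caveat, bijectivity of $h$ and $g$) is all sound and, if anything, slightly cleaner than the paper's single padding constant $c$.

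There is, however, one concrete error in the construction of $D_0$ that breaks the central claim. For the atoms in $\mathcal{A}_u$ you insert a fact for \emph{every} value $b$ occurring as a $y$-value in $D^R\cup D^S$, and you mark it endogenous only when the atom is $\hat S_0$ and $S(b)\in D\endo$. Consequently, for a value $b$ that appears as the second component of some $R$-fact but for which $S(b)\notin D$ at all, the gatekeeper fact $\hat S_0[b]$ is still inserted --- and, since the endogeneity test fails, it lands in $D_0\exo$. Your verification step (ii), ``every atom in $\mathcal{A}_u$ is witnessed iff $S(b)\in D\exo\cup E$,'' is then false: take $D\exo=\set{R(1,2)}$ with no fact $S(2)$; then $\Qxyy(D\exo\cup E)=\emptyset$ for every $E$, yet $D_0\exo$ contains $\hat R_0[1,2]$, the spurious exogenous $\hat S_0[2]$, and all padding, so $g(1)\in Q_0(D_0\exo)$ and both bullets of item~2 (and the bijectivity of $g$) fail. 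The fix is immediate and restores the paper's construction: insert the designated fact $\hat S_0[b]$ only for $b$ with $S(b)\in D$ (inheriting its endogenous/exogenous status from $S(b)$); the non-designated atoms of $\mathcal{A}_u$ may keep the larger insertion domain or be restricted to $b\in D^S$, since a homomorphism must pass through $\hat S_0$ anyway. With that correction your argument goes through as written.
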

\begin{proof}
Recall that $\Qxyy$ is defined by $\Qxyy(x) \la R(x,y), S(y)$. Let $x_0$ and $y_0$ be two variables of $Q_0$ such that $\atoms(x_0)\subsetneq\atoms(y_0)$ where $x_0$ is a head variable and $y_0$ is an existential variable. Let $\varphi^R_0$ be an atom that contains both $x_0$ and $y_0$, and let $\varphi^S_0$ be an atom that contains $y_0$ but not $x_0$.  Let $R_0$ and $S_0$ be the relation names of $\varphi^R_0$ and $\varphi^S_0$, respectively. Choose an arbitrary constant $c\in\consts$. 

Let $D$ be an input database for $\Qxyy$. Let $a$ and $b$ be two values in $D$, and let $\varphi_0$ be an atom of $Q_0$. We denote by
$\varphi_0[a,b]$ the fact that is obtained from $\varphi_0$ by replacing $x_0$ with $a$, replacing $y_0$ with $b$, and replacing every other variable with $c$. If $\varphi_0$ contains only $x_0$, then we denote by $\varphi_0[a]$ the fact $\varphi_0[a,b]$ for some arbitrary $b$ (which makes no difference). 

The database $D_0$ consists of the following facts:
\begin{itemize}
    \item $\varphi_0[a,b]$ for every tuple tuple $(a,b)\in\Qxyy(D)$;
    \item $\varphi^R_0[a,b]$ for every fact $R(a,b)\in D$;
    \item $\varphi^S_0[a]$ for every fact $S(a)\in D$.
\end{itemize}
We define every fact $\varphi^R_0[a,b]$ as endogenous if $R(a,b)$ is endogenous, and every fact $\varphi^S_0[a]$ as endogenous if $S(a)$ is endogenous. Every other fact of $D_0$ is exogenous. 

The function $h$ maps every endogenous $R(a,b)$ in $D\endo$
to the endogenous $\varphi^R_0[a,b]$ in $D_0\endo$, and every endogenous $S(a)$ in $D\endo$ to the endogenous $\varphi^S_0[a]$ in $D_0\endo$.

Recall that $Q_0$ has no self-joins. Therefore, every answer in $Q_0(D_0)$ is obtained from an answer $(a)\in \Qxyy(D)$ by taking the sequence of variables in the head of $Q_0$, replacing $x_0$ with $a$, and replacing every other variable with $c$; we denote this tuple by $t_0^a$. Note also that every answer $(a)\in \Qxyy(D)$ gives rise to the answer $t_0^a$ in $Q_0(D_0)$. Hence, we define $g((a))=t_0^a$. 

Recall that $\tau$ is defined over tuples of length one. Then $\tau_0$ is defined by $\tau_0\defeq \tau \circ \tauidi{i}$, that is, application of $\tau$ to the $i$th entry of the tuple.

The items of the lemma then follow directly from the construction of $D_0$, $h$, $g$, and $\tau_0$.
\end{proof}

\subsection{Hardness for the Simplest non-Q-hierarchical CQ}
In this section, we prove the following.

\lemmaavgquantilesimplesthard*

\subsubsection{Average}
We begin with the $\aggavg$ function.

\begin{lemma}
\label{lemma:avg-non-q-hierarchical-hard}
Let $Q$ be a CQ without self-joins, such that $Q$ is \allhierarchical but not q-hierarchical. There is a localized value function $\tau$ such that for 
$A=\aggavg\circ\tau\circ Q$, computing $\Shapley(f, A)$, given a database $D$ and a fact $f$, is \fpsharpp-complete. 
\end{lemma}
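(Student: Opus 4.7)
The plan is to reduce \#Set-Cover to the computation of $\Shapley(f, \aggavg \circ \taurelu \circ \Qxyy)$. By \Cref{lemma:general-reduction-non-q-hierarchical}, this suffices, since hardness for $\Qxyy$ propagates to every \allhierarchical, non-\qhierarchical CQ through the reduction given there. The \#Set-Cover problem (given $\X$ and $\Y \subseteq 2^\X$, count the $\C \subseteq \Y$ with $\bigcup \C = \X$) is \#P-hard by Provan and Ball, so this is the right source of hardness.

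For the encoding, I would take the elements of $\X$ to be distinct positive reals used as $x$-values and the sets of $\Y$ as $y$-values, put $R(e, Y)$ exogenously iff $e \in Y$, and put an endogenous fact $S(Y)$ for each $Y \in \Y$. A coalition $E$ of endogenous facts then corresponds to a subcollection $\C_E$ of $\Y$, and $\Qxyy(E \cup D\exo)$ is exactly the set of elements covered by $\C_E$; since $\taurelu$ is the identity on positive reals, $A$ returns the average of these covered values. Grouping the Shapley summation by the pair $(i,j)$, where $j = |E|$ and $i$ is the number of covered elements, rewrites $\Shapley(f, A)$ as a linear combination of unknowns $Z_{i,j}$ that count size-$j$ subcollections of $\Y$ covering exactly $i$ elements. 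A single Shapley evaluation gives only one equation, so I would introduce a family of perturbations $D^{(m,t)}$ of $D$ parameterized by tunable integers $m, t$: $m$ controls padding that adjusts the $\aggavg$ denominator (e.g.\ exogenous facts contributing to the answer-set size without changing the coverage structure), and $t$ controls padding that adjusts the size of the player set and hence the Shapley coefficients $q_k$. Enumerating $(m,t)$ over a grid of appropriate polynomial size yields a linear system in the $Z_{i,j}$'s.

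The key structural observation, which I expect to drive the proof, is that the coefficient matrix of this system factors as the Kronecker product $H \otimes F$, where $H$ is a Hilbert-type matrix with entries $\tfrac{1}{i+m-1}$ arising from the $\aggavg$ denominator and $F$ is the Hankel matrix with factorial entries $(j+t)!$ arising from the $q_k = \tfrac{k!(|P|-k-1)!}{|P|!}$ factors together with the binomial counts over subsets of fixed $j$. The Hilbert matrix is classically invertible (its inverse is known in closed form), the Hankel matrix of factorials is invertible by a standard Cauchy-Binet / LU argument, and Kronecker products of invertible matrices are invertible. Inverting the system recovers all $Z_{i,j}$, and $\sum_j Z_{|\X|,j}$ is the \#Set-Cover answer; the whole construction is polynomial-time, giving \fpsharpp-hardness.

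The main obstacle I anticipate is arranging the two padding parameters $m$ and $t$ so that their contributions separate cleanly, so that the coefficient for $(i,j)$ really factorizes as an $m$-dependent Hilbert entry times a $t$-dependent Hankel entry, with no cross-terms. This requires a careful choice: the $m$-padding must alter only the denominator of $\aggavg$ (by contributing a fixed number of additional exogenous covered values whose $\tau$-value can be absorbed or cancelled), while the $t$-padding must alter only the Shapley probability weights (by adding endogenous facts that are either null players or contribute predictably). Verifying that no parasitic lower-order terms survive, and that the combinatorial identities collapse to the neat Hilbert and Hankel forms, is where the bulk of the calculation will sit; once that factorization is secured, invertibility and the reduction conclude routinely.
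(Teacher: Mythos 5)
Your high-level plan coincides with the paper's: reduce from \#Set-Cover to $\aggavg\circ\taurelu\circ\Qxyy$, propagate hardness via \Cref{lemma:general-reduction-non-q-hierarchical}, set up a linear system in the counts $Z_{i,j}$ using two padding parameters, and invert a coefficient matrix that factors as the Kronecker product of a Hilbert matrix and a factorial Hankel matrix. However, your concrete encoding has a gap that breaks the linear system before the matrix analysis even starts. If the elements of $\X$ are distinct positive reals and $\taurelu$ acts as the identity on them, then $A(E\cup D\exo)$ is the average of the \emph{actual values} of the covered elements; this quantity is not determined by the pair $(i,j)=(\text{number covered},|E|)$, so the Shapley sum does not collapse into a linear combination of the $Z_{i,j}$ with coefficients depending only on $(i,j)$. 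Making all element values equal does not repair this: then $\aggavg$ is constant on nonempty answer sets, the query degenerates to the Boolean version of $\Qxyy$ (which is hierarchical), and no hardness is obtained. You also never identify which fact plays the role of $f$; with your encoding the marginal contribution of a fact $S(Y)$ depends on \emph{which} elements of $Y$ were already covered, which compounds the problem.

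The missing idea is to make every covered element contribute the value $0$ and to let a distinguished target fact introduce a single answer of value $1$, so that the average probes only the \emph{count} of covered elements through the denominator. Concretely, the paper encodes element $i\in\X$ as the answer $x=-i$ (so $\taurelu$ maps it to $0$), adds $q+1$ always-present zero-valued padding answers, and adds the exogenous fact $R(1,0)$ together with the endogenous target $f=S(0)$, so that adding $f$ creates the unique answer of value $1$. The marginal contribution of $f$ is then $\frac{1}{i+q+1}$ when $i$ elements are covered, and $0$ whenever one of the $r$ endogenous ``blocker'' facts $S(m+1+j)$ (each of which also produces the answer $1$) precedes $f$; this yields the Hilbert factor $\frac{1}{q'+i'-1}$ from the $q$-padding and the factorial factor $\frac{j!\,(m+r-j)!}{(m+r+1)!}$ from the $r$-padding, with the clean separation you were worried about. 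With that repair, the remainder of your argument (Kronecker factorization, invertibility of both factors, recovery of $\sum_j Z_{n,j}$) goes through essentially as you describe.
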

To prove \Cref{lemma:avg-non-q-hierarchical-hard}, we first prove hardness for the query $A=\aggavg\circ\taurelu\circ \Qxyy$. %
To prove \Cref{lemma:avg-non-q-hierarchical-hard}, we can focus on the query $A$ and apply \Cref{lemma:general-reduction-non-q-hierarchical}; thus, it suffices to prove that:
\begin{lemma}\label{lemma:avg-simplest-hard}
For $A=\aggavg\circ\taurelu\circ \Qxyy$, computing $\Shapley(f, A)$, given a database $D$ and a fact $f$, is \fpsharpp-complete. 
\end{lemma}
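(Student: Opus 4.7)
The plan is to prove \fpsharpp-hardness by a Turing reduction from the \e{\#Set-Cover} problem: given a universe $\X$ and a family $\Y \subseteq 2^{\X}$, count the subfamilies $\C \subseteq \Y$ with $\bigcup \C = \X$ (shown to be \#P-complete in \cite{DBLP:journals/siamcomp/ProvanB83}). The natural auxiliary quantities are $Z_{i,j} \defeq |\set{\C \subseteq \Y : |\C| = j,\ |\bigcup \C| = i}|$; the desired cover count is $\sum_{j} Z_{|\X|,j}$, so it suffices to compute all the $Z_{i,j}$'s. The strategy is to construct, in polynomial time, a parametrized family of databases $D^{(p,q)}$ for $\Qxyy$ together with a designated endogenous fact $f$, such that $\Shapley(f,A)[D^{(p,q)}]$ is an explicit $\reals$-linear combination $\sum_{i,j} M^{(p,q)}_{i,j}\, Z_{i,j}$ (plus a polynomial-time-computable offset). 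Polynomially many oracle calls then assemble a linear system that, once inverted, yields every $Z_{i,j}$.

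For the construction, I would encode each $s_j\in\Y$ as a distinct positive integer and place the membership pairs $(s_j,e)$ with $e\in s_j$ into an exogenous relation $R$, while making the elements of $\X$ into endogenous $S$-facts. Via the $\sumword_k$-decomposition of \Cref{sec:template}, computing $\Shapley(f,A)$ reduces to evaluating $\sum_{E} A(D\exo \cup E)$ over subsets $E$ of the endogenous $S$-facts of each size; for each such $E$ the ReLU-average is indexed by the sets $s_j$ meeting $E$. I introduce the parameter $q$ by adjoining $q$ auxiliary exogenous facts in $R$ matched to ``always-present'' exogenous $S$-entries, producing $q$ dummy sets that are always included in the answer bag and contributing a factorial factor $(j+q)!$ after expanding the induced combinatorial sums. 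I introduce the parameter $p$ by padding $S$ with $p$ auxiliary endogenous facts whose images under $\taurelu$ vanish (so they do not pollute the numerator of the average but systematically shift its denominator); summing the resulting reciprocals against the Shapley weights $q_k$ from \eqref{eq:shapley} produces, on the stratum of $E$'s covering exactly $i$ elements of $\X$, a Hilbert-type coefficient of the form $\tfrac{1}{i+p-1}$. Reorganizing the double sum by the pair $(i,j)$ separates into a product and yields $M^{(p,q)}_{i,j} = \tfrac{(j+q)!}{i+p-1}$ up to a common nonzero rational factor.

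Letting $p$ range over $\set{1,\dots,|\X|+1}$ and $q$ over $\set{0,\dots,|\Y|}$ gives a square linear system whose coefficient matrix, after reindexing, is the Kronecker product $H \otimes K$ of the Hilbert matrix $H_{i,p} = 1/(i+p-1)$ and the Hankel factorial matrix $K_{j,q} = (j+q)!$. Invertibility of $H$ is classical, and invertibility of $K$ follows from the Hankel determinant identity $\det[(i+j)!]_{i,j=0}^{n} = \prod_{k=0}^{n} (k!)^{2} \neq 0$. Since the Kronecker product of invertible matrices is invertible, Gaussian elimination recovers all $Z_{i,j}$ in polynomial time, and hence the \e{\#Set-Cover} count. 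I expect the main obstacle to be the \emph{decoupling} of the two parameters: the construction must be engineered so that the joint effect of the $p$- and $q$-auxiliary facts factors cleanly as $(j+q)!/(i+p-1)$, with no residual coupling between $(i,p)$ and $(j,q)$. This requires careful bookkeeping of the Shapley coefficients $q_k$ against binomial sums over subsets $E$ of each size, and of how non-positive $\taurelu$-values are counted in the denominator of the average without contributing to its numerator; once this factorization is verified, the Kronecker argument closes the reduction.
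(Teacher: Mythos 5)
Your high-level plan coincides with the paper's: reduce from \#Set-Cover via the quantities $Z_{i,j}$, obtain one linear equation per Shapley oracle call over a two-parameter family of databases, and prove invertibility by recognizing the coefficient matrix as the Kronecker product of the Hilbert matrix and the factorial Hankel matrix. However, the concrete construction you describe would not produce the coefficient $\tfrac{(j+q)!}{i+p-1}$, for two reasons. First, the encoding is reversed: you make the \emph{elements} of $\X$ the endogenous $S$-facts, so coalitions in the Shapley game are subsets of $\X$, not subfamilies of $\Y$; the resulting sums stratify subsets of elements by how many sets they hit, which is not the quantity $Z_{i,j}$ you defined. The paper puts the \emph{sets} $Y_j$ as endogenous facts $S(j)$ and the elements as the $x$-values of the answers (facts $R(-i,j)$ for $i\in Y_j$), so that a coalition of size $j$ covering $i$ elements contributes to the $(i,j)$ stratum.

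Second, and more importantly, your two gadget mechanisms are attributed to the wrong kinds of facts. Factorial coefficients such as $(j+q)!$ arise from the Shapley weights $q_{\size{C}}=\tfrac{|C|!\,(|P|-|C|-1)!}{|P|!}$, so they can only be varied by changing the number of \emph{endogenous} players; exogenous always-present answers (your $q$-gadget) leave $|P|$ untouched and only pad the answer bag, i.e., they shift the denominator of the average. Conversely, your $p$-gadget adds \emph{endogenous} zero-valued facts to shift that denominator, but since these are players they appear in some coalitions and not others, so the shift is not uniform on a stratum and, worse, it perturbs the very factorial weights you need to keep clean --- exactly the coupling you flag as the main obstacle. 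The paper's construction does the opposite: the Hilbert factor $\tfrac{1}{i+q+1}$ comes from $q+1$ \emph{exogenous} always-present answers with $\taurelu$-value $0$ (facts $R(-n-i,m+1)$ paired with an exogenous $S(m+1)$), while the factorial factor $\tfrac{j!\,(m+r-j)!}{(m+r+1)!}$ comes from $r$ additional \emph{endogenous} blocker facts $S(m+1+j)$, each paired with an exogenous $R(1,m+1+j)$ so that selecting any of them before the designated fact $f=S(0)$ makes $f$'s marginal contribution vanish. With the gadgets as you describe them, the linear system would not have the claimed product form, so the construction needs to be redesigned along these lines rather than merely verified.
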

\begin{proof}
We will show a reduction from the \e{\#Set-Cover} problem: given a set $\X$ of elements, a collection $\Y$ of nonempty subsets of $\X$, determine the number of collections $\C\subseteq\Y$ that cover all of $\X$ (i.e., $\X=\cup\C$). This problem is known to be \#P-complete~\cite{DBLP:journals/siamcomp/ProvanB83}. 
In the remainder of this section, we fix the input $(\X,\Y)$ for the problem, where $\X\defeq\set{1,\dots,n}$ and $\Y\defeq\set{Y_1,\dots,Y_m}$. 
\subparagraph{Construction.}
For $i\in\set{0,\dots,n}$ and $j\in\set{0,\dots,m}$, we denote by $Z_{i,j}$ the number of subsets of $\Y$ with $j$ members, covering precisely $i$ elements from $\X$. In notation:
\begin{equation}
    Z_{i,j}\defeq |\set{\C\subseteq\Y : \lvert\cup\C\rvert=i\mbox{ and }|\C|=j}|
\end{equation}
For example, $Z_{0,j}=0$ whenever $j>0$, and $Z_{0,0}=1$. Importantly, we can get the number of set covers by $(\X,\Y)$ by summing up $Z_{n,j}$ over $j=0,\dots,m$. We will show how to compute the quantities $Z_{i,j}$ using the Shapley value.

We construct a collection of $n\cdot m$ databases $D_{q,r}$, where $q\in\set{0,\dots,n}$ and $r\in\set{0,\dots,m}$. \Cref{fig:avg-reduction} shows an illustration of the construction.

The database $D_{q,r}$ consists of the following facts:
\begin{itemize}
\item $R$ contains the following exogenous facts:
\begin{itemize}
\item $R(-i,j)$ for every $i\in\X$ and $Y_j\in\Y$ such that $i\in Y_j$;
\item $R(-n-i,m+1)$ for $i=1,\dots,q+1$;
\item $R(1,m+1+j)$ for $j=1,\dots,r$;
\item $R(1,0)$.
\end{itemize}
\item $S$ contains the following endogenous facts:
\begin{itemize}
    \item $S(j)$ for $j=1,\dots,m$;
    \item $S(m+1+j)$ for $j=1,\dots,r$;
    \item $S(0)$.
\end{itemize}
and the following exogenous fact:
\begin{itemize}
    \item $S(m+1)$.
\end{itemize}
\end{itemize}
We denote by $f$ the fact $S(0)$. For each database $D_{q,r}$, we denote by $\Shapley_{q,r}(f, A)$ the Shapley value of $f$ for the query $A$ over $D_{q,r}$. We will show how to compute the set of values $Z_{i,j}$ from the set of values $\Shapley_{q,r}(f, A)$.

\begin{figure}
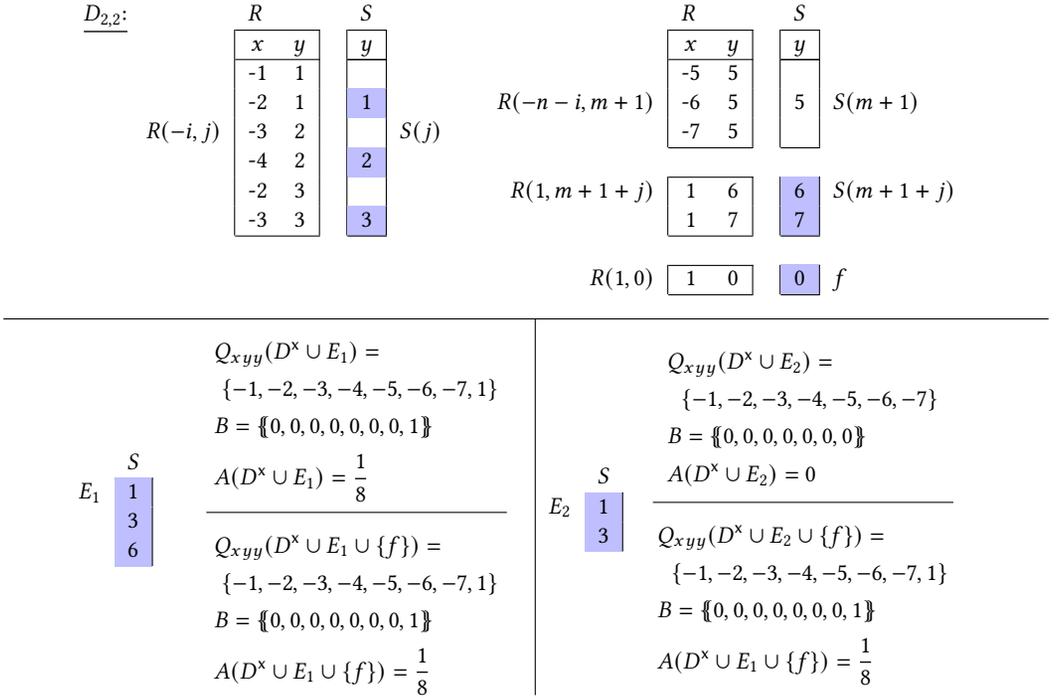

\small
\newcolumntype{E}{>{\cellcolor{blue!25}}c|}
    \centering
\underline{$D_{2,2}$:}
\begin{tabular}[t]{r|cc|c|c|l}
\multicolumn{1}{l}{} & \multicolumn{2}{l}{$R$} & \multicolumn{1}{l}{} & \multicolumn{1}{l}{$S$} & \\
\cline{2-3}\cline{5-5}
& $x$ & $y$ & & $y$ & \\
\cline{2-3}\cline{5-5}\cline{5-5}
& -1 & 1 & &  &\\
& -2 & 1 & &  \multicolumn{1}{E}{1} &\\
$R(-i,j)$ & -3 & 2 & &  & $S(j)$\\
& -4 & 2 & & \multicolumn{1}{E}{2}   &\\
& -2 & 3 & & &\\
& -3 & 3 & & \multicolumn{1}{E}{3} \\
\cline{2-3}\cline{5-5}
\end{tabular}
\quad
\begin{tabular}[t]{r|cc|c|c|l}
\multicolumn{1}{l}{} & \multicolumn{2}{l}{$R$} & \multicolumn{1}{l}{} & \multicolumn{1}{l}{$S$} & \\
\cline{2-3}\cline{5-5}
& $x$ & $y$ & & $y$ & \\
\cline{2-3}\cline{5-5}\cline{5-5}
& -5 & 5 & &  &\\
$R(-n-i,m+1)$ & -6 & 5 & & 5 & $S(m+1)$ \\
& -7 & 5 & &  &
\\
\cline{2-3}\cline{5-5}
\multicolumn{1}{c}{}\\
\cline{2-3}\cline{5-5}
$R(1,m+1+j)$ & 1 & 6 & &  \multicolumn{1}{E}{6} & $S(m+1+j)$\\
& 1 & 7 & &  \multicolumn{1}{E}{7} &\\
\cline{2-3}\cline{5-5}
\multicolumn{1}{c}{}\\
\cline{2-3}\cline{5-5}
$R(1,0)$ & 1 & 0 & &  \multicolumn{1}{E}{0} & $f$ \\
\cline{2-3}\cline{5-5}
\end{tabular}
\vskip1em
\hrule
\begin{minipage}{6cm}
\small
\begin{tabular}{rc}
& $S$ \\ \cline{2-2}
$E_1$ & \multicolumn{1}{E}{1}\\
& \multicolumn{1}{E}{3}\\
& \multicolumn{1}{E}{6}\\
\end{tabular}
\quad\quad
\begin{minipage}{4cm}
\begin{align*}
    &\Qxyy(D\exo\cup E_1)=\\
    &\;\set{-1,-2,-3,-4,-5,-6,-7,1}\\
    & B = \multiset{0,0,0,0,0,0,0,1}\\
    &A(D\exo\cup E_1)=\frac{1}{8}
\end{align*}
\hrule
\begin{align*}
    &\Qxyy(D\exo\cup E_1\cup\set{f})=\\
    &\;\set{-1,-2,-3,-4,-5,-6,-7,1}\\
    & B = \multiset{0,0,0,0,0,0,0,1}\\
    & A(D\exo\cup E_1\cup\set{f})=\frac{1}{8}
\end{align*}
\end{minipage}
\end{minipage}
\;
\vrule
\begin{minipage}{6cm}
\small
\begin{tabular}{rc}
& $S$ \\ \cline{2-2}
$E_2$ & \multicolumn{1}{E}{1}\\
& \multicolumn{1}{E}{3}\\
\end{tabular}
\quad
\begin{minipage}{4cm}
\begin{align*}
    &\Qxyy(D\exo\cup E_2)=\\
    &\;\;\set{-1,-2,-3,-4,-5,-6,-7}\\
    & B = \multiset{0,0,0,0,0,0,0}\\
    & A(D\exo\cup E_2)=0
\end{align*}
\hrule
\begin{align*}
    &\Qxyy(D\exo\cup E_2\cup\set{f})=\\
    &\;\;\set{-1,-2,-3,-4,-5,-6,-7,1}\\
    & B = \multiset{0,0,0,0,0,0,0,1}\\
    & A(D\exo\cup E_1\cup\set{f})=\frac{1}{8}
\end{align*}
\end{minipage}
\end{minipage}
\caption{Example of the reduction from \#Set-Cover for $\X=\set{1,2,3,4}$ (and $n=4$) and $\Y=\set{\set{1,2},\set{3,4},\set{2,3}}$ (and $m=3$). Endogenous tuples have a purple shade; other tuples are all exogenous. The figure shows $D_{2,2}$ on the top, and examples of subsets $E$ of $D_{2,2}\endo$ (bottom), along with the result of $A$ over the corresponding databases $D_{2,2}\exo\cup E$. The figure denotes by $B$ the bag of values $\tau(a)$ over all answers $a$. }
    \label{fig:avg-reduction}
\Description{Example of the reduction from \#Set-Cover}
\end{figure}

\subparagraph*{Formula for the Shapley value.}
Fix a database $D_{q,r}$. Let us consider the Shapley game, and see how much the tuple $t=S(0)$ contributes to the overall utility when it is added.
\begin{itemize}
\item If any tuple $S(m+1+j)$ is selected before $t$, then adding $t$ does not change the set of answers and, hence, its contribution is zero.
\item If no tuple $S(m+1+j)$ has been selected before $t$ (for $j=1,\dots,r$), then the average before the addition of $t$ is zero, and after the addition of $t$ it is $1/(i+q+1)$, where $i$ is total the number of elements in the subsets $Y_{j'}$ selected before $t$ via the tuple $S(j')$.
\end{itemize}
Hence, the Shapley value of $t$ is the following:
\begin{equation}
\label{eq:shapley-average-reduction}
\Shapley_{q,r}(f, A) = 
\sum_{j=0}^m\sum_{i=0}^{n}
\frac{j!\cdot (m+r-j)!}{(m+r+1)!}
\cdot
\frac{Z_{i,j}}{i+q+1}
\end{equation}

\subparagraph*{Solving the linear system.}
Suppose that we have an oracle for computing each $\Shapley_{q,r}(f, A)$. 
From \Cref{eq:shapley-average-reduction} we get a system of linear equations of the form $\mathbf{L}\vec Z=\vec a$, where $\mathbf{L}$ is an $(\ell\times\ell)$-matrix for $\ell=(m+1)\cdot(n+1)$ and $\vec a$ is an $\ell$-vector. Moreover, we can compute every element of $\mathbf{L}$ and $\vec a$. Hence, to restore the $Z_{i,j}$, we need to show that $\mathbf{L}$ is invertible. We do so in the remainder of this section.

Let us denote by $\mathbf{M}$ the $(m+1)\times(m+1)$-matrix defined by
\[\mathbf{M}_{r',j'}\defeq \frac{(j'-1)!\cdot (m+r'-j')!}{(m+r')!}\,.\]
and $\mathbf{N}$ the $(n+1)\times(n+1)$-matrix given by
\[\mathbf{N}_{q',i'}\defeq \frac{1}{q'+i'-1}\,.\]
For $j',q'\in\set{1,\dots,m+1}$ and $i',q'\in\set{1,\dots,n+1}$, the element
of $\mathbf{L}$ 
at the row corresponding to $(j',q')$ and columns corresponding to $(i',r')$ is 
$\mathbf{M}_{r',j'}\cdot \mathbf{N}_{q',i'}$. Hence, $\mathbf{L}$ is the \e{Kronecker product}\footnote{The Kronecker product has also been used by Amarilli and Kimelfeld~\cite{DBLP:journals/lmcs/AmarilliK22} for proving \fpsharpp-hardness of a related problem in databases, namely \e{uniform reliability}.} $\mathbf{M}\otimes \mathbf{N}$. Recall if $\mathbf{C}$ is an $(\ell\times\ell)$-matrix and $\mathbf{C'}$ is an $(\ell'\times\ell')$-matrix,
the Kronecker product $\mathbf{C}\otimes \mathbf{C'}$ is the $(\ell\cdot\ell'\times\ell\cdot\ell')$-matrix given as follows:
$$
\begin{pmatrix}
\mathbf{C}_{1,1}\mathbf{C'} & 
\mathbf{C}_{1,2}\mathbf{C'} &
\cdots & \mathbf{C}_{1,\ell}\mathbf{C'}\\
\mathbf{C}_{2,1}\mathbf{C'} & 
\mathbf{C}_{2,2}\mathbf{C'} &
\cdots & \mathbf{C}_{2,\ell}\mathbf{C'}\\
\vdots & \vdots & \cdots & \vdots\\
\mathbf{C}_{\ell,1}\mathbf{C'} & 
\mathbf{C}_{\ell,2}\mathbf{C'} &
\cdots & \mathbf{C}_{\ell,\ell}\mathbf{C'}\\
\end{pmatrix}	
$$
It is known that $\mathbf{C}\otimes \mathbf{C'}$ is invertible if and only if both $\mathbf{C}$ and $\mathbf{C'}$ are invertible (see, e.g.,~\cite{Henderson01101983}). So, it is left to prove that $\mathbf{M}$ and $\mathbf{N}$ are both invertible.
The matrix $\mathbf{M}$ can be converted, by column and row multiplications, to the Hankel matrix  $\mathbf{M'}$ defined by $\mathbf{M'}_{r',j'}=(r'+j'-1)!$ for $r',j'\in\set{1,\dots,n}$. 
Specifically, we multiply every row $r'$ by the number $(m+r')!$, divide every column $j'$ by $(j'-1)!$, and reverse the order of the columns $j'$.
The matrix
$\mathbf{M'}$ is known to be invertible (e.g.,~\cite{determinants2002}).
The matrix $\mathbf{N}$ is known as the \e{Hilbert matrix} (which is also a Hankel matrix), and is also known to be invertible (e.g.,~\cite{choi1983tricks}). 
\end{proof}

\subsubsection{Quantile}
We now proceed to the $q-\aggquantile$ function. We prove the following lemma.
\begin{lemma}\label{lemma:qnt-simplest-hard}
For fixed $q \in (0,1)_{\mathbb Q}$ and $A=\aggquantile_q\circ\taugb{0}\circ \Qxyy$, computing $\Shapley(f, A)$, given a database $D$ and a fact $f$, is \fpsharpp-complete. 
\end{lemma}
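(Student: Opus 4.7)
The plan is to show $\#\mathrm{P}$-hardness by a polynomial-time reduction from the Shapley-value problem in the \emph{Set-Cover cooperative game}, whose $\#\mathrm{P}$-hardness is proved in~\cite{gilad2024importanceparametersdatabasequeries}. An instance of that game consists of a universe $X=\{1,\dots,n\}$, a family $\mathcal Y=\{Y_1,\dots,Y_m\}$ of subsets of $X$, and a distinguished subset $Y_{j^*}$; the utility $v_{\mathsf{SC}}(C)$ of a coalition $C\subseteq\mathcal Y$ is $1$ if $\bigcup C=X$, and $0$ otherwise. Given such an instance I would construct in polynomial time a database $D$ for $\Qxyy$ together with an endogenous fact $f=S(j^*)$ such that $A(D\exo\cup E_J)$ simulates $v_{\mathsf{SC}}(\{Y_j:j\in J\})$ for every $J\subseteq\{1,\dots,m\}$, where $E_J:=\{S(j):j\in J\}$. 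Uniqueness of the Shapley value then forces $\Shapley(f,A)=\Shapley(Y_{j^*},v_{\mathsf{SC}})$, transferring the hardness.

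The database would be built from four polynomial-sized blocks. First, one \emph{endogenous} fact $S(j)$ for each $j\in\{1,\dots,m\}$; these are the only endogenous facts, so database coalitions correspond bijectively to Set-Cover coalitions. Second, for every pair $(i,j)$ with $i\in Y_j$, an \emph{exogenous} fact $R(i,j)$; paired with the endogenous $S(j)$'s in a coalition, these produce the answer $x=i$ to $\Qxyy$ precisely when $i$ is covered by $J$, each such answer contributing a $1$ to the aggregation bag after $\taugb{0}$. Third, a \emph{zero base} of exogenous facts $R(-i,0)$ for $i\in X$ together with an exogenous $S(0)$, ensuring the $n$ negative answers $-1,\dots,-n$ always appear and thus always contribute $n$ zeros. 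Fourth, two \emph{padding} blocks of exogenous facts adding $k$ further ones and $\ell$ further zeros to the bag.

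For a coalition $J$ the resulting bag has $n_1:=k+c(J)$ ones and $n_0:=n+\ell$ zeros, where $c(J):=|\bigcup_{j\in J}Y_j|$. A direct case analysis of the $q$-quantile of a $\{0,1\}$-bag shows $\aggquantile_q=1$ iff $qt<n_1$, $\aggquantile_q=0$ iff $qt>n_1$, and $\aggquantile_q=1/2$ iff $qt=n_1$, where $t:=n_0+n_1$. Making $A(D\exo\cup E_J)=1$ precisely when $J$ is a cover (i.e.\ $c(J)=n$) then reduces to the single arithmetic condition
\[
\frac{q(n+\ell)}{1-q}-n\;<\;k\;<\;\frac{q(n+\ell)}{1-q}-n+1,
\]
an open interval of length $1$. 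For fixed rational $q\in(0,1)$, an elementary number-theoretic choice allows me to pick $\ell$ of polynomial magnitude in $n$ so that this interval contains a non-negative integer $k$ and the boundary case $\aggquantile_q=1/2$ never arises on any coalition; when needed, I would enforce the latter by scaling the original set-cover instance (replacing each element of $X$ by a fixed-size bundle of copies) to control the divisibility of $t(J)$ by the denominator of $q$. With these choices in hand, $A(D\exo\cup E_J)=v_{\mathsf{SC}}(\{Y_j:j\in J\})$ holds for every $J$, completing the reduction.

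The main obstacle I expect is precisely this quantile-tuning step: verifying, for every fixed $q\in(0,1)_{\mathbb{Q}}$, that polynomially-bounded padding $(k,\ell)$ exists making the $q$-quantile equal to $1$ on cover coalitions and $0$ on all strict-subcover coalitions, with no intermediate $1/2$-values. The remaining items---the bijection between database and Set-Cover coalitions, the correctness of each exogenous block, and $\mathsf{FP}^{\#\mathsf{P}}$-membership via the standard arguments mentioned at the end of~\Cref{sec:framework}---should then follow routinely.
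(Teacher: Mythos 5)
Your reduction is essentially the paper's: both reduce from the Set-Cover cooperative game of \cite{gilad2024importanceparametersdatabasequeries}, encode the coverage count as the number of positive answers of $\Qxyy$, pad with exogenous zeros and ones, and tune the padding so that the $q$-quantile of the resulting $\{0,1\}$-bag is $1$ exactly on covering coalitions; the map $j\mapsto S(j)$ is then a utility-preserving bijection of games. The one substantive issue is the tuning step you yourself flag as the main obstacle: your primary strategy---finding an integer $k$ in an open interval of length $1$---provably fails whenever $q=(b-1)/b$, most notably for the median $q=1/2$. Writing $q=a/b$ in lowest terms, your interval contains an integer iff $\frac{a}{b-a}(n+\ell)\notin\mathbb{Z}$; since $\gcd(a,b-a)=1$, for $b-a\ge 2$ a suitable $\ell$ exists, but for $b-a=1$ the quantity is always an integer. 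Concretely, for $q=1/2$ your condition collapses to $\ell-k\in(-1,0)$, which no integers satisfy: the gap in the threshold function between coverage $n-1$ and coverage $n$ is only $1-q\le\frac12$, too small to strictly contain an integer. Hence the bundling fallback you mention in passing is not optional but the crux of the proof: replacing each element of $X$ by $s\ge 2$ copies makes consecutive coverage counts differ by $s$ and widens the interval for $k$ to length $s>1$, which always contains an integer. This is precisely what the paper's construction does implicitly---each covered element contributes $b(b-a)$ positive answers (the facts $R(jb(b-a)-\ell,i)$ for $\ell=0,\dots,b(b-a)-1$), there are $abn$ exogenous zeros, and a single always-present positive answer acts as tie-breaker, so that the quantile index $a(an+t(b-a))+1$ lands on a zero for $t<n$ and on a one for $t=n$, with no intermediate $\tfrac12$-values. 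With the bundling actually carried out (and the arithmetic above verifying that no boundary case arises), your argument goes through; the remaining items, including \fpsharpp-membership, are routine as you say.
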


We show this lemma using a reduction from a cooperative game associated with the \e{\#Set-Cover} problem for which Shapley value computation is known to be \fpsharpp-complete \cite{gilad2024importanceparametersdatabasequeries}.
\begin{lemma}[\protect{\cite[Proof~of~Theorem~6.3]{gilad2024importanceparametersdatabasequeries}}]
    For an input $(\X,\Y)$ to the \e{\#Set-Cover} problem with $\Y = \set{Y_1,\dots,Y_m}$, consider the cooperative game with player set $P = \set{1, \ldots, m}$ and utility 
    \[
    \nusetcover(C) = \begin{cases}
        1 & \text{if } \bigcup_{i \in C} = \X \\
        0 & \text{otherwise.}
    \end{cases}
    \]
    Then, computing $\Shapley(i,\nusetcover)$ for $(\X,\Y)$ and an index $i$ of $\Y$ is \fpsharpp-complete
\end{lemma}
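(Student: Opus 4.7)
The plan is to prove \fpsharpp-hardness by reducing from the Set-Cover cooperative game, whose Shapley computation is \fpsharpp-hard by the result quoted immediately above. Fix an instance $(\X,\Y)$ of \#Set-Cover with $\X=\set{1,\dots,n}$, $\Y=\set{Y_1,\dots,Y_m}$ (we may assume $n\ge 1$ and each $Y_j$ nonempty), and a distinguished index $i\in\set{1,\dots,m}$. The key observation is that, for a bag $B$ consisting only of $0$'s and $1$'s, $\aggquantile_q(B)$ equals $1$ when the fraction of $1$'s in $B$ strictly exceeds $q$, equals $0$ when it is strictly below $q$, and equals $\tfrac{1}{2}$ when it is exactly $q$. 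The plan is therefore to construct, in polynomial time, a database $D$ and a fact $f$ such that, for every $E\subseteq D\endo$, the bag $\multiset{\taugb{0}(x)\mid x\in\Qxyy(D\exo\cup E)}$ has its fraction of $1$'s strictly above $q$ exactly when the coalition encoded by $E$ is a set cover of $\X$, and strictly below $q$ otherwise. This will yield $A(D\exo\cup E)=\nusetcover(C(E))$, where $C(E)\subseteq\set{1,\dots,m}$ denotes the subset of indices corresponding to $E$, so that $\Shapley(f,A)[D]=\Shapley(i,\nusetcover)$ and hardness transfers.

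For the construction, I would set $D\endo\defeq\set{S(1),\dots,S(m)}$ and $f\defeq S(i)$. For each $k\in\X$ and each $j$ with $k\in Y_j$, I would add $K$ exogenous facts $R(k^{(1)},j),\dots,R(k^{(K)},j)$, where the $k^{(\ell)}$ are pairwise distinct \e{positive} constants chosen freshly per element $k$. In addition, I would introduce $M$ ``zero padders'': for $\ell=1,\dots,M$, the exogenous facts $R(-\ell,y^0_\ell)$ and $S(y^0_\ell)$, with the $y^0_\ell$ fresh and disjoint from $\set{1,\dots,m}$. A direct verification gives $\Qxyy(D\exo\cup E)=\set{-1,\dots,-M}\cup\set{k^{(\ell)}\mid k\in V(E),\ 1\le\ell\le K}$, where $V(E)\defeq\bigcup_{j\in C(E)}Y_j$. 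Applying $\taugb{0}$ yields a bag with exactly $M$ zeros and $K\cdot\size{V(E)}$ ones, so the fraction of $1$'s equals $\tfrac{K\size{V(E)}}{M+K\size{V(E)}}$.

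The main obstacle is picking $K$ and $M$ so that this fraction separates the covering case $\size{V(E)}=n$ from the non-covering case $\size{V(E)}\le n-1$ by \emph{strict} inequalities around $q$, thereby avoiding the boundary value $\tfrac12$ of $\aggquantile_q$. These two conditions translate into the requirement that $M$ be a positive integer lying in the open interval $\bigl(\tfrac{K(n-1)(1-q)}{q},\tfrac{Kn(1-q)}{q}\bigr)$, whose length is $\tfrac{K(1-q)}{q}$. Setting $K\defeq\lceil 2q/(1-q)\rceil$, a constant depending only on the fixed rational $q$, makes this length at least $2$, so such an integer $M$ always exists and is computable in polynomial time in $n$; moreover, the entire construction has size polynomial in $m$ and $n$. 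With these parameters, $A(D\exo\cup E)=\nusetcover(C(E))$ for every $E\subseteq D\endo$ (in particular $A(D\exo)=0$, since then the bag is all zeros), the map $E\mapsto C(E)$ is a bijection, and summing the Shapley increments yields $\Shapley(f,A)[D]=\Shapley(i,\nusetcover)$. Membership in \fpsharpp\ follows from the standard techniques for Shapley values of \AggCQs that the paper invokes elsewhere.
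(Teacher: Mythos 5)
There is a genuine gap: your argument does not prove the stated lemma at all. The statement to be established is that computing $\Shapley(i,\nusetcover)$ for the abstract set-cover game $(\set{1,\dots,m},\nusetcover)$ is \fpsharpp-complete. Your proposal takes exactly this as its starting hypothesis (``whose Shapley computation is \fpsharpp-hard by the result quoted immediately above'') and then reduces the game to the \AggCQ $\aggquantile_q\circ\taugb{0}\circ\Qxyy$; that is a reduction \emph{from} the game, which transfers hardness \emph{out of} the game to the database problem, not \emph{into} it. As a proof of the lemma it is therefore circular. What is actually needed is a reduction in the opposite direction, from a \#P-hard counting problem (e.g., \#Set-Cover) to computing $\Shapley(i,\nusetcover)$; the paper itself does not prove this but imports it from the proof of Theorem~6.3 of \cite{gilad2024importanceparametersdatabasequeries}, where such a reduction is carried out (in the usual style of padding the game with auxiliary players and inverting the resulting linear system, much like the paper's own argument for $\aggavg$ in \Cref{lemma:avg-simplest-hard}). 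Nothing in your construction addresses how the values $\Shapley(i,\nusetcover)$ would let one recover the number of set covers, so the \fpsharpp-hardness of the game remains unproven.

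What you did write is, in substance, a proof of the \emph{downstream} result, \Cref{lemma:qnt-simplest-hard}, and as such it is correct and close in spirit to the paper's: you simulate the covering game with a 0/1-valued bag, using $K$ duplicated positive witnesses per covered element and $M$ exogenous zero padders, and your choice $K=\lceil 2q/(1-q)\rceil$ makes the open interval for $M$ have length at least $2$, so the fraction of ones lands strictly above $q$ exactly on covers and strictly below $q$ otherwise; this cleanly avoids the tie value $\tfrac12$ of $\aggquantile_q$, whereas the paper's construction calibrates the counts (multiplicities $b\cdot(b-a)$ per element, $a\cdot b\cdot n$ zeros, plus one extra positive fact) so that the quantile index crosses the threshold exactly at full coverage. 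But that is a different statement than the one you were asked to prove, and relying on the cited game-hardness as a premise cannot substitute for proving it.
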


\begin{proof}[Proof of \Cref{lemma:qnt-simplest-hard}]
    Let $\X = \set{1, \ldots, n}$ and $\Y = \set{Y_1, \ldots, Y_m}$ be an input to the \e{\#Set-Cover} problem. We aim to construct a database $D$ such with $m$ endogenous facts $f_1, \ldots, f_m$ such that $\Shapley(f_i, A) = \Shapley(i, \nusetcover)$. Let $q = \frac{a}{b}$. Then $D$ consist of the following facts:
    \begin{itemize}
        \item $R$ consists of the following exogenous facts:
        \begin{itemize}
            \item $R(j\cdot b\cdot (b-a) -\ell, i)$ for each $j \in Y_i$ and each $\ell = 0, 1, \ldots, b \cdot (b-a) - 1$,
            \item $R( - \ell, 0)$ for $\ell = 1, 2, \ldots, b\cdot a \cdot n$, and
            \item $R( n \cdot b \cdot (b - a) + 1, 0)$.
        \end{itemize}
        \item $S$ consists of the following endogenous facts:
        \begin{itemize}
            \item $S(i)$ for $i = 1, \ldots, m$ 
        \end{itemize}
        and the following exogenous fact:
        \begin{itemize}
            \item $S(0)$.
        \end{itemize}
    \end{itemize}
    
    Now, let $C \subseteq \set{1, \ldots, m}$ and $C' = \set{S(i) \, \mid \, i \in C} \subseteq D\endo$. If $\size{\bigcup_{i \in C} Y_i} = t$, then $Q(C' \cup D\exo)$ contains the negative numbers $-1, \ldots, -b\cdot a \cdot n$ as well as $t \cdot b \cdot (b-a) + 1$ positive numbers: it contains $n \cdot b \cdot (b-a) + 1$ and for each number $c \in \bigcup_{i \in C}Y_i$ the numbers from $(c - 1) \cdot b \cdot (b-a) + 1$ to $c \cdot b \cdot (b-a)$. Hence $(\taugb{0} \circ Q)(C' \cup D\exo)$ is the bag over $\set{0,1}$ with $\mu(0) = b \cdot a \cdot n$ and $\mu(1) = t \cdot b \cdot (b-a) + 1$.

    Since $(\taugb{0} \circ Q)(C' \cup D\exo)$ has size $b \cdot ( a \cdot n + t \cdot (b-a)) + 1$, its $\frac{a}{b}$-quantile is the $a \cdot (a \cdot n + t \cdot (b - a)) + 1$ smallest element of $(\taugb{0} \circ Q)(C' \cup D\exo)$. Now, if $t < n$, then $a \cdot (a \cdot n + t \cdot (b - a)) + 1 \leq a \cdot (a \cdot n + n \cdot (b - a)) = a \cdot b \cdot n$, so $A(C' \cup D\exo) = 0$. Otherwise, $t = n$ and $a \cdot (a \cdot n + t \cdot (b - a)) + 1 = b \cdot a \cdot n + 1$, so $A(C' \cup D\exo) = 1$.

    This shows that the map $i \mapsto S(i)$ is a utility-preserving bijection between the cooperative games $(\set{1, \ldots, m}, \nusetcover)$ and $(D\endo, A)$, so $\Shapley(S(i), A) = \Shapley(i, \nusetcover)$. 
\end{proof}

\section{Omitted Proofs from \Cref{sec:duplicates}}
In this section, we prove the theorem of \Cref{sec:duplicates}:

\thmduplicatesclassification*

\subsection{Hardness}

\begin{figure}
\def\descall{\Cref{lemma:dup-Qxyy-both-hard,lemma:dup-non-q-hierarchical-hard}: reduction from $\aggduplicates\circ\taurelu\circ \Qxyy$}
\def\descq{\Cref{lemma:dup-Qxyy-both-hard,lemma:dup-non-sq-hierarchical-hard}: reduction from $\aggduplicates\circ\tauidi{1}\circ\QFxyy$}
    \centering
    \input{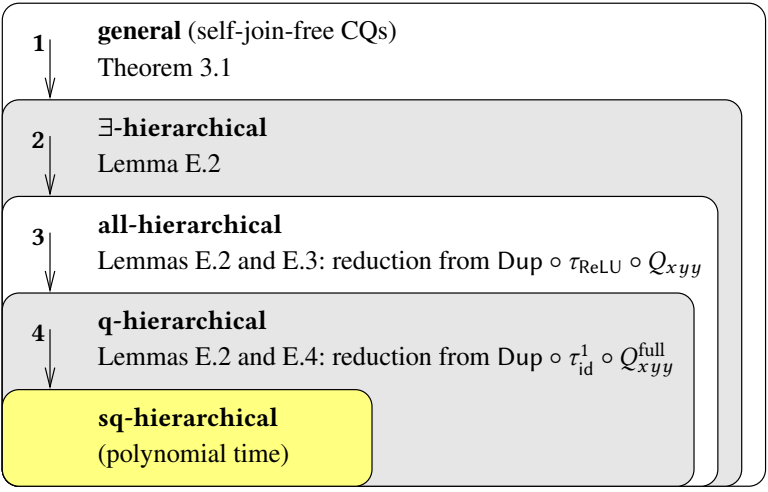}
    \caption{Structure of the hardness proof for Shapley value computation for  $\aggduplicates$-aggregation}
    \label{fig:dup-proof}
    \Description{Structure of the hardness proof for Shapley value computation for has-duplicates aggregation}
\end{figure}
The proof of hardness is illustrated in \Cref{fig:dup-proof}. Starting from the class of all CQs without self-joins, in every step we show hardness that eliminates a smaller class of CQs, until we get to the class of \sqhierarchical CQs.

\subsection*{Steps 1 and 2}

The non-\exhierarchical CQs are covered by \Cref{thm:livshits-hardness}.
Note, however, that
\Cref{thm:hardness-const-per-single} does not apply to the aggregate function has-duplicates, since it is not constant per singleton (since, e.g., $\aggduplicates(\multiset{1,1})\neq\aggduplicates(\multiset{1})$). The following result extends \Cref{thm:hardness-const-per-single} to this aggregate function. 

\begin{lemma}\label{lemma:dup-hard-non-all}
Let $A=\aggduplicates\circ\tau\circ Q$ be an \AggCQ such that $\tau$ is a constant function. Suppose that $Q$ has no self-joins. If $Q$ is not \allhierarchical, then it is \fpsharpp-hard to compute
$\Shapley(f, A)$.    
\end{lemma}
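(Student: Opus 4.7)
The plan is to reduce the Shapley computation for the Boolean CQ $\Qbool$ — obtained from $Q$ by turning every free variable into an existential one — to that of $A$. Because the hierarchical property depends only on the atom--variable incidence, $Q$ being non-\allhierarchical makes $\Qbool$ non-hierarchical, and therefore, as a self-join-free Boolean CQ, $\Shapley(f, \Qbool)$ is already \fpsharpp-hard by the classification of Livshits et al.\ for the membership problem. I would first give the reduction under the simplifying assumption that $Q$ is connected, and then adapt it to disconnected $Q$ as described below. The construction relies on $Q$ having at least one free variable, which is the intended case (whenever $Q$ is \exhierarchical but not \allhierarchical, the two variables witnessing non-\allhierarchy cannot both be existential, so at least one is free).

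Given a database $D'$ for $\Qbool$ (with its endogenous/exogenous partition), build $D$ for $A$ by keeping every fact of $D'$ unchanged and adjoining a set $D_d$ of exogenous ``dummy'' facts. Choose pairwise distinct fresh constants $d_v \notin \adom(D')$, one per variable $v$ of $Q$, and for every atom $R_j(\vec z_j)$ of $Q$ add to $D_d$ the fact obtained by substituting $d_v$ for each variable $v$ (and keeping constants). The endogenous facts of $D$ are exactly those of $D'$. The crux of the proof is the identity $A(C \cup D\exo) = \Qbool(C \cup D'\exo)$ for every $C \subseteq D'\endo$. The dummy facts always admit the homomorphism $v \mapsto d_v$, producing a single ``dummy answer'' $\vec t_d$ (the head of $Q$ with $d_v$ substituted for each free $v$), and $\vec t_d$ differs from every real answer because it uses fresh dummies in its free positions. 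The key technical step is that no homomorphism from $Q$ to $C \cup D\exo$ can mix dummy and real values: a dummy fact uses $d_v$ in every variable position, while no $D'$-fact contains any $d_v$, so any two variables sharing an atom must receive the same dummy-or-real status. Connectedness of $Q$ (combined with no self-joins, which prevents relation-name clashes between the dummy and real facts) propagates this to every variable, giving $Q(C \cup D\exo) = \set{\vec t_d} \cupdot Q(C \cup D'\exo)$. With $\tau \equiv c$, this yields $A(C \cup D\exo) = \chi(|Q(C \cup D\exo)| \geq 2) = \chi(Q(C \cup D'\exo) \neq \emptyset) = \Qbool(C \cup D'\exo)$, and since the player sets coincide, $\Shapley(f, A)[D] = \Shapley(f, \Qbool)[D']$ for every $f \in D'\endo$, completing the reduction.

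I expect the main obstacle to be the ``no mixed homomorphism'' argument and its handling in the disconnected case. In the connected setting, the propagation along shared atoms is clean once self-joins are excluded: any two variables lie on a chain of shared atoms and the dummy/real status must be preserved along each link. For disconnected $Q$, however, a direct reduction to $\Shapley(f, \Qbool)$ fails because $|Q(D)|$ factors multiplicatively across components while $\Qbool$ fires only when every component fires. The fix is to reduce instead from the Boolean Shapley value of the connected component $Q_\ast$ of $Q$ containing the pair witnessing non-\allhierarchy — still a self-join-free non-hierarchical Boolean CQ, and hence \fpsharpp-hard — while adding, for every other connected component, a separate dummy gadget (with its own disjoint batch of fresh constants) chosen so that the component's answer count is pinned to exactly one on every coalition. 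Then $|Q(D)| = 1 + |Q_\ast(C \cup D'\exo)|$, and the remainder of the argument is unchanged.
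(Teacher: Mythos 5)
Your proposal is correct and follows essentially the same route as the paper: the paper also reduces from the Boolean Shapley value of a non-hierarchical connected component by adjoining one exogenous ``all-dummy'' fact per atom (using a single fresh constant rather than one per variable), so that $Q$ always has exactly one extra answer and $\aggduplicates$ with constant $\tau$ fires precisely when the Boolean query does. The only differences are cosmetic — your connected/disconnected case split is handled uniformly in the paper by always reducing from the witnessing component's Boolean version, with the other components automatically pinned to one answer by their dummy facts.
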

\begin{proof}
If $Q$ is not \exhierarchical, then we get hardness from \Cref{thm:livshits-hardness}. So, in the remainder of this proof, we assume that $Q$ is \exhierarchical but not \allhierarchical. Suppose that $Q$ is a Cartesian product of $Q_1$, $Q_2$, \dots, $Q_k$, where each $Q_i$ is connected. Since $Q$ is not \allhierarchical, at least one of $Q_i$ is not \allhierarchical, say $Q_1$ (without loss of generality). Note that $Q_1$ cannot be Boolean, since we assume that $Q$ is \exhierarchical (hence, each Boolean connected component is hierarchical). 

Let $Q_1^b$ be the Boolean CQ obtained from $Q_1$ by making its free variables existential. Then, from prior work~\cite{DBLP:journals/lmcs/LivshitsBKS21} we know that $\Shapley(f,Q_1^b)$ is \fpsharpp-hard to compute. We will show a reduction from $\Shapley(f,Q_1^b)$ to $\Shapley(f,A)$. 

Let $D_1$ be an input database for $Q_1^b$ and $f\in D_1\endo$.
We construct a database $D$ for $A$ as follows. Let $c\in\consts$ be a value that does not appear in $D_1$. We begin with $D=D_1$ and add to $D$ a single exogenous fact $f_\varphi$ for each atom $\varphi$ of $Q$; the fact $f_\varphi$ is obtained from $\varphi$ by replacing every variable with $c$. Note that $D\endo=D_1\endo$ and, moreover, $D\exo$ is the disjoint union of $D_1\exo$ and the set of new facts that we have added.

Let $\vec t_c$ be the tuple that is obtained from the head of $Q$ by replacing every variable with $c$. For each answer $\vec t\in Q_1(D_1)$, let $\vec t[c]$ be the tuple extends $\vec t$ to an answer in $Q(D)$ by replacing every variable that is not in $Q_1$ with $c$. 

Observe the following. For each subset $E$ of $D\endo$ it holds that 
$$
Q(E\cup D\exo) = \set{\vec t[c]\mid \vec t\in Q_1(E\cup D_1\exo)} \cupdot \set{\vec t_c}
$$
In particular, $Q_1(E\cup D_1\exo)$ is nonempty (i.e., $Q_1^b(E\cup D_1\exo)$ is true) if and only if $Q(E\cup D\exo)$ contains at least two tuples (and otherwise it contains exactly one tuple). Since $\tau$ is a constant function, we conclude that $Q_1^b(E\cup D_1\exo)$ is true if and only if $A(E\cup D\exo)=1$. Hence,
$\Shapley(f,Q_1^b)=\Shapley(f,A)$, which completes the reduction.
\end{proof}

\subsection*{Steps 3 and 4}
We continue with the hardness of two specific CQs. 

\begin{lemma}
\label{lemma:dup-Qxyy-both-hard}
Computing $\Shapley(f, A)$, given a database $D$ and a fact $f$, is \fpsharpp-complete for aggregate queries $A$:
\begin{enumerate}
    \item $\aggduplicates\circ\taurelu\circ \Qxyy$
    \item $\aggduplicates\circ\tauidi{1}\circ\QFxyy$  for $\QFxyy(x,y) \la R(x,y), S(y)$
\end{enumerate}
\end{lemma}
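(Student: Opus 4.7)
The plan is to reduce the problem of computing the permanent of a 0/1 matrix to each of the two Shapley computations; this is $\sharpP$-hard by Valiant's theorem. For each query, I will encode an input $n\times n$ matrix $M$ in a parametric family of databases and extract $\mathrm{perm}(M)$ by solving a linear system whose right-hand side is a vector of Shapley values obtained from the hypothetical oracle. This follows the template used in \Cref{sec:q-hardness}, but with different matrices of combinatorial coefficients.

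For part~(1), $A = \aggduplicates \circ \taurelu \circ \Qxyy$, the starting observation is that $\Qxyy$ projects onto $x$, so its answer set is a \emph{set} of distinct $x$-values, and $\taurelu$ maps each non-positive value to $0$. Hence $A = 1$ iff $\Qxyy(D)$ contains at least two non-positive $x$-values, which collapse to a duplicate $0$ under $\taurelu$. I will use this by building a database in which each row $i$ of $M$ is represented by the non-positive value $-i$, each column $j$ by a value $j$, and each 1-entry $M_{i,j}=1$ by an exogenous fact $R(-i, j)$. Making the $S(j)$-facts endogenous, a coalition $T \subseteq [n]$ makes $A = 1$ iff at least two rows of $M$ have a 1-entry in some column of $T$. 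A parametric family is then obtained by appending $r$ padding facts (e.g., additional exogenous $R$-facts that contribute positive, hence non-collapsing, answers, or additional endogenous facts chosen to preserve the combinatorial structure of the game), so that varying $r$ changes the coefficients of the Shapley sum in a controlled way.

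For part~(2), $A = \aggduplicates \circ \tauidi{1} \circ \QFxyy$, the answer bag projects to the first coordinate of every $(x,y)$ with $R(x,y)$ and $S(y)$ active, so $A = 1$ iff some row of $M$ has two or more active columns. I will make $R(i, j)$ exogenous for each 1-entry of $M$ and $S(j)$ endogenous, so that a coalition $T \subseteq [n]$ makes $A(T \cup D\exo) = 0$ iff $T$ is row-compatible (i.e., no row has two entries in $T$). Designating a specific fact $f = S(j_0)$, its Shapley value becomes a weighted count of size-$k$ row-compatible subsets sharing a row with $j_0$. Again, a parametric family of databases yields a linear system, and the desired permanent appears as (a function of) the count of size-$n$ row-compatible subsets whose rows form a perfect matching in the bipartite graph of $M$.

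The main obstacle in both parts is verifying that the resulting coefficient matrix is nonsingular, so that the underlying combinatorial quantities---and ultimately $\mathrm{perm}(M)$---can be recovered from the Shapley values in polynomial time. Following the blueprint of \Cref{lemma:avg-simplest-hard}, I expect this matrix to factor as a Kronecker product of classical invertible matrices; however, I anticipate a more combinatorial (Vandermonde- or Pascal-type) structure rather than the Hilbert/Hankel factors encountered for $\aggavg$, because $\aggduplicates$ is $\{0,1\}$-valued and the Shapley sums therefore reduce to pure integer counts. The delicate design choice is to pick the parametrization so that $\mathrm{perm}(M)$ is pinned down uniquely by the resulting system and so that the padding facts do not inadvertently trivialize the game or introduce unwanted symmetries.
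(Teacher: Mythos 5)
Your high-level template (reduction from the permanent, a parametric family of databases, and inversion of the resulting linear system of Shapley values) matches the paper's, and your characterizations of when $A=1$ are correct: for $\aggduplicates\circ\taurelu\circ\Qxyy$ a duplicate arises exactly when the answer set contains two distinct non-positive $x$-values, and for $\aggduplicates\circ\tauidi{1}\circ\QFxyy$ exactly when some $x$-value has two or more active $y$-partners. The gap is in the encoding step: in neither part does the combinatorial quantity your oracle would reveal equal, or even determine, $\mathrm{perm}(M)$. In part (1), the utility of a coalition $T$ of columns is $1$ iff at least two rows of $M$ are covered by $T$; the Shapley values of this game are determined by the numbers of $k$-subsets of columns covering zero or exactly one row, which are computable in polynomial time by inclusion--exclusion over the single covered row, so the reduction proves nothing---and in any case no function of these counts is the permanent. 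In part (2), you take the $n$ columns of $M$ as the players, so "the count of size-$n$ row-compatible subsets" is either $0$ or $1$ depending on whether $M$ itself has a row with two ones; again this is not the permanent.

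The missing idea is to make the endogenous $S$-facts correspond to the \e{edges} of the graph whose perfect matchings are to be counted, rather than to the columns of $M$. The paper encodes each pair $Y_j=\set{u,v}$ by facts $R(u,j)$ and $R(v,j)$ with $S(j)$ endogenous, so that the has-duplicates aggregate fires exactly when two selected pairs intersect, i.e., when the selected collection is \e{not} a matching; a pivot fact $S(0)$ (together with a few control facts that force a duplicate once it is added) then has nonzero marginal contribution precisely on coalitions that are matchings. The unknowns $Z_j$---the number of size-$j$ matchings, with $Z_{n/2}$ equal to the permanent---are recovered from databases $D_r$, $r=0,\dots,m$, obtained by adding $r$ padding facts that each independently force a duplicate. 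The resulting $(m+1)\times(m+1)$ coefficient matrix has entries $\frac{j!\,(m+r-j)!}{(m+r+1)!}$ and reduces by row and column scaling to the factorial Hankel matrix with entries $(r'+j'-1)!$; no Kronecker-product factorization is needed here (that structure appears only in the $\aggavg$ proof, where two independent parameters are varied). Without this disjointness-detection mechanism your linear systems are over the wrong unknowns, and no choice of padding or of the coefficient matrix can extract the permanent from them.
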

\begin{proof}
We first discuss $\aggduplicates\circ\taurelu\circ \Qxyy$, and then adjust the proof to the second query.
We construct a reduction from the problem of computing the permanent of a 0/1-matrix, which is known to be \#P-hard due to Valiant's Theorem~\cite{DBLP:journals/tcs/Valiant79}. We define this problem as a covering problem, using the same notation as in \Cref{lemma:avg-non-q-hierarchical-hard}: given a set $\X$ of elements, a collection $\Y$ of pairs of elements of $\X$, determine the number of \e{exact covers}, that is, pairwise-disjoint collections $\C\subseteq\Y$ that cover all of $\X$. In the remainder of this section, we fix the input $(\X,\Y)$ for the problem, where $\X\defeq\set{1,\dots,n}$ and $\Y\defeq\set{Y_1,\dots,Y_m}$. We can assume that $n$ is even, or otherwise there is no exact cover. 

For $j\in\set{0,\dots,m}$, we denote by $Z_{j}$ the number of subsets of $\Y$, consisting of precisely $j$ pairwise-disjoint members. Note that the permanent is the number $Z_{n/2}$. We will show how to compute the quantities $Z_{j}$ using the Shapley value for $A$.

We construct a collection of databases $D_{r}$ for $r\in\set{0,\dots,m}$. The database $D_{r}$ consists of the following facts:
\begin{itemize}
\item $R$ contains the following exogenous facts:
\begin{itemize}
\item $R(i,j)$ for every $i\in\X$ and $Y_j\in\Y$ such that $i\in Y_j$;
\item $R(0,0)$;
\item $R(-1,-1)$;
\item $R(-2,m+r')$ for $r'=1,\dots,r$.
\end{itemize}
\item $S$ contains the exogenous fact
\begin{itemize}
    \item $S(-1)$;
\end{itemize}
and the following endogenous facts:
\begin{itemize}
    \item $S(0), S(1), \dots, S(m)$;
    \item $ S(m+r')$ for $r'=1,\dots,r$.
\end{itemize}
\end{itemize}
We denote by $f$ the fact $S(0)$. For each database $D_{r}$, we denote by $\Shapley_{r}(f, A)$ the Shapley value of $f$ for the query $A$ over $D_{r}$. We will show how to compute the set of values $Z_{j}$ from the set of values $\Shapley_{r}(f, A)$.

Fix a database $D_{r}$. Let us consider the Shapley game, and see how much the tuple $t=S(0)$ contributes to the overall utility when it is added to a subset $D'$.
\begin{itemize}
\item If any tuple $S(m+r')$ is selected before $t$, then the set of values $\tau(t)$ includes a duplicate (hence, $A(D')=1$) and adding $t$ does not change the answer.
\item If the collection of subsets $Y_i$ with $S(i)\in D$ includes two or more intersecting pairs, then $A(D')=1$ and adding $t$ does not change the answer.
\item If no tuple $S(m+r')$ has been selected before $t$ (for $r'=1,\dots,r$) and the collection of subsets $Y_i$ with $S(i)\in D$ is pairwise disjoint, then $A(D')=0$ before $t$ is added and $A(D')=1$ after it is added. Hence, the contribution is one.
\end{itemize}
Therefore, the Shapley value of $t$ is the following:
\begin{equation}
\label{eq:shapley-average-reduction-in-proof}
\Shapley_{r}(f, A) = 
\sum_{j=0}^m
\frac{j!\cdot (m+r-j)!}{(m+r+1)!}
\cdot
Z_{j}
\end{equation}
From here on, we continue similarly to the proof of \Cref{lemma:avg-non-q-hierarchical-hard}, showing that the $Z_j$ are the solution of the full-rank matrix $\mathbf{M'}$ defined by $\mathbf{M'}_{r',j'}=(r'+j'-1)!$ for $r',j'\in\set{1,\dots,n}$. 

This completes the proof for $\aggduplicates\circ\taurelu\circ \Qxyy$. For $\aggduplicates\circ\tauidi{1}\circ\QFxyy$, we use the exact same proof, with the following change (to accommodate the change from $\taurelu$ to $\tauidi{2}$): In the relation $R$, we replace every fact $R(a,b)$ with $a<0$ with $R(0,b)$. For example, $R^D$ contains $(0,-1)$ instead of $(-1,-1)$, and $(0,m+r')$ for $r'=1,\dots,r$.
\end{proof}

Combining \Cref{lemma:general-reduction-non-q-hierarchical} with \Cref{lemma:dup-Qxyy-both-hard} (concerning $\Qxyy$), we conclude the following lemma.
\begin{lemma}\label{lemma:dup-non-q-hierarchical-hard}
Let $Q$ be a CQ without self-joins, such that $Q$ is \allhierarchical but not q-hierarchical. There is a localized value function $\tau$ such that for 
$A=\aggduplicates\circ\tau\circ Q$, computing $\Shapley(f, A)$, given a database $D$ and a fact $f$, is \fpsharpp-complete. 
\end{lemma}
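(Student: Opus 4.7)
The plan is to derive this lemma by directly composing the two results that have just been established, namely the general reduction of \Cref{lemma:general-reduction-non-q-hierarchical} and the hardness of $\aggduplicates\circ\taurelu\circ\Qxyy$ from \Cref{lemma:dup-Qxyy-both-hard}(1). Since the aggregate function is the same on both sides ($\aggduplicates$), the general reduction applies verbatim.

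More precisely, I would instantiate \Cref{lemma:general-reduction-non-q-hierarchical} with $\alpha\defeq\aggduplicates$, with the value function $\tau\defeq\taurelu$ on $\Qxyy$, and with the given CQ $Q_0\defeq Q$, which by assumption is \allhierarchical but not \qhierarchical. The lemma then produces a value function $\tau_0$ of the form $\taurelu\circ\tauidi{i}$ that is localized on some atom of $Q$, together with a polynomial-time map $D\mapsto D_0$ from input databases for $\Qxyy$ to input databases for $Q$ and an accompanying bijection $h:D\endo\to D_0\endo$, such that $\Shapley(f,\aggduplicates\circ\taurelu\circ\Qxyy)=\Shapley(h(f),\aggduplicates\circ\tau_0\circ Q)$ for every $f\in D\endo$. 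Composed with the \fpsharpp-hardness of the left-hand side (\Cref{lemma:dup-Qxyy-both-hard}(1)), this yields \fpsharpp-hardness for $\Shapley(f,\aggduplicates\circ\tau_0\circ Q)$, proving the lemma with $\tau\defeq\tau_0$.

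The one item that deserves a brief sanity check is that the resulting $\tau_0$ is indeed localized in the sense required by the lemma statement. This is immediate from the construction in \Cref{lemma:general-reduction-non-q-hierarchical}: the reduction identifies an atom of $Q$ on which $\tauidi{i}$ extracts the coordinate corresponding to the free root variable playing the role of $x$ in $\Qxyy$, and composing with $\taurelu$ preserves localization on that atom. Membership in \fpsharpp follows from the standard arguments cited after \Cref{remark:rationality}, so only the hardness side actually needs the above reduction. I do not anticipate any real obstacle here; the work has already been done in \Cref{lemma:general-reduction-non-q-hierarchical,lemma:dup-Qxyy-both-hard}, and the present lemma is essentially their composition.
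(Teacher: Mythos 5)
Your proposal is correct and matches the paper's own proof exactly: the paper derives this lemma in one line by combining \Cref{lemma:general-reduction-non-q-hierarchical} (instantiated with $\alpha=\aggduplicates$ and $\tau=\taurelu$) with the hardness of $\aggduplicates\circ\taurelu\circ\Qxyy$ from \Cref{lemma:dup-Qxyy-both-hard}. Your added sanity check that $\tau_0=\taurelu\circ\tauidi{i}$ is localized is accurate and implicit in the paper's statement of the general reduction.
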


Using a similar proof technique, we can generalize the part of \Cref{lemma:dup-Qxyy-both-hard} concerning $\QFxyy$ as follows.
\begin{lemma}\label{lemma:dup-non-sq-hierarchical-hard}
Let $Q$ be a CQ without self-joins, such that $Q$ is \qhierarchical but not \sqhierarchical. There is a localized value function $\tau$ such that for 
$A=\aggduplicates\circ\tau\circ Q$, computing $\Shapley(f, A)$, given a database $D$ and a fact $f$, is \fpsharpp-complete. 
\end{lemma}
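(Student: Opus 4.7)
The plan is to mirror the proof of \Cref{lemma:dup-non-q-hierarchical-hard}, but starting from the second hardness assertion of \Cref{lemma:dup-Qxyy-both-hard} (namely $\aggduplicates \circ \tauidi{1} \circ \QFxyy$) and replacing the role of \Cref{lemma:general-reduction-non-q-hierarchical} by an analogous general reduction tailored to the \qhierarchical-but-not-\sqhierarchical setting. The core new ingredient is therefore a $\QFxyy$-flavored version of \Cref{lemma:generalization-hardness-cq}.

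Given a CQ $Q$ that is \qhierarchical without self-joins but not \sqhierarchical, there must exist variables $x_0$ and $y_0$ of $Q$ with $\atoms(Q,x_0) \subsetneq \atoms(Q,y_0)$ and $x_0$ a head variable. The \qhierarchical property then forces $y_0$ to be a head variable as well (otherwise $x_0$ free and $y_0$ existential would contradict q-hierarchy). Pick an atom $\varphi^R_0 \in \atoms(Q,x_0)$ (which necessarily contains $y_0$ too, since $\atoms(Q,x_0)\subseteq\atoms(Q,y_0)$), and an atom $\varphi^S_0 \in \atoms(Q,y_0) \setminus \atoms(Q,x_0)$. Let $i$ be the position of $x_0$ in the head of $Q$, and fix a fresh constant $c$ not used in the input.

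The reduction then constructs, from a database $D$ for $\QFxyy$, a database $D_0$ for $Q$ analogously to the construction in \Cref{lemma:generalization-hardness-cq}: for each $R(a,b) \in D$ include the fact obtained from $\varphi^R_0$ by setting $x_0 \mapsto a$, $y_0 \mapsto b$, and every other variable to $c$; for each $S(b) \in D$ include the fact obtained from $\varphi^S_0$ by $y_0 \mapsto b$ and every other variable to $c$; and for every other atom of $Q$ include its all-$c$ instantiation as an exogenous fact. Endo/exo tags are copied from $D$, and $h$ maps each endogenous fact of $D$ to its counterpart in $D_0$. One then verifies that for every $E \subseteq D\endo$ there is a bijection $g : \QFxyy(D\exo \cup E) \to Q(D_0\exo \cup h(E))$ sending $(a,b)$ to the unique answer of $Q$ in which $x_0 = a$, $y_0 = b$, and every other head variable equals $c$. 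Since self-join-freeness and the all-$c$ facts pin down all variables outside $\{x_0, y_0\}$, this bijection is well-defined and preserves the entry at position $i$. Consequently the bags $\multiset{\tauidi{1}(t) \mid t \in \QFxyy(D\exo \cup E)}$ and $\multiset{\tauidi{i}(t_0) \mid t_0 \in Q(D_0\exo \cup h(E))}$ coincide, so the two \AggCQs return identical values on corresponding coalitions. The Shapley value is preserved, and since $\tauidi{i}$ is localized on $\varphi^R_0$, the required localized value function for $Q$ is $\tau \defeq \tauidi{i}$. Combining this reduction with \Cref{lemma:dup-Qxyy-both-hard}(2) yields the claimed \fpsharpp-hardness.

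The main obstacle is proving correctness of the bijection $g$, and specifically ruling out two kinds of pathologies: spurious homomorphisms that assign values other than $c$ to auxiliary variables of $Q$, and several distinct pairs $(a,b)$ collapsing to one answer (or vice versa). Self-join-freeness and the one-fact-per-atom construction force the auxiliary variables to the unique value $c$; \qhierarchy of $Q$ is what lets us treat those variables uniformly without interference from $x_0, y_0$. Once this structural bookkeeping is done (paralleling the routine verification in \Cref{lemma:generalization-hardness-cq}, but now needing both $x_0$ and $y_0$ to be head variables), the rest of the argument is bag-preservation and a direct application of the known hardness source.
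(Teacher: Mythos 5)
Your proposal follows essentially the same route as the paper: identify free variables $x_0,y_0$ with $\atoms(Q,x_0)\subsetneq\atoms(Q,y_0)$ (and you correctly invoke the \qhierarchical property to force $y_0$ to be free, where the paper's text has a typo), rerun the construction of \Cref{lemma:generalization-hardness-cq} with $\QFxyy$ in place of $\Qxyy$, and transfer hardness from \Cref{lemma:dup-Qxyy-both-hard}(2). Your choice of $\tauidi{i}$ versus the paper's $\taurelui{i}$ is immaterial, since the hardness source for $\aggduplicates\circ\tauidi{1}\circ\QFxyy$ only produces non-negative first coordinates, on which the two functions agree.

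One detail would fail as literally written: you pad ``every other atom of $Q$'' with only its all-$c$ instantiation. If some atom other than $\varphi^R_0$ and $\varphi^S_0$ contains $y_0$ (or both $x_0$ and $y_0$) --- entirely possible, since $\atoms(Q,y_0)$ may have more than two elements --- then the all-$c$ fact blocks every homomorphism with $h(y_0)=b\neq c$, and $Q(D_0)$ collapses to the empty set, destroying the reduction. The construction you cite avoids this by adding, for each such atom $\varphi_0$ and each witnessing pair $(a,b)$, the exogenous fact $\varphi_0[a,b]$, i.e., instantiated with the actual values of $x_0$ and $y_0$ and with $c$ only on the variables outside $\set{x_0,y_0}$. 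With that correction, your argument coincides with the paper's proof.
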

\begin{proof}
We prove the lemma by showing that the claim corresponding to 
\Cref{lemma:generalization-hardness-cq} holds here as well. Specifically, let $Q_0$ be a CQ without self-joins, such that $Q_0$ is \qhierarchical but not \sqhierarchical.  We need to show a value function $\tau_0$ and a polynomial-time algorithm that, given an input database $D$ for $\QFxyy$, constructs an input database $D_0$ for $Q_0$, a function $h:D\endo\rightarrow D_0\endo$, and a function $g:\Qxyy(D)\rightarrow Q_0(D)$, so that the two conditions of \Cref{lemma:generalization-hardness-cq} hold. The proof, given below, is similar to that of \Cref{lemma:generalization-hardness-cq}.

Recall that $\QFxyy$ is defined by $\QFxyy(x,y) \la R(x,y), S(y)$. Since $Q_0$ is not \sqhierarchical, we conclude that there are variables $x_0$ and $y_0$ such that $x_0$ is a free (head) variable and 
$\atoms(x_0)\subsetneq\atoms(y_0)$. From the fact that $Q_0$ is \qhierarchical we conclude that $x_0$ is a free variable as well. 
We define  $\varphi^R_0$, $\varphi^S_0$, $R_0$, $S_0$ and $c$ similarly to \Cref{lemma:generalization-hardness-cq}. Moreover, for an input database $D$ for $\QFxyy$, we construct the database $D_0$ and the bijection $h$ in the exact same way as in  \Cref{lemma:generalization-hardness-cq}.

Every answer in $Q_0(D_0)$ is obtained from an answer $(a,b)\in \QFxyy(D)$ by taking the sequence of variables in the head of $Q_0$, replacing $x_0$ with $a$, replacing $y_0$ with $b$, and replacing every other variable with $c$; we denote this tuple by $t_0^{a,b}$. We then define $g((a,b))=t_0^{a,b}$. 

Finally, the function $\tau_0$ is defined as follows. Let $i$ be a position in the head of $Q_0$ where $x_0$ occurs. In every tuple $t=t_0^{a,b}\in Q_0(D_0)$, the value $a$ is the element in the $i$th position; we denote this element by $t[i]$. Then $\tau_0$ is taken as $\taurelui{i}.$
\end{proof}

\subsection{Algorithm}
We now present a polynomial-time algorithm for computing $\sumword_k(A,D)$ for $A=\aggduplicates\circ\tau\circ Q$ where $\tau$ is localized and $Q$ is \sqhierarchical. As described in 
\Cref{sec:template}, this proves the tractability of $\Shapley(f, A)$ in this case.

\subsubsection{Computing $\P^0[Q,D]$ and $\P^1[Q,D]$}
\label{sec:P0-and-P1}
We begin by showing how to compute the data structure $\P$ defined by $$\P[Q,D]\defeq (\P^0[Q,D],\P^1[Q,D])$$ 
where $\P^i[Q,D]$ is the function that maps every number $k=0,\dots,|D\endo|$ to the number of $k$-subsets $E$ of $D\endo$ such that $|Q(D\exo\cup E)|=i$. 
In other words, $\P^0[Q,D](k)$ is the number of subsets of $D\endo$ where $Q$ has no answers, and $\P^1[Q,D](k)$ is the number of subsets of $D\endo$ where $Q$ has precisely one answer. 

The computation adopts and adjusts the generic algorithm presented in \Cref{fig:generic-alg}.
The subroutine $\combineX(\P[Q_1,D_1],\P[Q_2,D_2])$ is implemented as follows. We denote $m_i=|D_i\endo|$ for $i=1,2$. Then, denoting $D'=D_1\cup D_2$:
\begin{align*}
\P^0[Q,D_1\cup D_2](k) =  \sum_{\ell=0}^k &
\P^0[Q_1,D_1](\ell)\cdot \binom{m_2}{k-\ell} + 
\binom{m_1}{\ell}\cdot \P^0[Q_2,D_2] 
\\ &
- 
\P^0[Q_1,D_1](\ell)\cdot  \P^0[Q_2,D_2](k-\ell)
\end{align*}
(i.e., either $Q_1(D_1)$ is empty or $Q_2(D_2)$ is empty, and we subtract to account for adding twice the possibility that both are empty). In addition, if $Q'$ is a Cartesian product of $Q_1$ and $Q_2$:
$$\P^1[Q',D_1\cup D_2](k) = \sum_{\ell=0}^k \P^1[Q_1,D_1](\ell)\cdot\P^1[Q_2,D_2](k-\ell)$$
(i.e., both sides of the product should give precisely one answer).

For the $\combineU$ function, we use the assumption that $Q$ is q-hierarchical. This means that $Q$ is either Boolean or the answer sets $Q(D_1)$ and $Q(D_2)$ are disjoint since the vertical split is applied to a free variable. In the first case, we adopt the algorithm for the corresponding task by Livshits et al.~\cite{DBLP:journals/lmcs/LivshitsBKS21}. In the second case, we have the following:
\begin{align*}
\P^1[Q',D_1\cup D_2](k) = \sum_{\ell=0}^k 
&
\P^1[Q',D_1](\ell)\cdot\P^0[Q',D_2](k-\ell)
\\ & +
\P^0[Q',D_1](\ell)\cdot\P^1[Q',D_2](k-\ell) 
\end{align*}
(i.e., one of the databases should have precisely one answer and the other no answers at all).

\subsubsection{Computing $\sumword_k(A,D)$ when $Q$ is connected.}
Next, we will show how to compute $\sumword_k(A,D)$, where $A=\aggduplicates\circ\tau\circ Q$, in the case where $Q$ is \sqhierarchical and connected. In this case, every free variable must be a root variable. Hence, every atom contains all free variables. In this case, we can determine, for every relation symbol $R$ of $Q$ and $R$-fact $f$, which answer $\vec t\in Q(D)$ the fact $f$ contributes to (if any), and which value $a=\tau(\vec t)$
is obtained using the fact $f$, simply by computing $\tau(\vec t)$. If $f$ yields the $\tau$ value $a$, then we call $f$ an \e{$a$-fact}. With this notation, we can compute $\sumword_k(A,D)$ using the dynamic-programming algorithm of \Cref{fig:dup-alg}.

The algorithm computes $\sumword_k(A',D)$ where $A'$ is the same as $A$, except that the aggregate function is $\aggnoduplicates\defeq 1-\aggduplicates$; hence, we compute, for each $k$, the number of subsets $E$ such that $\tau\circ Q(D\exo\cup E)$ has no duplicates (including the case where
$Q(D\exo\cup E)$ is empty). Then, we have 
$$\sumword_k(A,D)=\binom{|D\endo|}{k}-\sumword_k(A',D)\,.$$

\begin{figure}
\hrule
\begin{minipage}{\linewidth}
\renewcommand{\arraystretch}{1.1}
\begin{tabular}{ll}
\multicolumn{2}{l}{
\em Computing $\sumword_k(A',D)$ where $A'=\aggnoduplicates\circ\tau\circ Q$ and $Q$ is a connected \sqhierarchical CQ}\\
\textbf{Input:} & Database $D$ \\
\textbf{Goal:} & $\sumword_k(A,D)$ for all $k=0,\dots,|D\endo|$ 
\end{tabular}
\end{minipage}
\hrule
\smallskip
\begin{algorithmic}
\State $\mathrm{Answer}_0(0)\gets
1$ \mycomment{$\mathrm{Answer}_0(k)$ refers to the empty database}
\For{$k=1$ to $|D\endo|$}
\State $\mathrm{Answer}_0(k)\gets
0$
\EndFor
\State $\set{a_1,\dots,a_m}\gets \set{\tau(\vec t)\mid \vec t\in Q(D)}$
\For{$i=1$ to $m$}
\State $D_i\gets\set{f\in D\mid \mbox{$f$ is an $a_i$-fact}}$
\For{$k=1$ to $|D\endo|$}
\State $\mathrm{Answer}_i(k)\gets
\sum_{\ell=0}^k 
\mathrm{Answer}_{i-1}(\ell)\cdot
\left(\P^0[Q,D_i](k-\ell)+\P^1[Q,D_i](k-\ell)\right)
$
\EndFor
\EndFor
\State \textbf{return} 
$\mathrm{Answer}_m$
\end{algorithmic}
\hrule
\caption{Algorithm for computing $\sumword_k(A',D)$ for a connected \sqhierarchical CQ, a localized $\tau$, and the aggregate function no-duplicates.\label{fig:dup-alg}}
\Description{Algorithm for computing $\sumword_k(A',D)$ for a connected \sqhierarchical CQ, a localized value function, and the aggregate function no-duplicates.}
\end{figure}

The algorithm uses dynamic programming, as follows. Let $\set{a_1,\dots,a_m}$ be the set of all possible values of $\tau$ over subsets of $D$. Note that this set is simply $\set{\tau(\vec t)\mid \vec t\in Q(D)}$, since $Q$ is a CQ (hence, monotonic). 
For $i=1,\dots,m$, let $D_i$ be the subset of $D$ that consists of all $a_i$-facts. The algorithm computes 
$\sumword_k(A',D_1\cup\dots\cup D_i)$ for $i=0,\dots,m$ in dynamic programming, by increasing $i$. For $i=0$ (i.e., the empty database), the answer is $1$ if $k=0$, and $0$ otherwise. For $i>0$, a set of endogenous facts from $D_1\cup\dots\cup D_i$ contributes $1$ if and only if it is the union of a set of endogenous facts from $D_1\cup\dots\cup D_{i-1}$ with no duplicates (hence, we use the previously computed value) and a subset of $D_i$ where there is at most one answer. For the latter value, we use $\P^0$ and $\P^1$ that we discussed in \Cref{sec:P0-and-P1}.
Hence, the algorithm computes 
$\sumword_k(A',D_1\cup\dots\cup D_i)$ as
$$
\sum_{\ell=0}^k 
\sumword_\ell(A',D_1\cup\dots\cup D_{i-1})\cdot
\left(\P^0[Q,D_i](k-\ell)+\P^1[Q,D_i](k-\ell)\right)
$$
This concludes the computation in the case where $Q$ is connected.

\subsubsection{Computing $\sumword_k(A,D)$ when $Q$ is disconnected.} Finally, we consider the case where $Q$ is the Cartesian product of two \sqhierarchical CQs $Q_1$ and $Q_2$. Without loss of generality, assume that $\tau$ is localized on an atom of $Q_1$, and that $Q_1$ is connected (otherwise, we transfer its disconnected parts to $Q_2$). 

Let $D_1$ and $D_2$ be the subsets of $D$ that consist of facts of the relations of $Q_1$ and $Q_2$, respectively. Let $A_1$ be the query $A$ where $Q$ is replaced with $Q_1$.

A subset $E$ of $D\endo$ 
has the form $E=E_1\cupdot E_2$ where $E_1\subseteq D_1$ and $E_2\subseteq D_2$. Then $E$ is such that $\tau\circ Q(D\cup E)$ has duplicates if and only if one of the following mutually-exclusive options holds:
\begin{enumerate}
    \item $Q_1(D_1\exo\cup E_1)$ is nonempty and $Q_2(D_2\exo\cup E_2)$ has at least two answers (hence, every answer of $\tau\circ Q$ is duplicated);
    \item $Q_1(D\cup E_1)$ has duplicates and $Q_2(D\cup E_2)$ has precisely one answer.
\end{enumerate}
Consequently, we get the following:
\begin{align*}
\sumword_k(A,D)= &
\sum_{\ell=0}^k
\left(
\binom{|D_1\endo|}{\ell}
-\P^0[Q_1,D_1](\ell)\right)
\cdot
\left(\binom{|D_2\endo|}{k-\ell}
-\P^0[Q_2,D_2](k-\ell)
-\P^1[Q_2,D_2](k-\ell)
\right) \\ &
+ 
\sumword_\ell(A_1,D_1)\cdot \P^1[Q_2,D_2](k-\ell)
\end{align*}

This concludes the description of the algorithm for computing $\sumword_k(A,D)$, hence proving the tractability of $\Shapley(f, A)$.

\section{Omitted Details from \Cref{sec:finer}}
We now give the missing proofs and details from \Cref{sec:finer}. We begin by collecting some helpful insights that will help us prove \Cref{thm:hardness-monotonic}. Throughout this section, we will view functions $\gamma \colon \reals \to \reals$ interchangeably as operating over a number (resulting in a number) or operating over a bag of numbers (pointwise, resulting in a bag of numbers).

\begin{observation}
Let $B=(X,\mu)$ be a bag of real numbers, $\alpha$ an aggregation function and $\gamma \colon \reals \to \reals$. 
    \begin{itemize}
        \item If $\alpha \in \set{\aggmin, \aggmax}$ and $\gamma$ is monotonically increasing, then 
        $(\alpha \circ \gamma)(B) = (\gamma \circ \alpha)(B)$
        \item If $\alpha = \aggquantile_q$ for some $q \in (0,1)$ and $\gamma$ is monotonically increasing define $\gamma'$ by $\gamma'(x) = \gamma(x)$ for all $x \in X$ with $\mu(x) > 0$ and by linear interpolation between these points (and $0$ elsewhere). Then
        $(\alpha \circ \gamma)(B) = (\gamma' \circ \alpha)(B)$.
    \end{itemize}
\end{observation}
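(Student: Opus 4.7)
\smallskip
\noindent\textbf{Proof plan.}
My plan is to treat the two bullets separately, in each case unpacking the definition of $\alpha$ in terms of order statistics of $B$ and exploiting the fact that a monotonically increasing function preserves the ordering of the multiset.

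For the first bullet, the plan is to observe that if $B=(X,\mu)$ with underlying set $\set{x_1,\dots,x_n}$ (listed in nondecreasing order with repetitions according to $\mu$) then $\aggmin(B)=x_1$ and $\aggmax(B)=x_n$. Since $\gamma$ is monotonically increasing (i.e.\ nondecreasing), the sequence $\gamma(x_1)\le\dots\le\gamma(x_n)$ is also sorted, so $\aggmin(\gamma(B))=\gamma(x_1)=\gamma(\aggmin(B))$ and likewise for $\aggmax$. Note that this requires no extra hypothesis and no modification of $\gamma$, matching the statement.

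For the second bullet, I would invoke the definition given in the preliminaries, $\aggquantile_q(B)=\tfrac12(x_{\lceil q\size{B}\rceil}+x_{\lfloor q\size{B}+1\rfloor})$, where $x_i$ denotes the $i$-th largest value of $B$ (counted with multiplicity). Write $n=\size{B}$ and set $r=\lceil qn\rceil$, $s=\lfloor qn+1\rfloor$. A short case analysis on whether $qn$ is an integer shows that either $r=s$ (when $qn\notin\mathbb Z$, so both indices coincide and the quantile equals $x_r$), or $r$ and $s$ are consecutive indices differing by one (when $qn\in\mathbb Z$). Since $\gamma$ is nondecreasing, the order statistics of the bag $\gamma(B)$ are exactly $\gamma(x_1),\dots,\gamma(x_n)$, so $\aggquantile_q(\gamma(B))=\tfrac12(\gamma(x_r)+\gamma(x_s))$.

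The remaining task is to match this with $\gamma'(\aggquantile_q(B))=\gamma'\!\left(\tfrac{x_r+x_s}{2}\right)$. This is where the specific definition of $\gamma'$ does the work. If $x_r=x_s$ (which always holds when $r=s$, but can also happen when $r\ne s$ if the value $x_r$ has multiplicity at least two), then the midpoint coincides with a support point and $\gamma'(\tfrac{x_r+x_s}{2})=\gamma'(x_r)=\gamma(x_r)=\tfrac12(\gamma(x_r)+\gamma(x_s))$. Otherwise $x_s<x_r$ are two consecutive distinct support points of $B$ (the support is the set $\set{x\in X\mid \mu(x)>0\,}$, and by definition of order statistics no support point lies strictly between $x_s$ and $x_r$); hence $\gamma'$ is linear on the closed interval $[x_s,x_r]$ with $\gamma'(x_s)=\gamma(x_s)$ and $\gamma'(x_r)=\gamma(x_r)$, so the midpoint value is $\tfrac12(\gamma(x_s)+\gamma(x_r))$, as required.

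The only subtle step is verifying that the two indices $r,s$ picked out by the quantile formula always correspond either to the same order statistic or to two order statistics with no support point of $B$ strictly between them; I expect this to follow immediately from the definition of ``$i$-th largest value'', but it is worth stating carefully to justify invoking linearity of $\gamma'$ on the correct interval.
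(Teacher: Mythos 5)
Your proof is correct, and it fills in exactly the argument the paper leaves implicit: the statement is given as an observation without proof, and the intended justification is precisely your order-statistics argument (monotonicity preserves the sorted sequence, and for the quantile the two indices are either equal or consecutive, so the midpoint falls on a support point or strictly between two consecutive support points where $\gamma'$ is linear). Your flagged subtle step — that no support point lies strictly between $x_s$ and $x_r$ — is handled correctly, so nothing is missing.
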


\begin{lemma}\label{lem:additive-aggr}
    Let $B=(X,\mu)$ be a bag of real numbers, $\alpha$ an aggregation function and $\gamma_1,\gamma_2 \colon \reals \to \reals$ be monotonically increasing. Then, for $\alpha \in \set{\aggmax, \aggmin, \aggavg, \aggquantile_q}$, we have $(\alpha \circ (\gamma_1 + \gamma_2)) (B) = (\alpha \circ \gamma_1) (B) + (\alpha \circ \gamma_2) (B) 
    $.
\end{lemma}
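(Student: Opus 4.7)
The plan is to handle each aggregation function in turn, grouping the cases by the essential argument. For $\aggavg$, the statement is pure linearity of finite sums: with $n = \size{B}$,
\[
(\aggavg \circ (\gamma_1+\gamma_2))(B) \;=\; \frac{1}{n}\sum_{x \in X}\mu(x)\bigl(\gamma_1(x)+\gamma_2(x)\bigr) \;=\; (\aggavg \circ \gamma_1)(B) + (\aggavg \circ \gamma_2)(B).
\]
So the only content lies in the three order-based aggregations.

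For these, I would use a single structural observation: if $\gamma \colon \reals \to \reals$ is monotonically increasing and $x_1 \le x_2 \le \dots \le x_n$ is the sorted sequence of elements of $B$ counted with multiplicity, then $\gamma(x_1) \le \dots \le \gamma(x_n)$ is a sorted sequence of $\gamma(B)$ with the same multiplicities. In particular, the $i$-th smallest element of $\gamma(B)$ equals $\gamma(x_i)$. Since the sum $\gamma_1+\gamma_2$ of two monotonically increasing functions is itself monotonically increasing, this observation applies simultaneously to $\gamma_1$, $\gamma_2$, and $\gamma_1+\gamma_2$, all relative to the same sorted indexing $x_1,\dots,x_n$ of $B$.

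Now I can finish each remaining case by inspecting the definition. For $\alpha = \aggmax$, the aggregation extracts $x_n$, so
\[
(\aggmax \circ (\gamma_1+\gamma_2))(B) = \gamma_1(x_n)+\gamma_2(x_n) = (\aggmax \circ \gamma_1)(B) + (\aggmax \circ \gamma_2)(B),
\]
and the case $\aggmin$ is identical with $x_1$ in place of $x_n$. For $\alpha = \aggquantile_q$, the two rank indices $i_1 \defeq \lceil q\size{B} \rceil$ and $i_2 \defeq \lfloor q\size{B}+1 \rfloor$ depend only on $\size{B}$, and are the same whether we sort $B$, $\gamma_1(B)$, $\gamma_2(B)$, or $(\gamma_1+\gamma_2)(B)$. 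Applying the observation gives
\[
(\aggquantile_q \circ (\gamma_1+\gamma_2))(B) = \tfrac{1}{2}\bigl((\gamma_1+\gamma_2)(x_{i_1}) + (\gamma_1+\gamma_2)(x_{i_2})\bigr),
\]
and splitting the sum yields $(\aggquantile_q \circ \gamma_1)(B) + (\aggquantile_q \circ \gamma_2)(B)$.

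There is no real obstacle here; the only subtlety worth being explicit about is that the $q$-quantile indices are defined purely in terms of $\size{B}$, so the monotonicity of $\gamma_1+\gamma_2$ guarantees that both $\aggquantile_q \circ \gamma_1$ and $\aggquantile_q \circ \gamma_2$ pick out the values at exactly those same two ranks of $B$, which is what makes the additive split work.
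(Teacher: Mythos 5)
Your proof is correct and follows essentially the same route as the paper: average by linearity of the finite sum, and the order-based aggregations via the fact that a monotonically increasing map preserves the sorted order of the bag, so $\aggmin$, $\aggmax$, and the two quantile ranks (which depend only on $\size{B}$) are extracted at the same positions for $\gamma_1$, $\gamma_2$, and $\gamma_1+\gamma_2$. The only cosmetic difference is that for $\aggquantile_q$ you argue directly on the two order statistics, whereas the paper phrases the same fact as a commutation $(\aggquantile_q\circ\gamma)(B)=(\gamma'\circ\aggquantile_q)(B)$ with a piecewise-linearly interpolated $\gamma'$; your version is, if anything, slightly more transparent.
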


\begin{proof}
    For $\alpha = \aggavg$, this follows directly from the definition. For the other aggregation functions $\alpha$, it follows from the previous observation.
\end{proof}

\begin{observation}\label{obs:injective}
    Let $Q$ be a CQ without self-joins, $1 \leq i \leq \arity(Q)$ and let $x_i$ is denote the $i$th variable in the head of $Q$. Furthermore, let $\gamma \colon \reals \to \reals$ be injective. 
    Consider the following transformation $\pi$ of facts in the schema of $Q$:
    For each fact $f = R(\vec a)$ let $\pi(f)$ be the fact $\pi(f) = R(\vec a')$ defined by
    \[
    \vec a'[j] \defeq \begin{cases}
        \gamma(\vec a[j]), &\text{ if } x_i \text{ appears at position } j \text{ in the atom } R(\vec x) \text{ of } Q,\\
        \vec a[j], &\text{ otherwise,}
    \end{cases}
    \]
    Then, for each database $D$, we have $(\gamma \circ \tauidi{i} \circ Q)(D) = (\tauidi{i} \circ Q)(\pi(D))$.
\end{observation}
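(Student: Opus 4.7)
The plan is to exhibit an explicit bijection $\Phi \colon \hom(Q,D) \to \hom(Q,\pi(D))$ and check that it descends to answer tuples in a way compatible with $\tauidi{i}$. For $h \in \hom(Q,D)$, define $\Phi(h)$ on $\vars(Q)$ by $\Phi(h)(x_i) \defeq \gamma(h(x_i))$ and $\Phi(h)(y) \defeq h(y)$ for every variable $y \neq x_i$. To verify that $\Phi(h)$ is a homomorphism from $Q$ to $\pi(D)$, take any atom $R(\vec z)$ of $Q$: the fact $R(h(\vec z))$ lies in $D$, so $\pi(R(h(\vec z)))$ lies in $\pi(D)$, and by the definition of $\pi$ the $j$th coordinate of $\pi(R(h(\vec z)))$ equals $\gamma(h(z_j))$ when $z_j = x_i$ and $h(z_j)$ otherwise; this matches $\Phi(h)(z_j)$ coordinate by coordinate. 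The no-self-joins assumption is what makes $\pi$ well-defined as a function on facts, since each relation symbol has at most one atom in $Q$ and so the positions of $x_i$ transformed by $\pi$ are unambiguous.

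The injectivity of $\gamma$ yields an inverse $\Psi \colon \hom(Q,\pi(D)) \to \hom(Q,D)$ that applies $\gamma^{-1}$ at $x_i$ and is the identity elsewhere. This is well-defined because safety of $Q$ forces $x_i$ to appear in some atom, so $h'(x_i)$ must be the value at a position of a fact of $\pi(D)$ transformed by $\pi$, hence lies in the image of $\gamma$. The bijection $\Phi$ then descends to a bijection between the answer sets $Q(D)$ and $Q(\pi(D))$: the image of $h(\vec x)$ is obtained by applying $\gamma$ to every coordinate of $\vec x$ equal to $x_i$, a transformation that depends only on the answer tuple itself and treats repeated occurrences of $x_i$ in $\vec x$ consistently. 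For corresponding answers $\vec t \in Q(D)$ and $\vec t\,' \in Q(\pi(D))$, we have $\tauidi{i}(\vec t\,') = \vec t\,'[i] = \gamma(\vec t[i]) = \gamma(\tauidi{i}(\vec t))$, so the bags $\multiset{\gamma(\tauidi{i}(\vec t)) \mid \vec t \in Q(D)}$ and $\multiset{\tauidi{i}(\vec t\,') \mid \vec t\,' \in Q(\pi(D))}$ coincide, which is exactly the asserted identity.

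The main bookkeeping obstacle is tracking the corner cases of $\pi$: relations of the schema not mentioned in $Q$ are fixed by $\pi$ (no positions to transform) but also contribute no facts to any homomorphism, so they play no role on either side; multiple occurrences of $x_i$ within a single atom or within the head $\vec x$ must all be rewritten consistently, which is automatic since $\pi$ and $\Phi$ both act coordinate-wise by the same rule based on whether the variable at that position equals $x_i$; and $\pi$ itself must be injective on $D$ (which it is, again by injectivity of $\gamma$ applied coordinate-wise within each relation), so that $|\pi(D)| = |D|$ and the bag equality at the end is genuine. Once these points are handled, the argument reduces to a direct chase through the homomorphism bijection.
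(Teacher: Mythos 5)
Your proof is correct. The paper states this as an observation and provides no explicit proof (it is treated as immediate), and your argument---the coordinatewise bijection $h \mapsto \Phi(h)$ on $\hom(Q,D)$, which is well defined per relation because $Q$ has no self-joins, is invertible because $\gamma$ is injective, and descends to a multiplicity-preserving bijection on answer tuples with $\tauidi{i}(g(\vec t)) = \gamma(\tauidi{i}(\vec t))$---is precisely the intended justification, with the relevant corner cases (constants, relations outside $Q$, repeated occurrences of $x_i$) handled correctly.
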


Now, we are ready to prove \Cref{thm:hardness-monotonic}.

\thmhardnessmonotonic*

\begin{proof}
    Let $A_1 = \alpha \circ \gammamon \circ \tauidi{i} \circ Q$ and $A_2 = \alpha \circ \tauidi{i} \circ Q$. We aim to show that we can compute $\Shapley(f,A_1)(D)$ by computing $\Shapley(f,A_2)(D)$ and $\Shapley(f',A_2)(D')$ for some $f'$ and $D'$ constructible in polynomial time. To that end, 
    consider a third query $A_3 = \alpha \circ (\gammamon + \tauid) \circ \tauidi{i} \circ Q$. For every $D$, we can apply \Cref{lem:additive-aggr} to the bag $B = (\tauidi{i} \circ Q)(D)$ to obtain
    \begin{align*}
    A_3(D) &=  (\alpha \circ (\gammamon + \tauid)) ((\tauidi{i} \circ Q)(D)) \\
    &= (\alpha \circ \gammamon) ((\tauidi{i} \circ Q)(D)) + (\alpha \circ \tauid) ((\tauidi{i} \circ Q)(D)) \\
    &= A_1(D) + A_2(D)    
    \end{align*}
    By the linearity of the Shapley value, this implies
    \[
    \Shapley(f,A_1)(D) = \Shapley(f,A_3)(D) - \Shapley(f,A_2)(D).
    \]
    Finally, we apply \Cref{obs:injective} using the function $\gamma = \gammamon + \tauid$. This yields 
    \[
    \Shapley(f, A_1)(D) =\Shapley(\pi(f),A_2)(\pi(D)) - \Shapley(f,A_2)(D),
    \]
    and since $\pi$ can be computed in polynomial time, this shows the claim.
\end{proof}

Finally, we show \Cref{prop:localizations-in-ptime}.

\proplocalizationsinptime*

\begin{proof}
For the first query $A_1 = \aggavg \circ \taurelui{2} \circ \Qxyyz$, let us have a closer look at $\Qxyyz(x,z)$. This query is disconnected, so its set of answers is the Cartesian product of the answers of its two connected components $Q_1(z) \la T(z)$ and $Q_2(x) \la R(x,y)S(y)$.  The bag $(\taurelui{2} \circ \Qxyyz)(D)$ is therefore obtained from the bag $(\taurelu \circ Q_1)(D)$ by multiplying each multiplicity by $(\aggcount \circ Q_2)(D)$. Since multiplying each multiplicity of a bag by a number $>0$ does not change its average, we can conclude
\[
A_1(D) = (\aggavg \circ \taurelu \circ Q_1)(D) \cdot Q_2(D),
\]
where we interpret the Boolean query $Q_2$ to be numerical by interpreting $\mathsf{True}$ as $1$ and $\mathsf{False}$ as $0$. 
Since only the facts of the form $T(\vec a)$ are relevant for $Q_1$ and only the other facts are relevant for $Q_2$, we can split $D$ into databases $D_{\set{T}}$ and $D_{\set{R,S}}$ of facts relevant for $Q_1$ and $Q_2$, respectively. That way, we obtain
\[\sumword_k(A_1, D) = \sum_{\ell=0}^k \sumword_\ell(\aggavg \circ \taurelu \circ Q_1,D_{\set{T}}) \cdot \sumword_{k - \ell}(Q_2,D_{\set{R,S}}).\]
Since $Q_1$ is \qhierarchical and $Q_2$ is \exhierarchical, we can compute $\sumword_\ell(\aggavg \circ \taurelu \circ Q_1,D_{\set{T}})$ and $\sumword_{k - \ell}(Q_2,D_{\set{R,S}})$ in polynomial time, which completes the proof for $A_1$.

For the second query $A_2 = \aggmed \circ \taugbi{0}{2} \circ \Qxyyz$, we can argue analogously, since the median of a bag is preserved when we multiply each multiplicity by a number $>0$.\footnote{Note that this statement is not true for other quantiles.}

Finally, let $D$ be an input database to $A_3 = \aggduplicates \circ \tauidi{2} \circ \QFxyy$. If we simply wanted to answer $A_3$, we could observe that $y$ appears in every atom of $\QFxyy$, so we could decompose $D$ into subdatabases for the possible values of $y$, answer the query on each of them separately and then take the maximum of the obtained answers.
We now show how to use this approach to compute $\sumword_k(A_3,D)$ for each $k$ in polynomial time. Let $b_1, \ldots, b_t$ be the different values in $\tauidi{2} \circ \QFxyy$ with positive multiplicity and let $D_i$ be the subdatabase of $D$ consisting of $S(b_i)$ and all the facts of the form $R(a,b_i)$. Furthermore, let $D_{t+1}$ consist of all the remaining facts, that is, facts that are not part of the image of any homomorphism from $\QFxyy$ to $D$.

Now, for $1 \leq i \leq t$, a $k$-subset $E \subseteq D_i\endo$ has $A_3(E \cup D_i\exo)$ if and only if $E \cup D_i\exo$ contains $S(b_i)$ and at least two facts of the form $R(a,b_i)$. To find the number of such $k$-subsets, let $\delta_i = 1$ if $S(b_i)$ is endogenous and $0$ otherwise, and let $\varepsilon_i$ be the number of exogenous facts of the form $R(a, b_i)$. Then, we need to pick $S(b_i)$ which determines $\delta_i$ of the $k$ facts and since the remaining $k - \delta_i$ facts are necessarily of the form $R(a,b_i)$, we can pick any of them as long as $k - \delta_i + \varepsilon_i \geq 2$. This shows that
\[
\sumword_k(A_3, D_i) = \begin{cases}
    \binom{\size{D_i\endo} - \delta_i}{k - \delta_i}, &\text{ if } k + \varepsilon_i - \delta_i \geq 2, \\
    0, &\text{ otherwise,}
\end{cases}
\]
for $1 \leq i \leq t$ and, of course, $\sumword_k(A_3, D_{t+1}) = 0$.
To compute $\sumword_k(A_3, D)$, it is more convenient to argue about $\sumword_k(1 - A_3, D) = \binom{\size{D\endo}}{k} - \sumword_k(A_3, D)$. To combine the results for $D_i$ step by step, let $D_i ' = D_1 \cup \ldots \cup D_i$ for $1 \leq i \leq t+1$. Since each of the $D_i$ belongs to a different value in the output (or to none at all in the case of $D_{t+1}$), we have $(1- A_3)(D_{i+1}') = (1- A_3)(D_{i}') \cdot (1- A_3)(D_{i+1})$, which implies
\[
\sumword_k(1 - A_3, D_{i+1}') = \sum_{\ell = 1}^k \sumword_\ell(1 - A_3, D_i') \cdot \sumword_{k-\ell}(1 - A_3, D_{i+1}) \,.
\]
Using this equation to iteratively compute $\big(\sumword_k(1 - A_3, D_{i}')\big)_{0 \leq k \leq \size{D\endo}}$ for $i = 1, \ldots, t+1$ yields a polynomial time algorithm to compute $\sumword_k(A_3,D)$ and hence $\Shapley(f,A_3)(D)$. 
\end{proof}

\end{document}